\documentclass[a4paper,12pt]{article}

\usepackage[utf8]{inputenc}

\pdfoutput=1

\usepackage{fix-cm}

\usepackage[utf8]{inputenc}
\usepackage[margin=0.9in]{geometry}
\usepackage{amsthm, amsmath, amssymb, enumitem}
\usepackage{graphicx}
\graphicspath{ {images/} }

\usepackage[colorlinks=true,urlcolor=blue,citecolor=blue]{hyperref}
\usepackage{listings}
\usepackage{algorithm}
\usepackage[noend]{algpseudocode}
\usepackage[svgnames]{xcolor}
\usepackage{authblk}
\usepackage[perpage]{footmisc}
\usepackage{setspace}
\usepackage{multirow}
\usepackage{caption}

\newcommand{\CRANpkg}[1]{\textsf{#1}}

\newtheorem{proposition}{Proposition}
\newtheorem{remark}{Remark}
\newtheorem{theorem}{Theorem}
\newtheorem{assumption}{Assumption}
\newtheorem{corollary}{Corollary}

\usepackage{bm}          % bold math symbols
\usepackage{geometry}
\usepackage[toc,page]{appendix}
\usepackage{doi}
\usepackage{titlesec}
\usepackage{xcolor}
\usepackage{colortbl}
\usepackage{array}
\usepackage{multirow}
\providecommand{\keywords}[1]
{
  \small	
  \textbf{\textit{Keywords:}} #1
}

\usepackage[round]{natbib}
\usepackage{graphicx}
\usepackage{setspace}
\usepackage{colortbl}
\usepackage{authblk}

\usepackage{booktabs}
\usepackage{tabularx}

\def\blind{0}   % or \newcommand{\blind}{0}

\algrenewcommand\algorithmicrequire{\textbf{Input:}}
\algrenewcommand\algorithmicensure{\textbf{Output:}}

\usepackage[utf8]{inputenc}
\usepackage[margin=0.9in]{geometry}
\usepackage{amsthm, amsmath, amssymb, enumitem}
\usepackage{graphicx}
\graphicspath{ {images/} }

\usepackage[colorlinks=true,urlcolor=blue,citecolor=blue]{hyperref}
\usepackage{listings}
\usepackage{algorithm}
\usepackage[noend]{algpseudocode}
\usepackage[svgnames]{xcolor}
\usepackage{authblk}
\usepackage[perpage]{footmisc}
\usepackage{setspace}
\usepackage{multirow}
\usepackage{caption}

\usepackage{bm}          % bold math symbols
\usepackage{geometry}
\usepackage[toc,page]{appendix}
\usepackage{doi}
\usepackage{titlesec}
\usepackage{xcolor}
\usepackage{colortbl}
\usepackage{array}
\usepackage{multirow}
\newcolumntype{L}[1]{>{\raggedright\let\newline\\\arraybackslash\hspace{0pt}}m{#1}}

\providecommand{\keywords}[1]
{
  \small	
  \textbf{\textit{Keywords:}} #1
}

\usepackage[round]{natbib}
\usepackage{graphicx}
\usepackage{setspace}
\usepackage{colortbl}
\usepackage{authblk}
\usepackage{booktabs}
\algrenewcommand\algorithmicrequire{\textbf{Input:}}
\algrenewcommand\algorithmicensure{\textbf{Output:}}

\usepackage{float}
\usepackage{placeins}
\usepackage{algorithm}
\usepackage[noend]{algpseudocode}

\def\vectorfontone{\bf}
\def\vectorfonttwo{\boldsymbol}
\def\va{{\vectorfontone a}}                      %
\def\vg{{\vectorfontone g}}                      %
\def\vh{{\vectorfontone h}}                      %
\def\vr{{\vectorfontone r}}                      %
\def\vu{{\vectorfontone u}}                      % Penalized coefficients
\def\vv{{\vectorfontone v}}                      %
\def\vx{{\vectorfontone x}}                      % Covariates/Predictors
\def\vy{{\vectorfontone y}}                      % Targets/Labels
\def\vz{{\vectorfontone z}}                      %

\def\vone{{\vectorfontone 1}}
\def\vzero{{\vectorfontone 0}}

\def\vbeta{{\vectorfonttwo \beta}}               % Unpenalized coefficients
\def\vgamma{{\vectorfonttwo \gamma}}             %
\def\vdelta{{\vectorfonttwo \delta}}             %
\def\vepsilon{{\vectorfonttwo \epsilon}}         %
\def\vzeta{{\vectorfonttwo \zeta}}               %
\def\vtheta{{\vectorfonttwo \theta}}             % Vector of combined coefficients
\def\vmu{{\vectorfonttwo \mu}}                   % Vector of means
\def\matrixfontone{\bf}
\def\matrixfonttwo{\boldsymbol}
\def\mI{{\matrixfontone I}}                      % Identity Matrix
\def\mX{{\matrixfontone X}}                      % Unpenalized Design Matrix/Nullspace Matrix
\def\mZ{{\matrixfontone Z}}                      % Penalized Design Matrix/Kernel Space Matrix

\def\mTheta{{\matrixfonttwo \Theta}}             %
\def\mSigma{{\matrixfonttwo \Sigma}}             %
\makeatletter
\def\BState{\State\hskip-\ALG@thistlm}
\makeatother

\title{A Sparse-Group Pliable Lasso}

\author[1]{Mohammad Javad Davoudabadi}
\author[1]{Minh Long Nguyen}
\author[2]{Amirhossein Ghatari}
\author[3]{Mina Aminghafari\thanks{mina.aminghafari@ucalgary.ca}}
\author[1]{Kerrie Mengersen}
\affil[1]{Qeensland Univeristy of Technology, Australia;}
\affil[2]{Amirkabir University of Technology, Iran;}
\affil[3]{University of Calgary, Canada.}

\date{}

\begin{document}

\maketitle
\doublespacing
\begin{abstract}
The sparse-group pliable Lasso (SGPL) extends the pliable Lasso and group pliable Lasso by combining sparse-group regularization with a predictor-level coupling penalty, enabling simultaneous group-level selection, within-group sparsity, and hierarchical structure between main effects and interactions. We propose a blockwise coordinate descent algorithm for fitting the SGPL that exploits the convexity and structure of the objective function, establish convexity and Karush--Kuhn--Tucker optimality conditions, and prove that the algorithm converges to a global minimizer. Simulation studies demonstrate competitive predictive performance and smaller interaction estimation error than the pliable Lasso and group pliable Lasso, albeit with the expected precision--recall trade-off in support recovery. We further illustrate the proposed method using a Parkinson's disease gut microbiome study and an adrenocortical carcinoma (ACC) copy-number dataset from The Cancer Genome Atlas. The Parkinson's application identifies interpretable interactions between microbial abundances and dietary variables, while the ACC application illustrates that the effectiveness of group-structured regularization depends on how well the prespecified grouping reflects the underlying signal structure.

\end{abstract}

\keywords{Pliable Lasso regression; varying-coefficient; grouping structure; high-dimensional problems.
}

\section{Introduction}
Modern datasets are often characterized by substantial heterogeneity, where the relationship between predictors and the response varies across individuals, groups, or contexts. In many scientific applications, including maternal and child health \citep{stevenson2021towards} and microbiome studies of Parkinson's disease \citep{xu2023nemoe}, the effect of a predictor may depend on additional subject-specific characteristics such as demographic, environmental, dietary, or genetic factors. This phenomenon, often referred to as effect modification, motivates models that allow covariate effects to vary across different contexts. Varying-coefficient models provide a flexible framework for capturing such heterogeneity by allowing regression coefficients to depend on modifying variables, while interaction models achieve a similar goal through explicit interaction terms \citep{hastie1993varying,fan1999statistical,fan2008statistical}. However, in high-dimensional settings, incorporating large numbers of interactions introduces significant challenges, including overfitting, reduced interpretability, and difficulties in variable selection \citep{bien2013lasso,shah2016modelling,doebler2023interactions}. Classical approaches such as Lasso, group Lasso, and sparse-group Lasso address high dimensionality by inducing sparsity, but they assume homogeneous effects across observations and therefore cannot capture such heterogeneity \citep{tibshirani1996regression,yuan2006model, simon2013sparse}. To overcome this limitation, the pliable Lasso (PL) proposed in \cite{tibshirani2020pliable} extends the Lasso framework by incorporating modifying variables, enabling predictor effects to vary across the feature space while maintaining structured sparsity.

The pliable Lasso captures such heterogeneity through an interaction-based formulation in which the effect of each predictor varies with modifying variables, yielding a varying-coefficient model that adapts to different contexts \citep{tibshirani2020pliable}. Estimation is performed via a structured regularization that jointly selects main and interaction effects while encouraging a hierarchical constraint, which is also called  hierarchy pliable condition, ensuring that interaction terms are included only when their corresponding main effects are present.  Hierarchical interaction modeling has also been investigated in several related frameworks, including the hierNet approach of \citet{bien2013lasso}, which enforces heredity constraints through convex regularization, and the glinternet method of \citet{lim2015learning}, which employs hierarchical group Lasso regularization to identify interaction effects while preserving strong hierarchy. Bayesian approaches have also been developed for structured sparse regression through hierarchical spike-and-slab formulations that combine variable selection with uncertainty quantification while maintaining group structure \citep{liquet2017bayesian}. More recently, Bayesian varying-coefficient models have incorporated sparsity-inducing priors to capture structured interaction effects and provide uncertainty quantification \citep{mai2026bayesian}. While the PL framework provides flexibility and interpretability, it does not explicitly account for potential group structure among predictors and/or modifying variables. In settings where variables exhibit natural groupings and strong within-group correlations, this limitation may lead to unstable selection and reduced interpretability, motivating extensions that incorporate group-wise structure into the pliable Lasso framework.

\cite{kim2021svreg} develop a structural varying-coefficient regression (svReg) to address the aforementioned shortcoming of the pliable Lasso. The svReg, hereafter the group pliable Lasso (GPL), incorporates structured grouping among the predictors and modifying variables and encourages the pliable hierarchy condition at the group level. However, the GPL does not encourage the hierarchy pliable condition within groups. For example, in the active group $l$, there might be a modifying variable with a non-zero coefficient, but its corresponding predictor has a zero coefficient. Furthermore, the GPL does not enforce sparsity within groups of the main predictors, as it does not include an element-wise $\ell_1$ penalty on the main effects; individual predictors within an active group cannot be zeroed out.  In many applications, it is desirable to enforce sparsity simultaneously at the group level and within active groups.

In this article, we propose the sparse-group pliable Lasso (SGPL), a regularized varying-coefficient model that integrates group-wise and within-group sparsity within the PL framework. The proposed approach introduces a two-level hierarchical structure that operates at both the group and predictor levels. In particular, we incorporate a per-predictor coupling mechanism that encourages predictor-level hierarchy, ensuring that interaction effects are included only when their corresponding main effects are present for each predictor individually. By combining group-based $\ell_2$ penalties with element-wise $\ell_1$ regularization, the SGPL allows for simultaneous selection of relevant groups and individual predictors within groups, leading to a more flexible representation of structured heterogeneity. We develop an efficient blockwise coordinate descent algorithm with nested proximal updates for solving the resulting optimization problem. Furthermore, we establish key theoretical properties of the estimator, including convexity, existence of solutions, and oracle-type prediction and estimation guarantees under suitable conditions.

We assess the performance of the proposed SGPL and compare it with the PL and GPL through a simulation study evaluating predictive accuracy, support recovery, and coefficient estimation accuracy. We additionally examine SGPL's sensitivity to the mixing parameter $\alpha$, which governs the balance between group-level selection and within-group sparsity. We further evaluate SGPL on two real-world applications. First, we compare the relative performance of PL, GPL, and SGPL by fitting each method to the ACC-CN500 copy-number alteration dataset, a benchmark obtained from The Cancer Genome Atlas (TCGA) Adrenocortical Carcinoma (ACC) cohort via the \CRANpkg{curatedTCGAData} package \citep{TCGAData}, illustrating how sensitivity to the prespecified grouping structure can differentially affect each method's performance. Second, we apply SGPL to a Parkinson's disease (PD) gut microbiome dataset analyzed in \citet{xu2023nemoe}, consisting of gut microbiome abundance measurements together with dietary variables, with the goal of studying how dietary patterns modify the relationship between the gut microbiome and Parkinson's disease status. The analysis in \citet{xu2023nemoe} was based on the Nutritional-Ecotype Mixture of Experts (NEMoE) framework, which models latent dietary subgroups and microbiome effects simultaneously. The SGPL results corroborate the biological conclusions of \citet{xu2023nemoe} while offering a complementary, single-stage penalized regression perspective that avoids the need to pre-specify the number of latent dietary subgroups.

The rest of the paper is organized as follows. Section \ref{sec:PL&GPL} discusses the pliable Lasso and group pliable Lasso models. In section \ref{sec:SGPL}, we develop the SGPL model and establish key theoretical properties of the estimator. The algorithm of the SGPL is presented in section \ref{sec:algorithm}. We extend the SGPL to other models in section \ref{sec:extensionModels}. We assess the performance of the proposed model in section \ref{sec:simulation} and further evaluate it on real-world applications in section \ref{sec:benchmark_dataset}. Finally, we conclude with a discussion in section \ref{sec:discussion}.

%========================================
% Section: Pliable and Group pliable Lasso
%========================================

\section{Pliable and group pliable Lasso}\label{sec:PL&GPL}

\paragraph{Notation.}
Let $\bm{\beta} \in \mathbb{R}^p$ denote the p-vector of main effects and
$\bm{\Theta} \in \mathbb{R}^{K \times p}$ the matrix of interaction coefficients,
with column $\vtheta_j \in \mathbb{R}^K$ corresponding to the predictor $j$. The response variable is indicated by $\vy \in \mathbb{R}^n$, and $\mX = \{\vx_j\}_{j=1}^p$ and $\mZ = \{\vz_k\}_{k=1}^K$ are $n\times p$ and $n\times K$ are design matrices, respectively, where $\{\vx_j\}_{j=1}^p$ are predictors and $ \{\vz_k\}_{k=1}^K$ are modifying variables, and note that $n>K$. Also, $\odot$ denotes the element-wise product. The main predictors are partitioned into $L$ groups with group $l$ having
$p_l$ members (index set $\mathcal{G}_l^x$), and the modifying variables
into $G$ groups with group $g$ having $p_g$ members (index set $\mathcal{G}_g^z$).
Define $\lambda_1 = \lambda(1-\alpha)$ and $\lambda_2 = \lambda\alpha$ with $\alpha \in [0,1]$. Let
$\vr=\vy-\widehat{\vy}$ denote the n-vector of residuals with respect to the fitted values $\widehat{\vy}$. The objective function is denoted by
$J(\vbeta,\mTheta)=f(\vbeta,\mTheta)+g(\vbeta,\mTheta)$, where
$f$ is the squared-error loss and $g$ represents the penalty terms. The
augmented design matrix denoted by $\widetilde{\mX}$ is the $n\times p(1+K)$ matrix whose columns are
$\vx_j$ (for the $\beta_j$ component) and $\vx_j\odot\vz_k$ (for the
$\theta_{jk}$ component).

The pliable Lasso (PL) of \citet{tibshirani2020pliable} is a linear regression model with varying coefficients represented \begin{align}\label{linear_model_PL}
     \vy = \beta_0 \vone_n + \mZ \vtheta_0 + \mX\vbeta + \sum_{j=1}^p (\vx_j \odot \mZ)\vtheta_j + \vepsilon,
\end{align}
where $\vepsilon \sim \mathcal{N}_n(0,\sigma^2\mathbf{I}_n)$. Here, we consider that the design matrices are fixed. Following \citet{tibshirani2020pliable}, the intercept parameters $\beta_0$ and $\vtheta_0 \in \mathbb{R}^K$ are left unpenalized and are first estimated by least squares, after which the penalized estimation is performed on the resulting residuals.

A weak hierarchical sparsity constraint is encouraged such that $\vtheta_j \neq \vzero$ only if $\beta_j \neq 0$, meaning that a predictor can only interact with the modifying variables if it has a non-zero main effect. However, the converse does not hold; a predictor may have a non-zero main effect with no interactions \citep{tibshirani2020pliable, bien2013lasso}. Minimizing the convex objective function \eqref{eq:object_Func} yields the pliable Lasso estimator
\begin{align}
J(\vbeta,\Theta) & = \frac{1}{2n} \sum_{i=1}^n (y_i - \widehat{y}_i)^2 
+ (1 - \alpha)\lambda \sum_{j=1}^p \big( \|(\beta_j,\vtheta_j)\|_2 + \|\vtheta_j\|_2 \big) 
+ \alpha \lambda \sum_{j=1}^p \|\vtheta_j\|_1 \nonumber\\
& = \frac{1}{2n} \|\vy - \widehat{\vy}\|_2^2 
+ \lambda \sum_{j=1}^p \Big[ (1-\alpha) \big( \|(\beta_j,\vtheta_j)\|_2 + \|\vtheta_j\|_2 \big) 
+ \alpha \|\vtheta_j\|_1 \Big] \label{eq:object_Func}
\end{align}
where $
\widehat{\vy} = \widehat{\beta}_0 \vone_n + \mZ \widehat{\vtheta}_0 + \sum_{j=1}^p \vx_{j}\odot (\widehat{\beta}_j + \mZ\widehat{\vtheta}_j) 
$.

Suppose the $p$ main predictors can be grouped into $L$ groups ($L \leq p$)
and the $K$ modifying variables can be grouped into $G$ groups ($G \leq K$).Each group can contain one or more predictors, and each predictor can belong to only one group. \cite{kim2021svreg} develop a group-pliable Lasso (GPL) by introducing the objective function 
\begin{align}\label{obj_fun_GPL}
\begin{split}
    J(\vbeta,\mTheta) = &\frac{1}{2n} \parallel \vy- \widehat{\vy} \parallel_2^2 +  \sum_{l=1}^L (1 - \alpha) \lambda \sqrt{p_l}\Big[ \parallel (\vbeta_l,\vtheta_{l\bullet})
 \parallel_2 \\&
 + \sum_{g=1}^G \frac{\sqrt{p_g}}{\sqrt{1+K}} \parallel \vtheta_{lg} \parallel_2 \Big] + \alpha \lambda  \sum_{j=1}^p\parallel \vtheta_j \parallel_1
\end{split}
\end{align}
for the case that the dataset has a grouping structure. In the objective function~\eqref{obj_fun_GPL}, $p_l$ denotes the size of the $l$-th group of the main predictors, and $p_g$ denotes the size of the $g$-th group of the modifying variables. Furthermore, $\vtheta_{lg}$ denotes the vector of interaction coefficients corresponding to the $l$-th group of main predictors and the $g$-th group of modifying variables. The vector $\vbeta_l=\{\beta_{l1},\ldots,\beta_{lp_l}\}$ denotes the subvector of $\vbeta$ corresponding to the $l$-th group of main predictors, and $\vtheta_{l\bullet}$ denotes the subvector of $\mTheta$ corresponding to the $l$-th group of main predictors across all modifying variables.

The first penalty term in the objective function~\eqref{obj_fun_GPL} applies a group penalty over the main predictors, indexed by $l$, encouraging predictors within the same group to be selected or excluded jointly. Within each main-predictor group, the nested penalty further imposes group-wise regularization over the modifying-variable groups, indexed by $g$. Finally, the element-wise $\ell_1$ penalty on $\mTheta$ promotes sparsity among individual interaction coefficients, allowing some modifying effects to be set to zero even when their corresponding groups are active. Consequently, the objective function defines a GPL over the main predictors and a sparse-group pliable Lasso over the modifying variables. The combination of $l_2$-norm (for grouped effects) and $l_1$-norm (for individual effects) for modifying variables extends the structure of the ordinary sparse-group Lasso \citep{simon2013sparse} to the varying-coefficient setting.

The PL and GPL objectives in~\eqref{eq:object_Func}
and~\eqref{obj_fun_GPL} are convex but non-differentiable because they contain
both element-wise $\ell_1$ penalties and group $\ell_2$ penalties, which are
non-smooth at the origin. Consequently, a minimizer cannot be characterized by
setting the gradient to zero. Instead, optimality is characterized through the
Karush--Kuhn--Tucker (KKT) conditions \citep{vandenberghe2004convex}, expressed in terms of the
subdifferentials of the non-differentiable penalty terms. Because both
objectives are convex, these conditions are necessary and sufficient for global
optimality. Moreover, they characterize the sparsity patterns induced by the
penalties and therefore play a central role in understanding the hierarchical
selection mechanism.

These ideas extend naturally to the proposed sparse-group pliable Lasso (SGPL)
developed in the next section. Since the SGPL objective is likewise convex but
non-differentiable, its estimator is characterized through KKT conditions based
on subdifferentials. As shown in Section~\ref{sec:SGPL}, these conditions yield the blockwise sparsity characterizations that underpin both the theoretical
properties of the estimator and the screening rules used in the optimization
algorithm.

The GPL encourages hierarchy at the group level via $\sum_{l=1}^{L}\|(\vbeta_l,\vtheta_{l\bullet})
\|_2$, but not at the individual predictor level within groups (unless $L = p$). Consequently, within an active group $l$, it is KKT-admissible to have some $j \in l$ with $\beta_j = 0$ yet $\vtheta_{l\bullet} \neq \mathbf{0}$, violating the asymptotic weak hierarchical constraint $(\vtheta_{l\bullet} \neq \mathbf{0} \Rightarrow \beta_j \neq 0)$. Moreover, the GPL does not enforce sparsity
\emph{within} groups of $\vbeta$, as it includes no
element-wise $\ell_1$ penalty on the main effects; individual predictors within an active group cannot be zeroed out.

%========================================
% Section: SGPL
%========================================

\section{Sparse-group pliable Lasso}\label{sec:SGPL}

We propose the
\emph{sparse-group pliable Lasso} (SGPL) to address both aforementioned limitations of the GPL simultaneously. The SGPL introduces two
modifications to the GPL objective.
First, we add a \emph{per-predictor} coupling term
$\sum_{j=1}^{p} \|(\beta_j, \vtheta_j)\|_2$
to promote  the within-group predictor-level hierarchy
$(\vtheta_{j\bullet} \neq \mathbf{0} \Rightarrow \beta_j \neq 0)$
for every predictor individually, a condition that the GPL only encourages at the group level.
Second, to promote sparsity \emph{within} active groups of the main
predictors, we add an element-wise $\ell_1$ penalty
$\|\vbeta\|_1$, which the GPL does not
include, while retaining the existing $\ell_1$ penalty
$\sum_{j=1}^{p}\|\vtheta_j\|_1$ on the interaction
coefficients.
The resulting objective function is:

\begin{equation}
\begin{aligned}
J(\vbeta, \mTheta)
&=
\frac{1}{2n} \|\vy - \widehat{\vy}\|_2^2 + \lambda(1-\alpha)
\left(
    \sum_{j=1}^{p} \|(\beta_j, \vtheta_j)\|_2
    + \sum_{l=1}^{L} \sqrt{p_l} \,
      \|(\vbeta_l, \vtheta_{l\bullet})\|_2
\right. \\
&\qquad \left.
    + \sum_{l=1}^{L} \sum_{g=1}^{G}
      \frac{\sqrt{p_g}}{\sqrt{1+K}}
      \|\vtheta_{lg}\|_2
\right)  + \lambda\alpha
\left(
    \|\vbeta\|_1
    + \sum_{j=1}^{p} \|\vtheta_j\|_1
\right),
\end{aligned}
\label{eq:SGPL}
\end{equation}

\noindent
where $\lambda > 0$ controls the overall regularization strength and
$\alpha \in [0,1]$ governs the trade-off between the group-structured $\ell_2$ penalties and the element-wise $\ell_1$ penalties. The objective
\eqref{eq:SGPL} promotes sparsity at two levels. At the group level, the joint term
$\|(\vbeta_l,\vtheta_{l\bullet})\|_2$
encourages an entire predictor group and its associated interaction
effects to enter or leave the model jointly. At the predictor level, the
per-predictor coupling term $\|(\beta_j,\vtheta_j)\|_2$
encourages the main effect and its corresponding modifying effects to
enter or leave the model jointly for each predictor. The additional
element-wise $\ell_1$ penalty on $\vbeta$ promotes sparsity
within active predictor groups, while the element-wise $\ell_1$ penalty
on $\mTheta$ promotes sparsity among individual interaction
coefficients. Thus, SGPL encourages, rather than strictly enforces,
predictor-level pliability of the form
$\vtheta_j \neq \mathbf{0}$ implies $\beta_j \neq 0$.
In practice, this soft hierarchy is typically satisfied under stronger
regularization (e.g., the $\hat{\lambda}_{1\text{se}}$ solution), as
characterized by the KKT conditions in Proposition \ref{prop:kkt} and illustrated
empirically in Section \ref{sec:simulation}.

It is important to verify that the optimization problem is well-posed, that is, whether a minimizer exists, whether the solution is stable, and whether the fitted values are uniquely defined. These properties are essential for both the interpretability of the estimator and the validity of the optimization algorithm developed later. Proposition \ref{prop:convexity} establishes convexity, existence of a solution, and uniqueness of the fitted values for the SGPL objective.

\begin{proposition}[Convexity, existence, and uniqueness of fitted values]
\label{prop:convexity}
The SGPL objective function $J(\vbeta, \mTheta)$ defined in~\eqref{eq:SGPL} satisfies
the following properties.
\begin{enumerate}[label=(\roman*)]
    \item \textbf{Convexity.} $J(\vbeta, \mTheta)$ is convex in $(\vbeta, \mTheta)$.
    \item \textbf{Existence.} A minimizer $(\widehat{\vbeta}, \widehat{\mTheta})$
          exists.
    \item \textbf{Uniqueness of fitted values.} The fitted value
          $\widehat{\vy}$
          is unique, even if $(\widehat{\vbeta}, \widehat{\mTheta})$ is not.
\end{enumerate}
\end{proposition}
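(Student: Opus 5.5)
The plan is to write $J = f + g$, where $f(\vbeta,\mTheta) = \frac{1}{2n}\|\vy - \widetilde{\mX}\vu\|_2^2$ and $\vu \in \mathbb{R}^{p(1+K)}$ stacks $(\beta_j,\vtheta_j)_{j=1}^p$. After the unpenalized intercepts are profiled out by least squares, $\vy$ stands for the corresponding residual vector. This makes the fitted values a linear map $\widehat{\vy} = \widetilde{\mX}\vu$.

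For \textbf{(i)}, we argue term by term. The loss $f$ is a convex quadratic composed with an affine map, since its Hessian $\widetilde{\mX}^\top\widetilde{\mX}/n$ is positive semidefinite. Each penalty term has the form $\|\mathbf{P}\vu\|$, where $\mathbf{P}$ is a coordinate-selection (linear) map and $\|\cdot\|$ is an $\ell_1$ or $\ell_2$ norm. Such terms are convex by the triangle inequality and homogeneity. The coefficients $\lambda(1-\alpha)$, $\lambda\alpha$, $\sqrt{p_l}$ and $\sqrt{p_g}/\sqrt{1+K}$ are all nonnegative, so $J$ is a nonnegative combination of convex functions and is therefore convex. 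It is also continuous.

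For \textbf{(ii)}, we use coercivity. Suppose first that $\lambda>0$ and $\alpha\in(0,1]$. Then $g(\vu) \ge \lambda\alpha(\|\vbeta\|_1 + \sum_j\|\vtheta_j\|_1) = \lambda\alpha\|\vu\|_1$. Now suppose $\alpha=0$. Then $g(\vu) \ge \lambda\sum_j \|(\beta_j,\vtheta_j)\|_2 \ge \lambda\|\vu\|_2$. In either case $g(\vu)\ge c\|\vu\|$ for some $c>0$. Since $f\ge 0$, it follows that $J(\vu)\to\infty$ as $\|\vu\|\to\infty$. Hence the sublevel set $\{J \le J(\vzero)\}$ is closed (by continuity) and bounded, so it is compact, and Weierstrass gives a minimizer.

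The one delicate point is that existence must not rely on $\widetilde{\mX}$ having full column rank, which fails when $p(1+K)>n$. The penalty alone supplies the coercivity, because the per-predictor coupling term covers every coordinate, and this is what makes the argument go through.

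For \textbf{(iii)}, we use the standard strict-convexity argument, as for the Lasso. Suppose $\vu^{(1)}$ and $\vu^{(2)}$ are two minimizers with optimal value $J^*$, and that $\widetilde{\mX}\vu^{(1)} \ne \widetilde{\mX}\vu^{(2)}$. Take the midpoint $\bar\vu = (\vu^{(1)}+\vu^{(2)})/2$. The map $\vv \mapsto \frac{1}{2n}\|\vy-\vv\|_2^2$ is strictly convex in $\vv$, so $f(\bar\vu) < \tfrac12 f(\vu^{(1)}) + \tfrac12 f(\vu^{(2)})$. Convexity of $g$ gives $g(\bar\vu) \le \tfrac12 g(\vu^{(1)}) + \tfrac12 g(\vu^{(2)})$. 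Adding these yields $J(\bar\vu) < J^*$, which is a contradiction.

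Therefore $\widetilde{\mX}\widehat{\vu}$ takes the same value at every minimizer. Adding back the intercept fit $\widehat\beta_0\vone_n + \mZ\widehat\vtheta_0$, which is fixed by the first-stage least squares, shows that $\widehat{\vy}$ is unique. The same argument shows that every convex combination of minimizers is a minimizer. So the solution set may be non-singleton, but it is convex and compact by part (ii), and all its elements share the same fitted values. This establishes the ``even if not unique'' clause.
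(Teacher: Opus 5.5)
Your proposal is correct and follows essentially the same route as the paper. Convexity is shown term by term. Coercivity comes from the $\ell_1$ penalty when $\alpha>0$ and from the per-predictor coupling term when $\alpha=0$, so no rank condition on $\widetilde{\mX}$ is needed, and uniqueness of fitted values uses the midpoint argument with strict convexity of the squared-error loss in $\widehat{\vy}$. Your explicit handling of the profiled-out intercepts and your remark that the solution set is convex and compact are small additions that the paper leaves implicit.
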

\begin{proof}
    See section \ref{supp:prop1} of the Supplementary material.
\end{proof}

\begin{remark}
Strict convexity of $J$ in $(\vbeta,\mTheta)$, and hence uniqueness of the minimizer itself, holds whenever the effective design matrix
$\widetilde{\mX}$ has full column rank.
In high-dimensional settings ($p(1+K) > n$) this condition fails,
so multiple minimizers may exist; however, by Proposition~\ref{prop:convexity}(iii),
all minimizers produce the same fitted value $\widehat{\vy}$, and since
$J = f + g$ with $f$ determined entirely by $\widehat{\vy}$, all minimizers
also achieve the same value of $J$.
\end{remark}

While Proposition \ref{prop:convexity} ensures that the optimization problem is well-posed, it does not describe the structure of the minimizer. In penalized regression, such structure—particularly sparsity patterns—is typically characterized through optimality conditions. Therefore, we next derive the KKT conditions for the SGPL objective, which yield necessary and sufficient conditions for block-wise sparsity and provide insight into the hierarchical selection mechanism.

\begin{proposition}[KKT conditions for the SGPL estimator]
\label{prop:kkt}
Let $(\widehat{\vbeta},\widehat{\mTheta})$ be a minimizer of the
SGPL objective in~\eqref{eq:SGPL}.
For each predictor $j$, let $l(j)$ be the index of the $X$-group
containing $j$, and define the gradient of the smooth loss function with respect to the
coefficient block associated with predictor $j$ as
$\widehat{\mathbf g}_j = \bigl(n^{-1}\vx_j^\top \vr,\;
n^{-1}\mZ^\top(\vx_j\odot \vr)\bigr) \in \mathbb{R}^{1+K}$. Then $(\widehat{\vbeta},\widehat{\mTheta})$ is a minimizer if and only
if, for every $j=1,\ldots,p$,
\begin{align}\label{eq:kkt_full}
    \widehat{\vg}_j
=
\lambda_1 \vu_j
+
\lambda_1\sqrt{p_{l(j)}}\,\vv_{l(j),j}
+
\lambda_1 \vh_j
+
\lambda_2 \va_j .
\end{align}
Here $\vu_j\in\mathbb{R}^{1+K}$ is the subgradient contribution from
the predictor-level coupling term $\|(\beta_j,\vtheta_j)\|_2$:
$$
\vu_j =
\begin{cases}
\dfrac{(\widehat{\beta}_j,\widehat{\vtheta}_j)}
      {\|(\widehat{\beta}_j,\widehat{\vtheta}_j)\|_2},
& \text{if }(\widehat{\beta}_j,\widehat{\vtheta}_j)\neq \mathbf{0},\\[2ex]
\text{any vector with }\|\vu_j\|_2\leq 1,
& \text{if }(\widehat{\beta}_j,\widehat{\vtheta}_j)=\mathbf{0}.
\end{cases}
$$
The vector $\vv_{l,j}$ is the subvector corresponding to
$(\beta_j,\vtheta_j)$ of the group-level subgradient. Specifically,
if $(\widehat{\vbeta}_l,\widehat{\vtheta}_{l\bullet})\neq\mathbf{0}$, then
$
\vv_l
=
(\widehat{\vbeta}_l,\widehat{\vtheta}_{l\bullet})/\|(\widehat{\vbeta}_l,\widehat{\vtheta}_{l\bullet})\|_2$, and $\vv_{l,j}$ denotes the components of $\vv_l$ corresponding to
$(\widehat{\beta}_j,\widehat{\vtheta}_j)$. If
$(\widehat{\vbeta}_l,\widehat{\vtheta}_{l\bullet})=\mathbf{0}$, then
$\vv_l$ may be any vector satisfying $\|\vv_l\|_2\leq 1$. The vector $\vh_j\in\mathbb{R}^{1+K}$ is the contribution from the
modifier-group penalties. Its first component, corresponding to
$\beta_j$, is zero. For each modifier group $\mathcal{G}^z_g$, define
$w_g=\sqrt{p_g}/\sqrt{1+K}$. If $\widehat{\vtheta}_{lg}\neq \mathbf{0}$, let
$
\vzeta_{lg}
=
\widehat{\vtheta}_{lg}/\|\widehat{\vtheta}_{lg}\|_2;
$
otherwise let $\vzeta_{lg}$ be any vector satisfying
$\|\vzeta_{lg}\|_2\leq 1$. Let $\vzeta_{lg,j}$ denote the subvector
of $\vzeta_{lg}$ corresponding to the coefficients
$\widehat{\vtheta}_{jg}$. Then
\[
\vh_j =
\begin{pmatrix}
0\\
w_1\vzeta_{l(j)1,j}\\
\vdots\\
w_G\vzeta_{l(j)G,j}
\end{pmatrix},
\]
where the subvectors are ordered according to the partition of the
modifying variables.

\noindent Finally, let
$
\mathbf a_j=
\bigl(
0,
\tilde{\mathbf a}_j
\bigr)
\in \mathbb{R}^{1+K},
$
where $\tilde{\mathbf a}_j\in\mathbb{R}^{K}$ is the subgradient
associated with the element-wise $\ell_1$ penalty on
$\vtheta_j$. Its components satisfy
\[
(\tilde{\mathbf a}_j)_k=
\begin{cases}
\operatorname{sign}(\widehat{\theta}_{jk}),
& \text{if }\widehat{\theta}_{jk}\neq 0,\\[1ex]
\text{any value in }[-1,1],
& \text{if }\widehat{\theta}_{jk}=0,
\end{cases}
\qquad k=1,\ldots,K.
\]
\end{proposition}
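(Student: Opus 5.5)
The plan is to use the standard Fermat-rule argument for convex, non-smooth problems. By Proposition~\ref{prop:convexity}(i), $J=f+g$ is convex. Here $f(\vbeta,\mTheta)=\frac{1}{2n}\|\vy-\widetilde{\mX}(\vbeta,\mTheta)\|_2^2$ is differentiable, and $g$ is a finite sum of norms of linear images of the coefficient vector. So $(\widehat{\vbeta},\widehat{\mTheta})$ is a global minimizer if and only if $\mathbf 0\in\partial J(\widehat{\vbeta},\widehat{\mTheta})$. Both directions of the ``if and only if'' come out of this single equivalence, so no separate sufficiency argument is needed.

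Because every penalty term is a finite convex function on all of $\mathbb{R}^{p(1+K)}$, the Moreau--Rockafellar sum rule applies without qualification. It gives $\partial J=\nabla f+\partial g$ with $\partial g=\sum_{\text{terms}}\partial(\text{term})$.

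The first step is the gradient of the loss. Using $\vr=\vy-\widehat{\vy}$ and the columns $\vx_j$ and $\vx_j\odot\vz_k$ of $\widetilde{\mX}$, the block of $\nabla f$ associated with $(\beta_j,\vtheta_j)$ is $-\widehat{\vg}_j$. Here I use $(\vx_j\odot\vz_k)^\top\vr=\vz_k^\top(\vx_j\odot\vr)$. Hence $\mathbf 0\in\partial J$ is equivalent, block by block, to $\widehat{\vg}_j\in[\partial g]_j$ for every $j$.

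The second step computes each subdifferential using $\partial\|A\vw\|_2=A^\top\{\vs:\|\vs\|_2\le1,\ \vs^\top A\vw=\|A\vw\|_2\}$, where $A$ is a coordinate-selection matrix. This handles the three $\ell_2$ terms in turn.
\begin{itemize}
\item The coupling term $\lambda_1\|(\beta_j,\vtheta_j)\|_2$ involves only block $j$ and gives $\lambda_1\vu_j$.
\item The group term $\lambda_1\sqrt{p_l}\|(\vbeta_l,\vtheta_{l\bullet})\|_2$ gives a vector $\vv_l$ on the whole group. Its restriction to the coordinates of predictor $j\in\mathcal G^x_l$ is $\vv_{l,j}$, which produces $\lambda_1\sqrt{p_{l(j)}}\vv_{l(j),j}$.
\item The modifier-group term $\lambda_1 w_g\|\vtheta_{lg}\|_2$ yields $\vzeta_{lg}$. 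It is supported only on $\Theta$-coordinates, so its $\beta_j$ entry is $0$, and its restriction to $\vtheta_{jg}$ is $w_g\vzeta_{lg,j}$. Stacking over $g$ gives $\lambda_1\vh_j$.
\end{itemize}

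The element-wise $\ell_1$ part is treated as in the statement, i.e.\ as acting on the interaction coefficients $\vtheta_j$. It is separable, with subdifferential $\operatorname{sign}(\widehat\theta_{jk})$ or $[-1,1]$ coordinatewise, and zero in the $\beta_j$ slot. This is exactly $\lambda_2\va_j$ with $\va_j=(0,\tilde{\va}_j)$.

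The point needing the most care, which I expect to be the main obstacle, is the overlap bookkeeping. A single coordinate $\beta_j$ or $\theta_{jk}$ appears in three or four different norms. The sum rule requires each subgradient variable ($\vu_j$, $\vv_l$, $\vzeta_{lg}$, $\tilde{\va}_j$) to be chosen once per penalty term, not once per predictor. In particular $\vv_l$ and $\vzeta_{lg}$ are shared across all $j$ in group $l$ and must satisfy a single norm bound $\|\vv_l\|_2\le1$ and $\|\vzeta_{lg}\|_2\le1$, rather than independent bounds on each $\vv_{l,j}$.

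I would therefore state the condition as the existence of one common family of subgradients, consistent across $j$, such that \eqref{eq:kkt_full} holds for all $j$ simultaneously. Reading the statement's ``$\vv_l$ may be any vector'' in this sense, the conclusion is the joint block equations \eqref{eq:kkt_full}, which proves the proposition.
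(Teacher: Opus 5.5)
Your argument follows the paper's proof step for step: Fermat's rule for the convex $J$, the sum rule (no qualification needed, since every term is finite), the block gradient $-\widehat{\vg}_j$, and Euclidean-norm subdifferentials restricted to the coordinates of predictor $j$. Your remark that $\vv_l$ and $\vzeta_{lg}$ must each be chosen once per penalty term, with a single norm bound, is correct, and it makes explicit something the paper's proof leaves implicit.

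There is, however, a genuine gap, and the paper's own proof has the same one. The objective \eqref{eq:SGPL} contains $\lambda\alpha\|\vbeta\|_1$. When you apply the sum rule to \emph{every} term of $g$, this term contributes $\lambda_2 s_j$ to the $\beta_j$-coordinate, where $s_j=\operatorname{sign}(\widehat\beta_j)$ if $\widehat\beta_j\neq 0$ and $s_j\in[-1,1]$ otherwise. Saying the $\ell_1$ part is ``treated as in the statement'', with a zero in the $\beta_j$ slot, simply discards this term. That step cannot be justified: with $\va_j=(0,\tilde{\va}_j)$ the claimed equivalence is false in general when $\alpha>0$.

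For a concrete check, take $p=K=L=G=1$ and data for which the minimizer has $\widehat\beta_1\neq 0$. Then $\vu_1=\vv_1=(\widehat\beta_1,\widehat\theta_{11})/\|(\widehat\beta_1,\widehat\theta_{11})\|_2$ is forced, and the first entries of $\vh_1$ and $\va_1$ are zero. The true first-coordinate condition is $n^{-1}\vx_1^\top\vr=2\lambda_1\widehat\beta_1/\|(\widehat\beta_1,\widehat\theta_{11})\|_2+\lambda_2\operatorname{sign}(\widehat\beta_1)$. This contradicts the stated equation, which is the same expression without the last term, because no free subgradient remains in that coordinate.

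The repair is to set $\va_j=(s_j,\tilde{\va}_j)$, the subgradient of $|\beta_j|+\|\vtheta_j\|_1$ on the whole block. With that change your proof goes through verbatim. This corrected form is also the one Corollary~\ref{cor:zero_block_1} actually relies on, since it soft-thresholds the $\beta_j$-component of $\widehat{\vg}_j$ by $\lambda_2$ as well.
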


\begin{proof}
    See section \ref{supp:prop2} of the Supplementary material.
\end{proof}

\begin{corollary}[Zero-block implications of the SGPL KKT conditions]\label{cor:zero_block_1}
Let $(\widehat{\vbeta},\widehat{\mTheta})$ be a minimizer of
the SGPL objective in \eqref{eq:SGPL}.

\begin{enumerate}
\item[(i)] \textbf{Predictor-level block.}
If the predictor-level KKT condition
\[
\left\|
S_{\lambda_2}
\left(
\begin{array}{c}
n^{-1}\mathbf{x}_j^\top \mathbf{r}\\[2mm]
n^{-1}\mathbf{Z}^\top(\mathbf{x}_j \odot \mathbf{r})
\end{array}
\right)
\right\|_2
\leq \lambda_1
\]
holds at $(\widehat{\boldsymbol{\beta}},\widehat{\boldsymbol{\Theta}})$, then $
(\widehat{\beta}_j,\widehat{\boldsymbol{\theta}}_j)=\mathbf{0}$. In particular, this implies $\widehat{\boldsymbol{\theta}}_j=\mathbf{0}$.
Equivalently, if $\widehat{\boldsymbol{\theta}}_j\neq \mathbf{0}$, then the above
zero-block condition cannot hold.

\item[(ii)] \textbf{Group-level block.}
Let $\mathbf{r}^{(-l)}$ denote the partial residual excluding group $l$.
If the group-level KKT condition
\[
\left\|
S_{\lambda_2}
\left(
\begin{array}{c}
n^{-1}\mathbf{X}_l^\top \mathbf{r}^{(-l)}\\[2mm]
\operatorname{vec}\!\left(n^{-1}\mathbf{Z}^\top(\mathbf{X}_l \odot
\mathbf{r}^{(-l)}\mathbf{1}^\top)\right)
\end{array}
\right)
\right\|_2
\leq \sqrt{p_l}\lambda_1
\]
holds at $(\widehat{\boldsymbol{\beta}},\widehat{\boldsymbol{\Theta}})$, then
$
(\widehat{\boldsymbol{\beta}}_l,\widehat{\vtheta}_{l\bullet})=\mathbf{0}$. In particular, this implies $\widehat{\vtheta}_{l\bullet}=\mathbf{0}$.
Equivalently, if $\widehat{\vtheta}_{l\bullet}\neq \mathbf{0}$, then the above
zero-block condition cannot hold.
\end{enumerate}
\end{corollary}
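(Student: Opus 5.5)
The plan is to argue both parts by contraposition, using a single device: the support function of the soft-thresholding operator combined with positive homogeneity of the penalty. First I would record an elementary inequality. For any $\vg$ and $\vc$ of the same dimension, writing $g_k = S_{\lambda_2}(g)_k + \lambda_2 s_k$ with $|s_k|\le 1$ gives
\[
\langle \vg,\vc\rangle \le \langle S_{\lambda_2}(\vg),\vc\rangle + \lambda_2\|\vc\|_1 \le \|S_{\lambda_2}(\vg)\|_2\,\|\vc\|_2 + \lambda_2\|\vc\|_1 .
\]
Second, I would use that every penalty term in \eqref{eq:SGPL} is a norm, hence positively homogeneous. Therefore any subgradient $\vs\in\partial g(\vb)$ satisfies $\langle \vs,\vb\rangle = g(\vb)$. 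I will treat the $\ell_1$ term on $\vbeta$ as contributing a sign component to the first coordinate of $\va_j$. This is what makes $S_{\lambda_2}$ act on the $\beta$-coordinate as well, as in the corollary.

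For part (i), suppose $\vb:=(\widehat\beta_j,\widehat\vtheta_j)\neq\vzero$. Take the inner product of the KKT identity \eqref{eq:kkt_full} of Proposition~\ref{prop:kkt} with $\vb$ and examine each term:
\begin{itemize}
\item $\langle \vu_j,\vb\rangle=\|\vb\|_2$;
\item $\langle \va_j,\vb\rangle=\|\vb\|_1$;
\item $\langle \vh_j,\vb\rangle=\sum_g w_g\|\widehat\vtheta_{jg}\|_2^2/\|\widehat\vtheta_{l(j)g}\|_2\ge 0$;
\item since the group $l(j)$ is then nonzero, $\vv_{l,j}=\vb/\|(\widehat\vbeta_l,\widehat\vtheta_{l\bullet})\|_2$, so $\sqrt{p_l}\langle \vv_{l,j},\vb\rangle>0$ strictly.
\end{itemize}
Hence $\langle \widehat\vg_j,\vb\rangle > \lambda_1\|\vb\|_2+\lambda_2\|\vb\|_1$. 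Combining this with the elementary inequality yields $\|S_{\lambda_2}(\widehat\vg_j)\|_2\|\vb\|_2>\lambda_1\|\vb\|_2$, which contradicts the hypothesis. So the block is zero, and in particular $\widehat\vtheta_j=\vzero$.

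For part (ii), suppose $\vc:=(\widehat\vbeta_l,\widehat\vtheta_{l\bullet})\neq\vzero$. Let $\widetilde\mX_l$ be the augmented columns of group $l$, so that $\vr^{(-l)}=\vr+\widetilde\mX_l\vc$. Stacking the KKT conditions \eqref{eq:kkt_full} over $j\in\mathcal G^x_l$ gives $\vg_l(\vr)\in\partial P_l(\vc)$, where $P_l$ collects all penalty terms involving group $l$. Then
\[
\langle \vg_l(\vr^{(-l)}),\vc\rangle=\langle \vg_l(\vr),\vc\rangle+n^{-1}\|\widetilde\mX_l\vc\|_2^2\ge P_l(\vc),
\]
by homogeneity. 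Moreover $P_l(\vc)\ge \lambda_1\sqrt{p_l}\|\vc\|_2+\lambda_2\|\vc\|_1+\lambda_1\sum_{j\in l}\|(\widehat\beta_j,\widehat\vtheta_j)\|_2$, and the last sum is strictly positive. The elementary inequality then gives $\|S_{\lambda_2}(\vg_l(\vr^{(-l)}))\|_2>\sqrt{p_l}\lambda_1$, contradicting the hypothesis.

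The main obstacle I anticipate is bookkeeping rather than a conceptual difficulty. Proposition~\ref{prop:kkt} as written sets the first coordinate of $\va_j$ to zero, while the corollary soft-thresholds the $\beta$-coordinate. I would reconcile this by including the subgradient of $\lambda_2\|\vbeta\|_1$ in that coordinate, which is what \eqref{eq:SGPL} demands. I would also check that the strictness comes from the coupling terms and does not rely on uniqueness of the minimizer, which Proposition~\ref{prop:convexity} does not provide.
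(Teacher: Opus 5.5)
Your proof is correct, and it takes a genuinely different and more careful route than the paper's. The paper proves the corollary in one line. It asserts that each inequality is \emph{necessary and sufficient} for the corresponding block to vanish, as in the non-overlapping sparse-group Lasso, and then takes the contrapositive. For the overlapping SGPL penalty that equivalence does not hold. At a zero predictor block the terms $\lambda_1\sqrt{p_{l(j)}}\,\vv_{l(j),j}$ and $\lambda_1\vh_j$ need not vanish, so $\|S_{\lambda_2}(\widehat{\mathbf g}_j)\|_2$ can exceed $\lambda_1$ and necessity fails; the paper itself concedes in Section~\ref{subsec:algoritm_overview} that these tests are only sufficient. In addition, the paper's $\va_j$ omits the subgradient of $\lambda_2\|\vbeta\|_1$, so its KKT system does not justify soft-thresholding the $\beta$-coordinate.

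You prove exactly the direction the corollary claims. You pair the KKT identity with the block itself, use $\langle \mathbf s,\mathbf c\rangle=P(\mathbf c)$ for subgradients of a positively homogeneous penalty, and bound $\langle \mathbf g,\mathbf c\rangle\le\|S_{\lambda_2}(\mathbf g)\|_2\|\mathbf c\|_2+\lambda_2\|\mathbf c\|_1$. This gives three things the paper's appeal does not:
\begin{enumerate}
\item It explains why a non-strict threshold still forces a zero block. Strictness comes from the overlapping group term in (i) and from the per-predictor coupling terms in (ii).
\item It handles the passage from $\vr$ to $\vr^{(-l)}$ through the nonnegative term $n^{-1}\|\widetilde{\mX}_l\mathbf c\|_2^2$.
\item It needs no uniqueness of the minimizer.
\end{enumerate}
Your repair of the first coordinate of $\va_j$ is required rather than cosmetic, since \eqref{eq:SGPL} does contain $\lambda_2\|\vbeta\|_1$.

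One hypothesis should be stated explicitly. Your strict inequalities come from multiplying a positive quantity by $\lambda_1$, so you need $\lambda_1>0$, i.e.\ $\alpha<1$. This restricts the statement itself and is not a weakness of your method. At $\alpha=1$ the problem is a Lasso, every minimizer satisfies $\|\widehat{\mathbf g}_j\|_\infty\le\lambda_2$, and the hypothesis of (i) then holds for every $j$ even when blocks are nonzero. Part (ii) does survive at $\alpha=1$. At a minimizer, a nonzero block has $\widetilde{\mX}_l\mathbf c\neq\mathbf 0$, because otherwise zeroing it would lower $J$, and the quadratic term then supplies the strictness.
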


\begin{proof}
    See section \ref{supp:Corollary1} of the Supplementary material.
\end{proof}

This result shows that the SGPL estimator enforces sparsity at the block level
through the KKT conditions. In particular, the predictor-level coupling term
$\|(\beta_j,\boldsymbol{\theta}_j)\|_2$ encourages the main and modifying effects
to enter or leave the model jointly. By Corollary~\ref{cor:zero_block_1}, this
tendency toward the predictor-level hierarchy
$\widehat{\boldsymbol{\theta}}_j \neq 0 \Rightarrow \widehat{\beta}_j \neq 0$
strengthens under heavier regularization, a pattern Section~\ref{sec:simulation}
confirms empirically.

%\begin{remark}
%The simulation study in Section \ref{sec:simulation} illustrates the conditional nature of this hierarchy: the predictor-level relation $\widehat{\boldsymbol{\theta}}_j \neq 0 \Rightarrow \widehat{\beta}_j \neq 0$ is observed to hold under stronger regularization (e.g., the one-standard-error choice) but not necessarily under weaker regularization (e.g., the minimum-error choice), consistent with the KKT threshold characterization above. In practice, stronger regularization is preferable when enforcing hierarchical structure is desired.
%\end{remark}

% ============================================================
% PROPOSITION 3: Algorithmic Convergence
% ============================================================
 
\begin{proposition}[Convergence of the blockwise algorithm]
\label{prop:convergence}
Let $\{(\vbeta^{(k)}, \mTheta^{(k)})\}_{k \geq 1}$ be the sequence of
iterates produced by Algorithm~\ref{alg:sgpl} (Blockwise Coordinate-Descent with Nested Proximal Updates). Then:
 
\begin{enumerate}[label=(\roman*)]
  \item \textbf{Monotone decrease.}
        The full SGPL objective is non-increasing at every outer iteration:
        $J(\vbeta^{(k)}, \mTheta^{(k)}) \leq J(\vbeta^{(k-1)},
        \mTheta^{(k-1)})$ for all $k \geq 1$.
 
  \item \textbf{Global convergence.}
        Every limit point of $\{(\vbeta^{(k)}, \mTheta^{(k)})\}$ is a
        minimizer of $J$.
 
  \item \textbf{Backtracking termination.}
        At each inner iteration, the backtracking line search terminates
        in a finite number of steps.
\end{enumerate}
\end{proposition}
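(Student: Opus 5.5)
The plan is to treat Algorithm~\ref{alg:sgpl} as a block coordinate descent on $J=f+g$. Here $f(\vbeta,\mTheta)=\frac{1}{2n}\|\vy-\widetilde{\mX}(\vbeta,\mTheta)\|_2^2$ is convex and smooth with Lipschitz gradient, and $g$ is convex and \emph{block-separable across the predictor groups} $l=1,\dots,L$. That is, every penalty term in \eqref{eq:SGPL} involves coefficients of only one group $(\vbeta_l,\vtheta_{l\bullet})$: the per-predictor term, the group term, the modifier-group terms $\vtheta_{lg}$, and the $\ell_1$ terms. The blocks of the algorithm are the groups $l$. Within a block, the update is a proximal-gradient step with backtracking on the restricted problem $\min_{(\vbeta_l,\vtheta_{l\bullet})} J$ with the other blocks fixed.

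I would prove (iii) first, since (i) relies on it. On block $l$, the partial gradient of $f$ is Lipschitz with constant $L_l=\|\widetilde{\mX}_l\|_{2}^2/n<\infty$. The backtracking rule shrinks the step $t\leftarrow\gamma t$ until the standard sufficient-decrease inequality
\[
f(\vx^+)\le f(\vx)+\nabla f(\vx)^\top(\vx^+-\vx)+\tfrac{1}{2t}\|\vx^+-\vx\|_2^2
\]
holds. By the descent lemma this is guaranteed once $t\le 1/L_l$, so termination occurs after at most $\lceil\log(t_0L_l)/\log(1/\gamma)\rceil$ reductions. The nested proximal map of the group penalty is evaluated exactly, or at least to an accuracy that does not increase the proximal objective. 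Exactness should hold because the norms are nested: first soft-threshold, then shrink $\vtheta_{lg}$, then shrink $(\beta_j,\vtheta_j)$, then shrink the whole group. I would verify this composition property only briefly.

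For (i), the accepted step gives the usual inequality $J(\vx^+)\le J(\vx)-\frac{1}{2t}\|\vx^+-\vx\|^2\le J(\vx)$ for each block update. Here I use that $\vx^+$ minimizes the proximal model, which upper-bounds $J$ by the backtracking condition and equals $J(\vx)$ at $\vx$. Chaining these inequalities over the blocks of one outer sweep gives monotone decrease.

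For (ii), which I expect to be the main obstacle, I would invoke Tseng's (2001) convergence theorem for block coordinate descent on $f+\sum_l g_l$, with $f$ smooth convex and the $g_l$ convex and separable. Alternatively I would argue directly. The level set $\{J\le J(\vx^{(0)})\}$ is bounded, because the penalty term $\lambda\alpha\|\vbeta\|_1+\lambda(1-\alpha)\sum_j\|(\beta_j,\vtheta_j)\|_2$ is coercive for $\lambda>0$. Hence limit points exist, and $J^{(k)}$ converges. Summing the sufficient-decrease inequalities, with step sizes bounded below by $\gamma/L_l$, gives $\sum_k\|\vx^{(k)}-\vx^{(k-1)}\|^2<\infty$, so successive block iterates have vanishing differences. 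Take a subsequence converging to $\bar\vx$. The proximal fixed-point relation of each block, combined with closedness of the subdifferential graph, yields $\mathbf 0\in\nabla_l f(\bar\vx)+\partial g_l(\bar\vx_l)$ for every $l$. Separability of $g$ then gives $\mathbf 0\in\partial J(\bar\vx)$, which is exactly the KKT system of Proposition~\ref{prop:kkt}. Since $J$ is convex (Proposition~\ref{prop:convexity}(i)), $\bar\vx$ is a global minimizer. The delicate points are justifying passage to the limit across consecutive blocks within a sweep, which uses the vanishing differences, and handling $\alpha=1$ if coercivity must come only from the $\ell_1$ terms. In that case I would note that $\|\vbeta\|_1+\sum_j\|\vtheta_j\|_1$ is still coercive.
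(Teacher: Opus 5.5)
Your outline follows the paper's proof closely. Part (i) comes from the acceptance rule, (iii) from the descent lemma once $t_l\le 1/L_l$, and (ii) from a chain of standard steps: a compact sublevel set, vanishing block increments, the block proximal inclusion passed to the limit via the closed subdifferential graph, separability of $g$ over the $X$-groups, and convexity.

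\textbf{The gap.} It is the step you plan to verify only briefly. The four-step shrinkage is \emph{not} $\operatorname{prox}_{t_lg_l}$, in your order or in the paper's, because the norms are not nested. The blocks $(\beta_j,\vtheta_j)$ and $\vtheta_{lg}$ share the coordinates $\vtheta_{jg}$, yet for $p_l\ge 2$ neither contains the other. A composition $\operatorname{prox}_{h_2}\circ\operatorname{prox}_{h_1}$ equals $\operatorname{prox}_{h_1+h_2}$ when $\partial h_1(\mathbf w)\subseteq\partial h_1(\operatorname{prox}_{h_2}(\mathbf w))$ for all $\mathbf w$. That holds along the chain coordinates $\subset$ predictor blocks $\subset$ whole group, because scaling a block preserves the directions of its sub-blocks. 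It fails here for two reasons:
\begin{itemize}
\item shrinking $\vtheta_{lg}$ rotates $(\beta_j,\vtheta_j)$, since $\beta_j$ is left untouched;
\item shrinking different $(\beta_j,\vtheta_j)$ by different factors rotates $\vtheta_{lg}$.
\end{itemize}
Concretely, take $p_l=2$, $K=G=1$, $\alpha=0$, $\lambda_1t=1$, and apply the four steps to $(\beta_1,\beta_2,\theta_{11},\theta_{21})=(5,0,0.75,0)$. The paper's order returns $(2.597,0,0,0)$. The exact prox is $\approx(2.586,0,0.022,0)$: soft-threshold $\theta_{11}$ by $1/\sqrt2$, then shrink $(\beta_1,\theta_{11})$ by $1+\sqrt2$. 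Your order gets this input right but fails at $(1,0,1,1)$ with $\lambda_1t=\tfrac12$. There every group is nonzero at the output, so the penalty is differentiable, and the gradient condition is violated in the $\theta$-coordinates.

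\textbf{Consequences.} Part (i) survives if you argue as the paper does, directly from the acceptance test of Algorithm~\ref{alg:sgpl}, rather than from minimality of the candidate for the proximal model. Parts (ii) and (iii) do not survive. Tseng's theorem also presupposes exact block steps.
\begin{itemize}
\item In (ii), the inclusion $\vzero\in\nabla_l f(\widetilde{\vgamma})+t_l^{-1}\Delta+\partial g_l(\vgamma_l^+)$ that you pass to the limit is exactly the optimality condition of the exact prox. For the composite map, limit points are only its fixed points, which need not satisfy Proposition~\ref{prop:kkt}.
\item In (iii), the descent lemma bounds $J$ at the candidate by the model value there. Only an exact model minimizer is guaranteed to have value at most $J(\widetilde{\vgamma})$.
\end{itemize}
Your fallback (an approximation that does not increase the proximal objective) is not available either. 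Use an orthonormal block design, e.g.\ the $8$-run $2^3$ factorial in $(\vx_1,\vx_2,\vz_1)$ with $\vy=5\vx_1+0.75\,\vx_1\odot\vz_1$. Then the block loss is $\tfrac12\|\vgamma_l-\mathbf c\|_2^2$ with $\mathbf c=(5,0,0.75,0)$ and $L_l=1$; take $\lambda=1$, $\alpha=0$. At $t_l=1$ the model coincides with $J$, and from the minimizer the composite candidate strictly increases it. Moreover, Algorithm~\ref{alg:sgpl} with $t=1$, $\rho=\tfrac12$ never leaves the slice $\theta_{11}=\theta_{21}=0$, although the unique minimizer has $\theta_{11}\approx 0.022$. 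So (ii) fails for the algorithm as written.

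The paper's proof rests on the same unproved assertion (end of Section~\ref{subsec:withinBlock}), so following it does not close the gap. You need an exact, or controlled-inexact, evaluation of the overlapping-group prox, for instance via an inner dual or ADMM solve.

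Separately, Algorithm~\ref{alg:sgpl} tests only non-increase of $J$, not your quadratic-bound inequality. The uniform sufficient decrease you sum in (ii) is therefore a change to the algorithm and should be stated as one; the paper also asserts it without justification.
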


\begin{proof}
    See section \ref{supp:prop3} of the Supplementary material.
\end{proof}

\begin{remark}
The convergence rate of the inner proximal-gradient loop for each block
sub-problem is $O(1/k_{\mathrm{in}})$ in the objective gap, following
standard iterative shrinkage-thresholding algorithm (ISTA) theory \citep{beck2009fast}, where
$k_{\mathrm{in}}$ is the number of inner iterations.
\end{remark}

% ============================================================
% THEOREM 1: Prediction Error Oracle Inequality
% ============================================================
 To establish estimation and prediction error bounds for the SGPL estimator,
we impose a restricted eigenvalue condition on the augmented design matrix.
This condition ensures that the design is sufficiently well behaved on
group-sparse directions. Let 
\[
\vdelta = (\widehat{\vbeta} - \vbeta^*, \operatorname{vec}(\widehat{\mTheta} - \mTheta^*))
\in \mathbb{R}^{p(1+K)}
\]
denotes the estimation error vector and $\operatorname{vec}()$ denotes the vector operation.
\begin{assumption}[Group restricted eigenvalue (GRE) condition]
\label{ass:gre}
Let $\widetilde{\mX} \in \mathbb{R}^{n \times p(1+K)}$ be the augmented design matrix. Define the block corresponding to predictor $j$ as
\[
\vdelta_j = (\delta_{\beta_j}, \vdelta_{\vtheta_j}) \in \mathbb{R}^{1+K}.
\]
so that $\vdelta = (\vdelta_1^\top,\ldots,\vdelta_p^\top)^\top$ is partitioned into $p$ groups. Define the group $\ell_{2,1}$ norm as
\[
\|\vdelta\|_{2,1}
=
\sum_{j=1}^p \|\vdelta_j\|_2.
\]
Let $\mathcal{H} = \{j \in \{1,\ldots,p\} : (\beta_j^*, \vtheta_j^*) \neq \mathbf{0}\}$ 
denote the true active set of predictors, and
write $\vdelta_{\mathcal{H}}$ for the vector formed by concatenating
$\{\vdelta_j : j \in \mathcal{H}\}$, and similarly for $\vdelta_{\mathcal{H}^c}$. We say that the group restricted eigenvalue condition $\mathrm{GRE}(\phi,\xi)$ holds
for constants $\phi > 0$ and $\xi \ge 1$ if
\[
\frac{1}{n}\|\widetilde{\mX}\vdelta\|_2^2
\;\ge\;
\phi^2 \|\vdelta_{\mathcal{H}}\|_2^2
\]
for all $\vdelta \in \mathbb{R}^{p(1+K)}$ satisfying
\[
\|\vdelta_{\mathcal{H}^c}\|_{2,1}
\;\le\;
\xi \|\vdelta_{\mathcal{H}}\|_{2,1},
\]
where $\|\vdelta_{\mathcal{H}}\|_2$ denotes the Euclidean norm of the 
sub-vector of $\vdelta$ restricted to the coordinates in $\mathcal{H}$, and $\mathcal{H}^c$ denotes the complement of $\mathcal{H}$,
i.e., the set of inactive predictors.
\end{assumption}

This condition extends the standard group restricted eigenvalue assumption
to the augmented design arising in the SGPL model. In the following, we prove that if the augmented SGPL design is well behaved on group-sparse directions, and $\lambda$ is chosen large enough relative to the noise level, then SGPL has small prediction error and small coefficient estimation error with high probability.
 
\begin{theorem}[Oracle-type prediction and estimation bound]
\label{thm:oracle}
Suppose the data are generated from model~\eqref{linear_model_PL} with
$\vepsilon \sim N(\bm{0},\sigma^2\mI_n)$. Let
\[
\vgamma^*
=
(\vbeta^{*\top},\operatorname{vec}(\mTheta^*)^\top)^\top,
\qquad
\widehat{\vgamma}
=
(\widehat{\vbeta}^{\top},\operatorname{vec}(\widehat{\mTheta})^\top)^\top,
\]
and let $\vdelta=\widehat{\vgamma}-\vgamma^*$, where $\vgamma^*$ is the true parameter vector. Assume that the columns of
$\widetilde{\mX}$ are normalised so that
$\|\widetilde{\vx}_r\|_2^2/n\leq 1$ for all $r=1,\ldots,p(1+K)$.
Let $s_{\mathcal{H}}=|\mathcal{H}|$, the number of active predictor blocks.
Suppose Assumption~\ref{ass:gre} holds with constants $(\phi,\xi)$ and that
the SGPL penalty yields the cone condition
\[
\|\vdelta_{\mathcal{H}^c}\|_{2,1}
\leq
\xi\|\vdelta_{\mathcal{H}}\|_{2,1}.
\]
Set
\[
\lambda
=
A\sigma\sqrt{\frac{\log\{p(1+K)\}}{n}}
\]
for a sufficiently large constant $A>0$. Then, with probability at least
$1-2/\{p(1+K)\}$,
\[
\frac{1}{n}
\|\widehat{\vy}-\vy^*\|_2^2
=
\frac{1}{n}
\|\widetilde{\mX}\vdelta\|_2^2
\leq
C
\frac{\sigma^2 s_{\mathcal{H}}(1+K)\log\{p(1+K)\}}{n\phi^2},
\]
and
\[
\|\widehat{\vbeta}-\vbeta^*\|_2^2
+
\|\widehat{\mTheta}-\mTheta^*\|_F^2
\leq
C'
\frac{\sigma^2 s_{\mathcal{H}}(1+K)\log\{p(1+K)\}}{n\phi^4},
\]
where $C,C'>0$ are constants depending only on $A$, $\alpha$, $\xi$, and the
penalty weights.
\end{theorem}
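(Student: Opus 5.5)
We follow the standard basic-inequality route for group-penalized least squares. The intercepts are handled by profiling, so $\vy=\widetilde{\mX}\vgamma^*+\vepsilon$ on the residualized scale. Write $g(\vgamma)$ for the SGPL penalty in~\eqref{eq:SGPL}. Since $\widehat{\vgamma}$ minimizes $J$ (it exists by Proposition~\ref{prop:convexity}), comparing with $\vgamma^*$ gives the basic inequality
\[
\frac{1}{2n}\|\widetilde{\mX}\vdelta\|_2^2 \le \frac{1}{n}\vepsilon^\top\widetilde{\mX}\vdelta + g(\vgamma^*)-g(\widehat{\vgamma}).
\]

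\textbf{Step 1: control the noise term by the block norm.} Each coordinate $\widetilde{\vx}_r^\top\vepsilon/n$ is Gaussian with variance at most $\sigma^2/n$, by the normalisation $\|\widetilde{\vx}_r\|_2^2/n\le 1$. A union bound over the $p(1+K)$ coordinates therefore gives
\[
\max_r|\widetilde{\vx}_r^\top\vepsilon|/n\le \sigma\sqrt{2\cdot 2\log\{p(1+K)\}/n}
\]
on an event of probability at least $1-2/\{p(1+K)\}$. On this event,
\[
\Bigl|\tfrac1n\vepsilon^\top\widetilde{\mX}\vdelta\Bigr|\le \max_r|\cdot|\,\|\vdelta\|_1\le 2\sigma\sqrt{\log\{p(1+K)\}/n}\,\sqrt{1+K}\,\|\vdelta\|_{2,1},
\]
using $\|\vdelta_j\|_1\le\sqrt{1+K}\|\vdelta_j\|_2$.

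\textbf{Step 2: handle the penalty difference.} The key observation is that every SGPL penalty term dominates, or can be split along, the predictor blocks. The coupling term is exactly $\lambda_1\|\cdot\|_{2,1}$. The group, modifier-group and $\ell_1$ terms are nonnegative, and each is decomposable or subadditive across $\mathcal{H}$ and $\mathcal{H}^c$. For these we use the triangle inequality on the active part, $g_\cdot(\vgamma^*)-g_\cdot(\widehat{\vgamma})\le g_\cdot(\vdelta_{\mathcal{H}})-g_\cdot(\vdelta_{\mathcal{H}^c})$ (exact for the separable terms), or simply discard the $\mathcal{H}^c$ contribution where decomposability fails for the overlapping group norms. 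Each term evaluated on $\vdelta_{\mathcal{H}}$ is bounded by a constant (depending on $\alpha$, $\sqrt{p_l}$, $w_g$, $\sqrt{1+K}$) times $\lambda\|\vdelta_{\mathcal{H}}\|_{2,1}$. The result is
\[
g(\vgamma^*)-g(\widehat{\vgamma})\le c_1\lambda\sqrt{1+K}\,\|\vdelta_{\mathcal{H}}\|_{2,1}-\lambda_1\|\vdelta_{\mathcal{H}^c}\|_{2,1}.
\]

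\textbf{Step 3: combine and apply the GRE condition.} Choosing $A$ large makes the noise bound from Step 1 at most a fraction of the penalty level; the $\sqrt{1+K}$ factor is absorbed into the constants, or equivalently into $A$ via the penalty weights. The basic inequality then becomes
\[
\tfrac1n\|\widetilde{\mX}\vdelta\|_2^2\le c_2\lambda\sqrt{1+K}\,\|\vdelta_{\mathcal{H}}\|_{2,1}.
\]
The cone condition is assumed in the statement, so we use it directly rather than deriving it. Cauchy--Schwarz gives $\|\vdelta_{\mathcal{H}}\|_{2,1}\le\sqrt{s_{\mathcal{H}}}\|\vdelta_{\mathcal{H}}\|_2$, and Assumption~\ref{ass:gre} gives $\|\vdelta_{\mathcal{H}}\|_2\le\phi^{-1}\|\widetilde{\mX}\vdelta\|_2/\sqrt n$. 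Together these yield
\[
\tfrac1n\|\widetilde{\mX}\vdelta\|_2^2\le c_2^2\lambda^2(1+K)s_{\mathcal{H}}/\phi^2,
\]
which is the prediction bound once $\lambda=A\sigma\sqrt{\log\{p(1+K)\}/n}$ is substituted. For the estimation bound we first obtain $\|\vdelta_{\mathcal{H}}\|_2^2\le\phi^{-2}\cdot(\text{prediction bound})$, of order $\phi^{-4}$. The cone condition then controls the inactive part: $\|\vdelta_{\mathcal{H}^c}\|_2\le\|\vdelta_{\mathcal{H}^c}\|_{2,1}\le\xi\sqrt{s_{\mathcal{H}}}\|\vdelta_{\mathcal{H}}\|_2$. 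This picks up an extra factor $s_{\mathcal{H}}$ unless handled carefully, so instead we bound $\|\vdelta\|_2^2$ through the combined inequality, giving $\|\vdelta\|_{2,1}\lesssim\lambda\sqrt{1+K}s_{\mathcal{H}}/\phi^2$. Summing $\|\widehat{\vbeta}-\vbeta^*\|_2^2+\|\widehat{\mTheta}-\mTheta^*\|_F^2=\|\vdelta\|_2^2$ and absorbing $\xi$ into $C'$ gives the stated rate.

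\textbf{Main obstacle.} The delicate step is Step 2: the overlapping group penalties $\sqrt{p_l}\|(\vbeta_l,\vtheta_{l\bullet})\|_2$ and $\|\vtheta_{lg}\|_2$ mix active and inactive predictors within a group, so they are not decomposable over $\mathcal{H}$. Using only the triangle inequality makes them contribute with the wrong sign on $\mathcal{H}^c$. We will first try to bound these overlapping terms crudely, via $\sqrt{p_l}\|\vdelta_{l}\|_2\le\sqrt{p_l}\sum_{j\in l}\|\vdelta_j\|_2$, and keep only the coupling term to produce the negative $\mathcal{H}^c$ contribution. The cost is that the constant depends on $\max_l\sqrt{p_l}$, which the statement permits as a dependence on the penalty weights. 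The second subtlety is keeping the inactive-part contribution to $\|\vdelta\|_2^2$ at order $s_{\mathcal{H}}$ rather than $s_{\mathcal{H}}^2$; if the direct route fails, we will strengthen the use of the GRE inequality, in the manner of Bickel--Ritov--Tsybakov, by including the largest inactive blocks in the restricted set.
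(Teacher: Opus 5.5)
The estimation bound is where your proposal breaks down. Your replacement route gives $\|\vdelta\|_{2,1}\lesssim\lambda\sqrt{1+K}\,s_{\mathcal{H}}/\phi^2$. The only way from this to $\ell_2$ is $\|\vdelta\|_2\le\|\vdelta\|_{2,1}$, which yields $\|\vdelta\|_2^2\lesssim\lambda^2(1+K)s_{\mathcal{H}}^2/\phi^4$. That is exactly the extra factor $s_{\mathcal{H}}$ you were trying to avoid.

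More care will not fix this. Assumption~\ref{ass:gre} bounds $n^{-1}\|\widetilde{\mX}\vdelta\|_2^2$ below only by $\|\vdelta_{\mathcal{H}}\|_2^2$. So GRE plus the cone control $\vdelta_{\mathcal{H}^c}$ only through $\|\vdelta_{\mathcal{H}^c}\|_{2,1}\le\xi\|\vdelta_{\mathcal{H}}\|_{2,1}$. A single inactive block may carry all of that mass, with norm up to $\xi\sqrt{s_{\mathcal{H}}}\,\|\vdelta_{\mathcal{H}}\|_2$ when the active blocks have equal norms. These tools therefore stop at $\|\vdelta\|_2^2\le(1+\xi^2 s_{\mathcal{H}})\|\vdelta_{\mathcal{H}}\|_2^2$.

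Your Bickel--Ritov--Tsybakov fallback is the correct repair, but it is a stronger hypothesis, not a proof under the stated one. You need $n^{-1}\|\widetilde{\mX}\vdelta\|_2^2\ge\phi^2\|\vdelta_{\mathcal{H}\cup T}\|_2^2$ on the cone, with $T$ the $s_{\mathcal{H}}$ largest inactive blocks. Then $\|\vdelta_{(\mathcal{H}\cup T)^c}\|_2\le\|\vdelta_{\mathcal{H}^c}\|_{2,1}/\sqrt{s_{\mathcal{H}}}\le\xi\|\vdelta_{\mathcal{H}}\|_2$, which gives the stated rate.

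The paper's proof has the same hole. It derives $\|\vdelta\|_2^2\le(1+\xi^2 s_{\mathcal{H}})\|\vdelta_{\mathcal{H}}\|_2^2$ and then absorbs $s_{\mathcal{H}}$ into $C'$. That contradicts the claim that $C'$ depends only on $A$, $\alpha$, $\xi$ and the penalty weights. Under the hypotheses as written, what is actually established is the estimation bound with an extra factor $1+\xi^2 s_{\mathcal{H}}$.

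Your prediction bound is correct and matches the paper's argument: basic inequality, union bound on $\|n^{-1}\widetilde{\mX}^\top\vepsilon\|_\infty$, H\"older with blockwise Cauchy--Schwarz, the assumed cone, then GRE. The only difference is that you write $\sqrt{1+K}\,\|\cdot\|_{2,1}$ where the paper writes $\|\cdot\|_1$. Two simplifications are worth making:
\begin{itemize}
\item You do not need the negative term $-\lambda_1\|\vdelta_{\mathcal{H}^c}\|_{2,1}$ at all.
\item You should not absorb $\sqrt{1+K}$ into $A$. Forcing $A\gtrsim\sqrt{1+K}/(1-\alpha)$ would add another factor $1+K$ to the rate.
\end{itemize}
Because the cone is assumed, the noise term is at most $(2/A)(1+\xi)\lambda\sqrt{1+K}\,\|\vdelta_{\mathcal{H}}\|_{2,1}$. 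For the penalty, $g(\vgamma^*)-g(\widehat{\vgamma})\le g(\vdelta_{\mathcal{H}})$, with $\vdelta_{\mathcal{H}}$ zero-padded. This holds because $\vgamma^*$ vanishes off $\mathcal{H}$ and every SGPL component can only decrease when $\widehat{\vgamma}$ is restricted to the blocks in $\mathcal{H}$. This handles the overlapping group norms cleanly. Your crude bound $\sqrt{p_l}\|\vdelta_l\|_2\le\sqrt{p_l}\sum_{j\in l}\|\vdelta_j\|_2$ would not give your Step~2 display, though it would still suffice once the cone is assumed.
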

\begin{proof}
    See section \ref{supp:Theo1} of the Supplementary material.
\end{proof}

\begin{remark}
The prediction bound
\[
\frac{1}{n}\|\widehat{\vy}-\vy^*\|_2^2
\le
C\frac{\sigma^2 s_{\mathcal{H}}(1+K)\log\{p(1+K)\}}{n\phi^2}
\]
involves two key structural quantities:
$s_{\mathcal{H}} = |\mathcal{H}|$, the number of active predictor blocks
(group sparsity), and $(1+K)$, the block size corresponding to one main
effect and $K$ modifying effects.

The logarithmic factor $\log\{p(1+K)\}$ reflects the complexity of the
augmented design matrix $\widetilde{\mX}$ and arises from controlling the
maximum of $p(1+K)$ Gaussian variables via a union bound.

When $K = 0$ and $L = p$, so that each block contains exactly one
predictor, $s_{\mathcal{H}}$ reduces to the number of nonzero coefficients,
and the bound recovers the standard Lasso rate
\[
O\!\left(\frac{\sigma^2 s_{\mathcal{H}} \log p}{n\phi^2}\right).
\]

More generally, the rate is of the same order as known minimax-optimal
rates for group-sparse estimation with $s_{\mathcal{H}}$ active blocks
of size $(1+K)$, up to constants.
\end{remark}
 
\begin{remark}
The GRE condition is imposed on the augmented design matrix
$\widetilde{\mX}$, which includes both the original predictors
and the interaction terms $\vx_j \odot \vz_k$. Its validity
therefore depends on the joint structure of $\mX$ and $\mZ$.
In random design settings, such conditions can be justified
under suitable assumptions (e.g., sub-Gaussian or bounded
covariates), while in fixed design settings it is treated
as a structural assumption; see, for example, \citet{lounici2011oracle, buhlmann2011statistics}
for related restricted eigenvalue (RE)-type conditions in the group Lasso setting.
\end{remark}

%=========================================
%% Section: Algorithm
%=========================================

\section{Algorithm}\label{sec:algorithm}

In this section, we describe how to fit the SGPL
using a blockwise coordinate-descent algorithm with nested proximal
updates. While the outer blockwise coordinate-descent loop and the
inner proximal-gradient step build on established optimization theory
\citep{tseng2001convergence, beck2009fast}, the decomposition of the SGPL block proximal operator into an ordered sequence of four closed-form shrinkage steps is specific to the penalty structure of~\eqref{eq:SGPL} and forms a key algorithmic component of the proposed method. This decomposition is inspired by the proximal-operator construction
used for the sparse-group Lasso \citep{simon2013sparse}, but is adapted to the SGPL penalty, which contains additional predictor-level coupling and modifier-group shrinkage terms. We first describe the within-block optimization in
Section~\ref{subsec:withinBlock}, then provide an overview of the full
algorithm in Section~\ref{subsec:algoritm_overview}, and finally discuss
the pathwise solution strategy in Section~\ref{subsec:pathwise}.

\subsection{Within-Block Solution}\label{subsec:withinBlock}

We choose a group $l$ to minimize over and treat all other block coefficients as fixed. 
The sub-problem for block $l$ is to find $(\vbeta_l, \vtheta_{l\bullet})$ minimizing
\begin{align*}
&\frac{1}{2n}\left\|\vr^{(-l)} - \mX_l \odot \left(\vone_n \vbeta_l^\top + \mZ\vtheta_{l\bullet}\right)\vone_{p_l}\right\|_2^2 \\
&\quad + \lambda_1\left(\sum_{j \in \mathcal{G}_l^x}\left\|(\vbeta_j, \vtheta_j)\right\|_2 
+ \sqrt{p_l}\left\|(\vbeta_l, \vtheta_{l\bullet})\right\|_2 
+ \sum_{g=1}^G \frac{\sqrt{p_g}}{\sqrt{1+K}}\left\|\vtheta_{lg}\right\|_2\right) \\
&\quad + \lambda_2\left(\left\|\vbeta_l\right\|_1 
+ \sum_{j \in \mathcal{G}_l^x}\left\|\vtheta_j\right\|_1\right),
\end{align*}
where $\vr^{(-l)}$ is the partial residual excluding group $l$. The smooth loss in this  sub-problem has a Lipschitz-continuous gradient with constant 
$L_l = \frac{1}{n}\|\widetilde{\mX}_l\|_{\mathrm{op}}^2$, where $\widetilde{\mX}_l$ is the augmented 
design submatrix for group $l$ and $\|\cdot\|_{\mathrm{op}}$ denotes the operator norm.

We solve this sub-problem via a proximal gradient step. Centering at the current block 
iterate $(\widetilde{\vbeta}_l, \widetilde{\vtheta}_{l\bullet})$, the standard majorization of the smooth loss 
gives a quadratic upper bound, and minimizing the sum of this upper bound and the penalty 
yields the update. The key challenge relative to the ordinary sparse-group lasso is that the SGPL penalty for each block contains four distinct regularization terms acting on 
overlapping subsets of the parameters: an element-wise $\ell_1$ penalty on $\vbeta_l$ and 
each $\vtheta_j$, a per-predictor $\ell_2$ coupling term $\|(\vbeta_j, \vtheta_j)\|_2$, a 
modifier-group $\ell_2$ term $\|\vtheta_{lg}\|_2$, and an outer group-level 
$\ell_2$ term $\|(\vbeta_l, \vtheta_{l\bullet})\|_2$. These penalties do not decouple 
into independent scalar problems, so the proximal operator for the full block penalty 
cannot be computed in a single closed-form step.

However, the proximal operator can be decomposed into a sequence of four closed-form shrinkage operations applied in a specific order that respects the penalty hierarchy. After a gradient step on the smooth loss, the update proceeds as follows. The gradient step takes the explicit form
\[
\vbeta_l^+ = \widetilde{\vbeta}_l + \frac{t_l}{n} \mX_l^\top \vr_l, 
\quad
\vtheta_{l\bullet}^+ = \widetilde{\vtheta}_{l\bullet} + \frac{t_l}{n} \mZ^\top (\mX_l \odot \vr_l \textbf{1}^\top),
\]
where $r_l$ is the current block residual. First, 
element-wise soft-thresholding $\mathcal{S}_{\lambda_2 t}(\cdot)$ is applied to both 
$\vbeta_j^+$ and $\vtheta_j^+$ for each $j \in \mathcal{G}_l^x$, enforcing the $\ell_1$ 
sparsity on main effects and interaction coefficients. Second, a joint $\ell_2$ shrinkage 
operator $\mathcal{B}_{\lambda_1 t}(\cdot)$ is applied to each $(\beta_j^+, \theta_j^+)$ 
pair individually, enforcing the per-predictor hierarchy. Third, a further $\ell_2$ 
shrinkage operator $\mathcal{B}_{w_{lg}\lambda_1 t}(\cdot)$ with weight 
$w_{lg} = \sqrt{p_g}/\sqrt{1+K}$ is applied to each modifier-group block 
$\vtheta_{l\bullet}^+$, enforcing within-group modifier-group sparsity. Finally, 
a group-level $\ell_2$ shrinkage operator $\mathcal{B}_{\sqrt{p_l}\lambda_1 t}(\cdot)$ 
is applied to the full concatenated vector $(\vbeta_l^+, \vtheta_{l\bullet}^+)$, 
enforcing group-level selection. Here, 
$\mathcal{B}_\tau(u) = \left(1 - \tau/\|u\|_2\right)_+ u$ denotes the block 
soft-thresholding operator, which sets $u$ to zero if $\|u\|_2 \leq \tau$ and otherwise 
shrinks it toward zero. This ordered sequence of shrinkage operations constitutes the 
proximal operator for the full SGPL block penalty.

The step size $t_l$ is chosen by backtracking line search: starting from an initial 
value, $t_l$ is contracted by a factor $\rho \in (0,1)$ until the full SGPL objective 
does not increase. This ensures monotone decrease of the objective at every accepted 
block update and guarantees termination in finitely many steps, since the quadratic 
upper bound becomes valid for $t_l \leq 1/L_l$.

\subsection{Algorithm Overview}\label{subsec:algoritm_overview}

The full algorithm is a pair of nested loops (Algorithm~\ref{alg:sgpl}). The outer loop 
cycles through groups $l = 1, \ldots, L$. Before solving each block sub-problem, a KKT 
screening check is applied: if the soft-thresholded partial gradient for group $l$ 
satisfies condition~(iii) of Proposition~\ref{prop:kkt}, the group is certified as zero, and the inner 
loop is skipped entirely. This is the group-level analogue of the active-set check used 
in the sparse-group Lasso, and it substantially reduces computation when many groups are 
inactive, as is typical under strong regularization.

The KKT screening test above is based on simplified blockwise inequalities
rather than the full KKT system of Proposition~\ref{prop:kkt}. Since the
SGPL penalty contains overlapping predictor-level, group-level, and
modifier-group penalties, the complete optimality conditions involve
additional subgradient contributions that are omitted from the screening
test. Consequently, the screening conditions should be interpreted as
conservative sufficient rules rather than necessary and sufficient KKT
characterizations. If a screening condition is satisfied, the corresponding
block is guaranteed to be zero at the optimum and its update can safely be
skipped. Conversely, if the screening condition is not satisfied, no
conclusion can be drawn from the screening test alone, since the omitted
subgradient contributions may still satisfy the full KKT conditions and
yield a zero solution.

For groups that are not screened out, the inner loop iterates the four-step proximal 
update described above until the block iterations converge to within tolerance 
$\epsilon_{\mathrm{in}}$. The accepted block update then replaces the corresponding 
entries of $(\vbeta^{(k)}, \mTheta^{(k)})$, and the outer loop proceeds to the next group. 
The outer loop terminates when the change in the full coefficient array falls below 
tolerance $\epsilon_{\mathrm{out}}$. By Proposition~\ref{prop:convergence}, every limit point of the iterates 
produced by this procedure is a global minimizer of the SGPL objective, and the objective 
decreases monotonically at every outer iteration.

\subsection{Pathwise Solution}\label{subsec:pathwise}

In practice, the algorithm is run along a decreasing sequence of $\lambda$ values, using 
the solution at the previous $\lambda$ as the warm start for the next. The path begins at
\begin{equation*}
\lambda_{\mathrm{max}} = \max_j \frac{1}{n(1-\alpha)}\left\|\left(\vx_j^\top \vy,\ 
\left(\mZ^\top(\vx_j \odot \vy)\right)^\top\right)^\top\right\|_2,
\end{equation*}
the smallest $\lambda$ for which the all-zero solution is optimal, and decreases to 
$\lambda_{\mathrm{min}} = c\,\lambda_{\mathrm{max}}$ for a small constant $c$ (we use 
$c = 0.005$ in our implementation) on a log-linear grid of 25 values. Warm starts make 
the pathwise procedure substantially more efficient than solving each $\lambda$ value 
from scratch, particularly in the high-regularization regime where the KKT screening 
eliminates most groups. The mixing parameter $\alpha$ is fixed throughout the path; in 
our simulation study we set $\alpha = 0.5$ to give equal weight to the group-structured 
$\ell_2$ penalties and the element-wise $\ell_1$ penalties. The regularization parameter 
$\lambda$ is then selected by cross-validation, as described in Section~\ref{sec:TuningParam}.

\FloatBarrier
\begin{algorithm}[H]
\scriptsize
\caption{Blockwise Coordinate-Descent Algorithm with Nested Proximal Updates
for the Sparse-Group Pliable Lasso}
\label{alg:sgpl}
\begin{algorithmic}[1]

\Require
Data $(\mX,\mZ,\vy)$;
penalty parameters $\lambda>0$, $\alpha\in[0,1]$;
group structures $\{\mathcal{G}_l^x\}_{l=1}^L$ and
$\{\mathcal{G}_g^z\}_{g=1}^G$;
initial step size $t>0$;
contraction factor $\rho\in(0,1)$;
convergence tolerances $\epsilon_{\rm out},\epsilon_{\rm in}>0$.

\Ensure
Estimated coefficients $\widehat{\vbeta}$ and $\widehat{\mTheta}$.

\State Set $\lambda_1=\lambda(1-\alpha)$ and $\lambda_2=\lambda\alpha$.
\State Initialize $\vbeta^{(0)} \leftarrow 0$, $\mTheta^{(0)} \leftarrow 0$, $\beta_0^{(0)} \leftarrow \bar y$, $\vtheta_0^{(0)} \leftarrow 0$, and $k \leftarrow 0$.

\Repeat
\State $k\leftarrow k+1$.
\State $\vbeta^{(k)}\leftarrow \vbeta^{(k-1)}$,
$\mTheta^{(k)}\leftarrow \mTheta^{(k-1)}$.
\State Update the unpenalized intercept parameters
   $(\beta_0^{(k)},\vtheta_0^{(k)})$
   from the least-squares regression of the current residual on
   $(\vone_n, \mZ)$.
   
\For{$l=1,\ldots,L$}

\State Form the partial residual excluding the $l$th $X$-group:
\[
\vr^{(-l)}
=
\vy -\beta_0^{(k)}\vone_n-\mZ\vtheta_0^{(k)}
-
\sum_{l'\neq l}
\mX_{l'}\odot
\left(
\vone_n\vbeta_{l'}^{(k)\top}
+
\mZ\vtheta_{l'\bullet}^{(k)}
\right)\vone_{p_{l'}}.
\]
% \State Check KKT conditions and skip inner loop if satisfied.
\State Initialise the block variables:
$
\qquad
\widetilde{\vbeta}_l\leftarrow \vbeta_l^{(k)},\qquad
\widetilde{\vtheta}_{l\bullet}\leftarrow \vtheta_{l\bullet}^{(k)},\qquad
t_l\leftarrow t.
$

\Repeat

\State Store old block values:
$
\qquad \quad
\vbeta_l^{\rm old}\leftarrow \widetilde{\vbeta}_l,\qquad
\vtheta_{l\bullet}^{\rm old}\leftarrow \widetilde{\vtheta}_{l\bullet}.
$

\State Compute the block residual:
\[
\vr_l
=
\vr^{(-l)}
-
\mX_l\odot
\left(
\vone_n\widetilde{\vbeta}_l^{\top}
+
\mZ\widetilde{\vtheta}_{l\bullet}
\right)\vone_{p_l}.
\]

\State Gradient step on the smooth loss:
\begin{align*}
\vbeta_l^+
\leftarrow
\widetilde{\vbeta}_l
+
\frac{t_l}{n}\mX_l^\top \vr_l,
\qquad 
\vtheta_{l\bullet}^+
\leftarrow
\widetilde{\vtheta}_{l\bullet}
+
\frac{t_l}{n}
\mZ^\top
\left(
\mX_l\odot \vr_l\bm{1}^\top
\right).
\end{align*}

\State Element-wise soft-thresholding:
\[
\beta_j^+ \leftarrow \mathcal{S}_{\lambda_2t_l}(\beta_j^+),
\qquad
\vtheta_j^+ \leftarrow \mathcal{S}_{\lambda_2t_l}(\vtheta_j^+),
\qquad j\in\mathcal{G}_l^x .
\]

\State Predictor-wise joint shrinkage:
\For{$j\in\mathcal{G}_l^x$}
\[
\begin{pmatrix}
\beta_j^+\\
\vtheta_j^+
\end{pmatrix}
\leftarrow
\mathcal{B}_{\lambda_1t_l}
\left(
\begin{pmatrix}
\beta_j^+\\
\vtheta_j^+
\end{pmatrix}
\right).
\]
\EndFor

\State Modifier-group shrinkage within the $l$th $X$-group:
\For{$g=1,\ldots,G$}
\[
\vtheta_{lg}^+
\leftarrow
\mathcal{B}_{w_{lg}\lambda_1t_l}
\left(
\vtheta_{lg}^+
\right),
\qquad
w_{lg}=\frac{\sqrt{p_g}}{\sqrt{1+K}}.
\]
\EndFor

\State Group-level joint shrinkage:
\[
\begin{pmatrix}
\vbeta_l^+\\
\vtheta_{l\bullet}^+
\end{pmatrix}
\leftarrow
\mathcal{B}_{\sqrt{p_l}\lambda_1t_l}
\left(
\begin{pmatrix}
\vbeta_l^+\\
\vtheta_{l\bullet}^+
\end{pmatrix}
\right).
\]

\State Evaluate the full objective after replacing only block $l$ by
$(\vbeta_l^+,\vtheta_{l\bullet}^+)$ and keeping all other blocks fixed.

\If{the full SGPL objective does not decrease}
\State $t_l\leftarrow \rho t_l$ and repeat the inner update from the gradient step.
\Else
\State $\widetilde{\vbeta}_l\leftarrow \vbeta_l^+$,
$\widetilde{\vtheta}_{l\bullet}\leftarrow \vtheta_{l\bullet}^+$.
\EndIf

\Until{
\[
\max\left\{
\|\widetilde{\vbeta}_l-\vbeta_l^{\rm old}\|_\infty,
\|\widetilde{\vtheta}_{l\bullet}-\vtheta_{l\bullet}^{\rm old}\|_\infty
\right\}
<
\epsilon_{\rm in}
\]
}

\State Accept the block update:
$
\qquad
\vbeta_l^{(k)}\leftarrow \widetilde{\vbeta}_l,\qquad
\vtheta_{l\bullet}^{(k)}\leftarrow \widetilde{\vtheta}_{l\bullet}.
$

\EndFor

\Until{
\[
\max\left\{
\|\vbeta^{(k)}-\vbeta^{(k-1)}\|_\infty,
\|\mTheta^{(k)}-\mTheta^{(k-1)}\|_\infty
\right\}
<
\epsilon_{\rm out}
\]
}

\State \Return $\widehat{\beta}_0 \leftarrow \beta_0^{(k)}$,
$\widehat{\vtheta}_0 \leftarrow \vtheta_0^{(k)}$, $\widehat{\vbeta}\leftarrow \vbeta^{(k)}$,
$\widehat{\mTheta}\leftarrow \mTheta^{(k)}$.

\end{algorithmic}
\end{algorithm}
\FloatBarrier

% ========================================
%% Section: EXTENSIONS TO OTHER MODELS
% ========================================
\section{Extensions to other models}\label{sec:extensionModels}
With little effort, we can extend the SGPL penalty to other models. If the likelihood function, $L(\vbeta,\mTheta)$, for the model of interest is log-concave, then for the SPGL, we minimize

\begin{equation*}
\begin{aligned}
J(\vbeta, \mTheta)
&=
\ell(\vbeta,\mTheta) + \lambda(1-\alpha)
\left(
    \sum_{j=1}^{p} \|(\beta_j, \vtheta_j)\|_2
    + \sum_{l=1}^{L} \sqrt{p_l} \,
      \|(\vbeta_l, \vtheta_{l\bullet})\|_2
\right. \\
&\qquad \left.
    + \sum_{l=1}^{L} \sum_{g=1}^{G}
      \frac{\sqrt{p_g}}{\sqrt{1+K}}
      \|\vtheta_{lg}\|_2
\right)  + \lambda\alpha
\left(
    \|\vbeta\|_1
    + \sum_{j=1}^{p} \|\vtheta_j\|_1
\right),
\end{aligned}
\label{eq:SGPL_ExtenModel_1}
\end{equation*}
where $\ell(\vbeta,\mTheta) = -1/n\log(L(\vbeta,\mTheta))$. Two commonly used cases are logistic regression and the Cox model for survival data.

For logistic regression, we have an n-vector of binary response variable $\vy$, $n \times p$ matrix $\mX$ of main predictors, and $n \times K$ matrix of modifying variables. In this case, letting $\eta_i =
\sum_{j=1}^p x_{ij}\left(\beta_j + \vz_i^\top \vtheta_j\right)$, the SGPL objective function takes the form
\begin{equation*}
\begin{aligned}
J(\vbeta, \mTheta)
&=
\frac{1}{n}
\sum_{i=1}^n
\left[
\log\{1+\exp(\eta_i)\}
-
y_i\eta_i
\right] \\& 
+ \lambda(1-\alpha)
\left(
    \sum_{j=1}^{p} \|(\beta_j, \vtheta_j)\|_2
    + \sum_{l=1}^{L} \sqrt{p_l} \,
      \|(\vbeta_l, \vtheta_{l\bullet})\|_2
\right. \\
&\qquad \left.
    + \sum_{l=1}^{L} \sum_{g=1}^{G}
      \frac{\sqrt{p_g}}{\sqrt{1+K}}
      \|\vtheta_{lg}\|_2
\right)  + \lambda\alpha
\left(
    \|\vbeta\|_1
    + \sum_{j=1}^{p} \|\vtheta_j\|_1
\right).
\end{aligned}
\label{eq:SGPL_ExtenModel_2}
\end{equation*}

The algorithm for fitting this model follows the same blockwise coordinate descent
scheme as for the Gaussian case (Algorithm~\ref{alg:sgpl}), with the smooth squared-error loss
replaced by $\ell(\vbeta,\mTheta)$ above. Defining the fitted probability for observation
$i$ as
\[
  \mu_i = \frac{\exp(\widehat{y}_i)}{1 + \exp(\widehat{y}_i)} = \sigma(\widehat{y}_i),
\]
where $\sigma(\cdot)$ denotes the logistic sigmoid function, the gradient of $\ell$
with respect to $\beta_j$ and $\vtheta_j$ for a given block $j$ is
\begin{align*}
  \nabla_{\beta_j}\ell
  &= \frac{1}{n}\sum_{i=1}^{n} (\mu_i - y_i)\, x_{ij}, \\[6pt]
  \nabla_{\theta_j}\ell
  &= \frac{1}{n}\sum_{i=1}^{n} (\mu_i - y_i)\, x_{ij}\, \vz_i.
\end{align*}
These gradients replace the linear residual-based gradients in Step~12 of
Algorithm~\ref{alg:sgpl}. Specifically, the gradient step becomes
\begin{align*}
  \vbeta_l^+ &\leftarrow \widetilde{\vbeta}_l
    - \frac{t_l}{n} \mX_l^\top (\vmu - \vy), \\[4pt]
  \mTheta_l^+ &\leftarrow \widetilde{\mTheta}_l
    - \frac{t_l}{n} \mZ^\top \operatorname{diag}(\vmu - \vy)\, \mX_l,
\end{align*}
where $\mX_l$ is the $n \times p_l$ submatrix of $\mX$ for group $l$.
All four proximal shrinkage steps (Steps~13--18 of Algorithm~\ref{alg:sgpl}) remain unchanged. At each outer iteration, the unpenalized intercept parameters $(\beta_0, \vtheta_0)$ are
updated by fitting an unpenalized logistic regression of $\vy$ on $(\vone_n, \mZ)$ with the
current penalized fitted values $\widehat{\eta}_{\text{pen}} = \sum_{j=1}^p \vx_j \odot
(\widehat{\beta}_j \vone_n + \mZ\widehat{\vtheta}_j)$ as an offset, replacing the least-squares update
in Step~6 of the Algorithm \ref{alg:sgpl}.
 
The Lipschitz constant of $\nabla_{\beta_j,\theta_j}\ell$ for block $l$ is
\[
  L_l = \frac{1}{4n}\|\widetilde{\mX}_l\|_{\mathrm{op}}^2,
\]
where $\widetilde{\mX}_l$ is the augmented design submatrix for group $l$ and the factor
$\tfrac{1}{4}$ replaces the factor of $1$ in the Gaussian case, since the second
derivative of the logistic log-likelihood is bounded above by $\tfrac{1}{4}$.
The backtracking line search in Steps~20--21 of Algorithm~\ref{alg:sgpl} handles this automatically
without requiring $L_l$ to be computed explicitly.

For survival data, let $y_i > 0$ denote the observed time (either a failure time or a
censoring time) for observation $i$, and let $\delta_i \in \{0, 1\}$ indicate whether
observation $i$ experienced the event ($\delta_i = 1$) or was censored ($\delta_i = 0$).
Let $D = \{i : \delta_i = 1\}$ denote the index set of observed failures, and let
\[
  R_i = \{j : y_j \geq y_i\}
\]
denote the risk set at failure time $y_i$, that is, the set of all individuals still at
risk just before time $y_i$. Under the Cox proportional hazards model, the hazard for individual $i$ at time $t$ is
$$
  h_i(t) = h_0(t)\exp\!\Bigl(\widehat{y}_i\Bigr),
  \qquad
  \widehat{y}_i = \sum_{j=1}^{p} x_{ij}\bigl(\beta_j + \vz_i^{\top}\theta_j\bigr),
$$
where $h_0(t)$ is an unspecified baseline hazard. The negative partial log-likelihood,
averaged over all failure events given $\eta_i = \beta_0 + \vz_j \vtheta_0 + 
\sum_{j=1}^p x_{ij}\left(\beta_j + \vz_i \vtheta_j\right)$, is
$$
  \ell(\vbeta, \mTheta)
  \;=\;
  -\frac{1}{n}\sum_{i \in D}
  \Bigl[
    \eta_i
    \;-\;
    \log\!\sum_{m \in R_i}
    \exp\!\bigl(\eta_m\bigr)
  \Bigr].
$$
 
The SGPL objective for the Cox model is then obtained by adding the SGPL penalty to
$\ell(\vbeta,\mTheta)$:
\begin{align}
  J(\vbeta, \mTheta)
  &= \frac{1}{n}\sum_{i \in D}
  \Bigl[
    \log\!\sum_{m \in R_i}
    \exp\!\bigl(\eta_m\bigr) - \eta_i \Bigr] \notag \\
  &\quad
    + \lambda(1-\alpha)
      \Biggl[
        \sum_{j=1}^{p} \bigl\|\bigl(\beta_j,\, \vtheta_j\bigr)\bigr\|_2
        + \sum_{l=1}^{L} \sqrt{p_l}\,
          \bigl\|\bigl(\vbeta_l,\, \vtheta_{l\bullet}\bigr)\bigr\|_2
         \notag \\
  &\quad
  + \sum_{l=1}^{L}\sum_{g=1}^{G}
          \frac{\sqrt{p_g}}{\sqrt{1+K}}\,
          \bigl\|\vtheta_{lg}\bigr\|_2
      \Biggr] \notag \\
  &\quad
    + \lambda\alpha
      \Biggl[
        \|\vbeta\|_1
        + \sum_{j=1}^{p} \|\vtheta_j\|_1
      \Biggr].
  \label{eq:sgpl-cox}
\end{align}
The algorithm for fitting this model follows the same blockwise coordinate descent
scheme as for the Gaussian case (Algorithm~\ref{alg:sgpl}), replacing the smooth squared-error loss
with $\ell(\vbeta,\mTheta)$ above. The gradient of $\ell$ with respect to $\beta_j$ and
$\vtheta_j$ for a given block $j$ is
\begin{align*}
  \nabla_{\beta_j}\ell
  &= -\frac{1}{n}\sum_{i\in D}
     \Biggl[
       x_{ij}
       - \frac{\displaystyle\sum_{k\in R_i} x_{kj}
               \exp\!\Bigl(\sum_{j'} \vx_{kj'}(\vbeta_{j'} + \vz_k^\top\vtheta_{j'})\Bigr)}
              {\displaystyle\sum_{k\in R_i}
               \exp\!\Bigl(\sum_{j'} \vx_{kj'}(\vbeta_{j'} + \vz_k^\top\vtheta_{j'})\Bigr)}
     \Biggr], \\[6pt]
  \nabla_{\vtheta_j}\ell
  &= -\frac{1}{n}\sum_{i\in D}
     \Biggl[
       x_{ij} \vz_i
       - \frac{\displaystyle\sum_{k\in R_i} x_{kj}\,\vz_k\,
               \exp\!\Bigl(\sum_{j'} \vx_{kj'}(\vbeta_{j'} + \vz_k^\top\vtheta_{j'})\Bigr)}
              {\displaystyle\sum_{k\in R_i}
               \exp\!\Bigl(\sum_{j'} x_{kj'}(\vbeta_{j'} + \vz_k^\top\vtheta_{j'})\Bigr)}
     \Biggr].
\end{align*}
These gradients replace the linear residual-based gradients in Step~12 of
Algorithm~\ref{alg:sgpl}; all four proximal shrinkage steps remain unchanged. Since $\beta_0$ and $\vtheta_0$ enter the linear predictor only as a common additive shift, they cancel from the partial likelihood ratio and are therefore unidentified;
accordingly, Step~6 of Algorithm \ref{alg:sgpl} is omitted and $(\beta_0, \vtheta_0)$ are fixed at zero throughout.

% ========================================
%% Section: Simulation
% =========================================

\section{Simulation Study}\label{sec:simulation}

We assess the performance of the proposed SGPL through a controlled
simulation study designed to evaluate its predictive accuracy,
coefficient estimation, support recovery, and hierarchical behaviour.
The data-generating mechanism incorporates grouped predictors,
grouped modifying variables, sparse main effects, and sparse interaction
effects, thereby reflecting the structural assumptions underlying the
SGPL model. In particular, the simulation is designed to investigate whether SGPL can simultaneously achieve accurate prediction, recover the true active predictors and interactions, and exploit the group structure present in both the predictor and modifying-variable spaces. We also examine the effect of the mixing parameter $\alpha$ on model performance and compare SGPL with the PL and GPL to assess the benefits of combining group-level and within-group sparsity. Data are generated from the linear varying-coefficient model \eqref{linear_model_PL} with $p = 12$ main predictors, $K = 6$ modifying variables, $n = 300$ training observations, and SNR $= 3$ ($\sigma \approx 0.694$). An independent test set of $n_{\text{test}} = 500$ observations is used solely for predictive evaluation. The main predictors are partitioned into $L = 3$ groups of equal size $p_l = 4$, and the modifying variables into $G = 2$ groups of equal size $p_g = 3$:
\begin{align*}
    \mathcal{G}_1^x &= \{x_1, x_2, x_3, x_4\}, \quad
    \mathcal{G}_2^x = \{x_5, x_6, x_7, x_8\}, \quad
    \mathcal{G}_3^x = \{x_9, x_{10}, x_{11}, x_{12}\}, \\
    \mathcal{G}_1^z &= \{z_1, z_2, z_3\}, \quad
    \mathcal{G}_2^z = \{z_4, z_5, z_6\}.
\end{align*}
Both $\mX$ and $\mZ$ are generated from multivariate normal distributions with mean zero and Toeplitz covariance $(\mSigma_x)_{ij} = (\mSigma_z)_{ij} = 0.4^{|i-j|}$, inducing mild within-group correlation. All columns are standardized prior to fitting. The true main effects $\vbeta$ are: 

\begin{equation}
    \vbeta^* =
    \big(\underbrace{1.5,\; -1.0,\; 0.8,\; 0}_{\mathcal{G}_1^x},\;
         \underbrace{0,\; 0,\; 1.2,\; -0.9}_{\mathcal{G}_2^x},\;
         \underbrace{0,\; 0,\; 0,\; 0}_{\mathcal{G}_3^x}\big)^\top.
\end{equation}
Group~3 is entirely inactive (testing whole-group zeroing); $x_4 \in \mathcal{G}_1^x$ and $x_5, x_6 \in \mathcal{G}_2^x$ are zero within active groups (testing within-group sparsity). The nonzero blocks of $\mTheta^* \in \mathbb{R}^{6 \times 12}$ are:
\begin{equation}
    \mTheta^*_{\mathcal{G}_1^z,\, x_1} =
    \begin{pmatrix} 0.6 \\ -0.4 \\ 0.5 \end{pmatrix}, \qquad
    \mTheta^*_{\mathcal{G}_1^z,\, x_3} =
    \begin{pmatrix} -0.5 \\ 0.3 \\ 0.0 \end{pmatrix}, \qquad
    \mTheta^*_{\mathcal{G}_2^z,\, x_7} =
    \begin{pmatrix} 0.7 \\ -0.5 \\ 0.4 \end{pmatrix},
\end{equation}
with all other entries zero. Nonzero entries appear only for predictors with $\beta_j^* \neq 0$, satisfying the two-level hierarchy. The zero entry $\theta^*_{3,3} = 0$ within the active block $(\mathcal{G}_1^z, x_3)$ tests within-block sparsity of $\mTheta$. Table~\ref{tab:sim_structure} maps each structural feature to its corresponding SGPL mechanism.

\begin{table}[H]
\centering
\caption{Structural features encoded in the true parameters and the corresponding SGPL mechanism.}
\label{tab:sim_structure}
\renewcommand{\arraystretch}{1.3}
\begin{tabular}{lll}
\hline
\textbf{Feature} & \textbf{Example in truth} & \textbf{Mechanism in SGPL} \\
\hline
Whole-group inactivity &
    $\vbeta_3 = \mathbf{0}$, $\mTheta_3 = \mathbf{0}$ &
    Joint group term \\
Group-level pliability &
    $\vtheta_{l\bullet} \neq \mathbf{0} \Rightarrow \vbeta_l \neq \mathbf{0}$ &
    Joint group term \\
Predictor-level pliability &
    $\mTheta_j \neq \mathbf{0} \Rightarrow \beta_j \neq 0$ &
    Per-predictor coupling \\
Within-group sparsity ($\vbeta$) &
    $\beta_4 = \beta_5 = \beta_6 = 0$ in active groups &
    $\ell_1$ penalty on $\vbeta$ \\
Within-block sparsity ($\mTheta$) &
    $\theta^*_{3,3} = 0$ in active block &
    $\ell_1$ penalty on $\mTheta$ \\
\hline
\end{tabular}
\end{table}

\subsection{Tuning Parameter Selection}\label{sec:TuningParam}

The mixing parameter is fixed at $\alpha = 0.5$ throughout the main simulation and is therefore not tuned by cross-validation; its sensitivity is examined separately in Section~\ref{sec:alpha_sensitivity}. The regularization parameter $\lambda$ is selected by five-fold cross-validation over a log-spaced grid of 25 values between $0.5\lambda_{\max}$ and $0.005\lambda_{\max}$, where $\lambda_{\max}
=
\max_{j=1,\ldots,p}
\left\|
\bigl(\vx_j^\top \vy,\,
\mZ^\top(\vx_j\odot \vy)\bigr)
\right\|_2
\big/
\{n(1-\alpha)\}$. Two solutions are considered: the CV minimiser $\widehat{\lambda}_{\min}$ and the one-standard-error solution $\widehat{\lambda}_{1\text{se}}$, which yields a sparser model. Performance is evaluated on the held-out test set via test MSE and $R^2$ (predictive accuracy), coefficient MSE (recovery), and precision, recall, and $F_1$ for both $\vbeta$ and $\mTheta$ (support recovery). We also verify whether the fitted solution satisfies the predictor-level
hierarchy
$\widehat{\mTheta}_j \neq \mathbf{0}
\Rightarrow
\widehat{\beta}_j \neq 0$
for all $j$.

\subsection{Results}

\subsubsection{Comparison with Existing Methods}
To benchmark the proposed SGPL against existing varying-coefficient regularization methods, we repeated the above simulation over 
50 independently generated datasets. The SGPL was compared with the 
pliable Lasso \citep[PL;][]{tibshirani2020pliable} and the group pliable 
Lasso \citep[GPL;][]{kim2021svreg}. For each method, the tuning parameter 
$\lambda$ was selected by five-fold cross-validation at $\alpha = 0.5$. 
The results are summarized as mean (standard deviation) in 
Table~\ref{tab:comparison_methods}.

All three methods achieve perfect recall for both $\vbeta$ and $\mTheta$ 
across all replications, confirming that no true active effect is missed 
under the simulation design. The methods differ in precision and 
estimation accuracy. SGPL attains the smallest MSE for $\Theta$, approximately halving the interaction estimation error relative to both competing methods. The GPL yields the smallest precision for $\vbeta$  
(0.500), which is expected since its group-level penalty cannot zero out 
individual predictors within active groups. SGPL's smaller precision for $\mTheta$ relative to the PL reflects the tendency of the group-structured 
$\ell_2$ penalties to admit entire interaction blocks, a known feature of 
group penalization that can be addressed via post-selection debiasing. 
The larger MSE for $\vbeta$ under SGPL compared to PL is consistent with the 
additional $\ell_1$ shrinkage on $\vbeta$ introduced by the SGPL penalty.

\begin{table}[H]
\footnotesize
\centering
\caption{Comparison of the Pliable Lasso (PL), Sparse Group Pliable Lasso (SGPL), and sparse varying-coefficient regression (GPL) over 50 simulation replications. Predictive performance is assessed by the test mean squared error (Test MSE) and coefficient of determination ($R^2$). Support recovery is evaluated using recall (Rec.) and precision (Prec.) for both the main effects ($\beta$) and interaction effects ($\Theta$). Coefficient estimation accuracy is measured by the mean squared error (MSE) of the estimated main and interaction coefficients. Entries are reported as mean (standard deviation).}
\label{tab:comparison_methods}
\textbf{(a) Predictive performance and support recovery}

\vspace{0.1cm}

\begin{tabular}{lcccccc}
\toprule
& \multicolumn{2}{c}{Predictive}
& \multicolumn{4}{c}{Support recovery} \\
\cmidrule(lr){2-3} \cmidrule(lr){4-7}
Method
& Test MSE
& $R^2$
& $\beta$ Rec.
& $\beta$ Prec.
& $\Theta$ Rec.
& $\Theta$ Prec. \\
\midrule
Pliable Lasso
& 0.748 (0.093)
& 0.865 (0.017)
& 1.000 (0.000)
& 0.661 (0.129)
& 1.000 (0.000)
& 0.486 (0.161) \\
SGPL
& 0.725 (0.079)
& 0.869 (0.015)
& 1.000 (0.000)
& 0.633 (0.141)
& 1.000 (0.000)
& 0.226 (0.066) \\
svReg (GPL)
& 0.764 (0.092)
& 0.862 (0.017)
& 1.000 (0.000)
& 0.500 (0.103)
& 1.000 (0.000)
& 0.309 (0.084) \\
\bottomrule
\end{tabular}

\vspace{0.4cm}

\textbf{(b) Coefficient estimation accuracy}

\vspace{0.1cm}

\begin{tabular}{lcc}
\toprule
& \multicolumn{2}{c}{Coef.\ error} \\
\cmidrule(lr){2-3}
Method
& $\beta$ MSE
& $\Theta$ MSE \\
\midrule
Pliable Lasso
& 0.004 (0.002)
& 0.004 (0.002) \\
SGPL
& 0.010 (0.004)
& 0.002 (0.001) \\
svReg (GPL)
& 0.006 (0.003)
& 0.004 (0.002) \\
\bottomrule
\end{tabular}
\end{table}

%------------------
% subsection: Alpha values check
%-------------------

\subsubsection{Sensitivity to \texorpdfstring{$\alpha$}{alpha}}
\label{sec:alpha_sensitivity}

The mixing parameter $\alpha$ controls the balance between the
group-structured $\ell_2$ penalty and the element-wise $\ell_1$
penalty. Smaller values of $\alpha$ place greater emphasis on group
selection, whereas larger values encourage stronger within-group
sparsity. To investigate its effect on model performance, we
considered $\alpha \in \{0.1,0.3,0.5,0.7,0.9\}$ and evaluated the
SGPL model over 50 independently generated datasets. For each
replication and each value of $\alpha$, the tuning parameter
$\lambda$ was selected using five-fold cross-validation with the
one-standard-error rule ($\widehat{\lambda}_{\mathrm{1se}}$). Table
\ref{tab:alpha_sensitivity} reports the mean and standard deviation
of the performance measures across the 50 replications.

\begin{table}[ht]
\centering
\small
\begin{tabular}{ccccccc}
\toprule
$\alpha$
& Test MSE
& $R^2$
& $\vbeta$ $F_1$
& $\mTheta$ $F_1$
& $\vbeta$ MSE
& $\mTheta$ MSE \\
\midrule
0.1 & 0.755 (0.079) & 0.868 (0.015) & 0.640 (0.074) & 0.231 (0.039) & 0.0113 (0.0045) & 0.0026 (0.0008) \\
0.3 & 0.742 (0.080) & 0.870 (0.015) & 0.681 (0.083) & 0.273 (0.052) & 0.0105 (0.0046) & 0.0024 (0.0008) \\
0.5 & 0.731 (0.082) & 0.872 (0.015) & 0.738 (0.082) & 0.337 (0.071) & 0.0095 (0.0044) & 0.0023 (0.0008) \\
0.7 & 0.721 (0.083) & 0.874 (0.015) & 0.796 (0.091) & 0.409 (0.071) & 0.0081 (0.0039) & 0.0022 (0.0009) \\
0.9 & 0.715 (0.084) & 0.875 (0.015) & 0.849 (0.088) & 0.508 (0.096) & 0.0068 (0.0036) & 0.0023 (0.0009) \\
\bottomrule
\end{tabular}
\caption{Sensitivity analysis of the SGPL model with respect to the
mixing parameter $\alpha$. Results are reported as mean (standard
deviation) over 50 independently generated datasets.}
\label{tab:alpha_sensitivity}
\end{table}

Table~\ref{tab:alpha_sensitivity} shows a clear monotonic effect of
the mixing parameter. As $\alpha$ increases, the average test MSE
decreases from 0.755 to 0.715, while the corresponding $R^2$
increases from 0.868 to 0.875, indicating a gradual improvement in
predictive performance. Similarly, support recovery improves with
larger values of $\alpha$, with the $F_1$ scores for both the main
effects ($\vbeta$) and interaction effects ($\mTheta$) increasing
steadily across the range of $\alpha$ considered. The mean squared
error of the estimated main-effect coefficients also decreases
monotonically as $\alpha$ increases, while the interaction
coefficient error remains small throughout.

Overall, this simulation suggests that larger values of $\alpha$
produce sparser and more accurate models for the data-generating
mechanism considered here. Unless otherwise stated, we adopt
$\alpha=0.5$ throughout the remainder of the paper as a neutral
default that assigns equal weight to the group and element-wise
penalty components, while noting that larger values of $\alpha$
may provide improved predictive performance and support recovery in
this particular simulation setting.

%-----------------------------------
% Section: Benchmark dataset
%-----------------------------------

\section{Benchmark dataset}\label{sec:benchmark_dataset}

The relative performance of PL, GPL, and SGPL depends on how well the prespecified grouping reflects the underlying signal structure. When groups contain predictors that jointly contribute to the response or exhibit similar patterns of effect modification, the group-based regularisation in GPL and SGPL can improve estimation by borrowing strength across related predictors. In contrast, when the grouping is only weakly informative or the signal is concentrated in a small subset of predictors within large groups, the additional group penalty may provide little advantage over PL. Similar observations have been reported for the sparse-group Lasso, where the preferred method depends on the extent to which the grouping captures the true signal structure rather than being universally superior \citep{simon2013sparse}. We illustrate this trade-off using an application to adrenocortical carcinoma (ACC) copy-number alteration data, in which the main predictors are naturally grouped by chromosome.

The ACC data are obtained from The Cancer Genome Atlas (TCGA) through
the \CRANpkg{curatedTCGAData} package \citep{TCGAData}. The response
variable is the normalized expression of \textit{SNRPB2}, while the main
predictors are gene-level GISTIC2 copy-number measurements, initially
available for more than 27,000 genes. After removing genes with missing
values or zero variance, the 500 genes with the largest variances are
retained, giving $p=500$ main predictors and $n=75$ complete samples.
We refer to this dataset as ACC-CN500. This results in a high-dimensional
setting with $p>n$, with substantial correlation among the copy-number
predictors, as is typical of genomic copy-number data.

For SGPL, the 500 main predictors are grouped according to their
chromosomal locations. The retained genes occur on chromosomes 4, 5,
12, 14, 16, 19, and 20, yielding $L=7$ predictor groups with respective
sizes $57$, $350$, $48$, $1$, $2$, $40$, and $2$. The modifying variables
are age, sex, and pathologic stage. Age is represented by three spline
basis terms, sex by a single indicator, and pathologic stage by three
indicator variables corresponding to stages II--IV, with stage I as the
reference category. Hence the modifier design contains $K=7$ columns,
which are grouped according to their originating variables into $G=3$
modifier groups of sizes $3$, $1$, and $3$, corresponding respectively
to age, sex, and pathologic stage. This construction therefore provides
group structure in both the main predictors and the modifying variables,
as required by SGPL.

We compare SGPL with the original PL and the GPL. For each method, the regularization parameter is
selected using five-fold cross-validation on the full ACC-CN500 dataset, and predictive performance is summarized by the mean cross-validation
squared error across the five folds. For SGPL, we set $\alpha=0.5$ and use the original seven chromosome groups described above. The standard
deviation (SD) and standard error (SE) of the fold-specific errors are also reported as descriptive measures of variability across folds.
Because the implementations of the three methods generate their cross-validation folds internally, the exact fold assignments are not
guaranteed to coincide across methods; the results should therefore be interpreted as comparisons under the same five-fold cross-validation
protocol rather than as paired fold-wise comparisons.

\begin{table}[ht]
\centering
\caption{Five-fold cross-validation performance of the PL, GPL, and SGPL
on the ACC-CN500 dataset. Smaller mean CV error indicates better
predictive performance. SD and SE denote the standard deviation and
standard error of the five fold-specific prediction errors,
respectively.}
\label{tab:acc_comparison}
\begin{tabular}{lccc}
\toprule
Method & Mean CV error & SD & SE \\
\midrule
PL   & \textbf{0.3616} & 0.089 & 0.040 \\
GPL  & 0.3917          & 0.050 & 0.022 \\
SGPL & 0.4252          & 0.119 & 0.053 \\
\bottomrule
\end{tabular}
\end{table}

Table~\ref{tab:acc_comparison} shows that PL attains the smallest mean
cross-validation error, followed by GPL and SGPL. The mean CV errors are
$0.3616$, $0.3917$, and $0.4252$ for PL, GPL, and SGPL, respectively.
Thus, incorporating the chromosome-based group structure does not improve
predictive accuracy for this dataset. Nevertheless, the differences are
moderate rather than indicating a large separation in predictive
performance, and the fold-specific errors exhibit appreciable
variability, particularly for SGPL. Accordingly, these results should
not be interpreted as establishing a general ordering of the three
methods, but rather as showing that PL provides the best predictive
performance for this particular ACC application. Note that the publicly available GPL/svReg implementation required several corrections to produce numerically stable results on this dataset, including a rank-deficiency-robust solver for its within-group updates, corrected cross-validation standardization, and a substantially reduced step size relative to the package's default; further details are provided in the Supplement.

One possible explanation is that, although chromosome membership provides a natural and biologically interpretable grouping of the copy-number predictors, this positional grouping need not coincide with the grouping most informative for predicting \textit{SNRPB2} expression or for its effect modification by age, sex, and pathologic stage. In such a setting, the additional group regularization imposed by GPL and SGPL may offer limited predictive benefit relative to the more flexible predictor-wise regularization of PL. This is consistent with the broader behavior of the sparse-group methods: their advantage depends on the extent to which the
prespecified groups reflect the underlying predictive structure
\citep{simon2013sparse}.

% =========================================================================================================================================================

To further evaluate the proposed SGPL method on a real-world
application, we applied it to a Parkinson's disease (PD) microbiome
dataset analysed in \citet{xu2023nemoe}. The dataset consists of gut
microbiome abundance measurements together with dietary variables, with
the goal of studying how dietary patterns modify the relationship between
the gut microbiome and Parkinson's disease status. The analysis in
\citet{xu2023nemoe} was based on the Nutritional-Ecotype Mixture of
Experts (NEMoE) framework, which models latent dietary subgroups and
microbiome effects simultaneously.

The dataset contains microbiome composition measurements at the amplicon
sequence variant (ASV) level together with nutritional variables.
Following the preprocessing strategy of \citet{xu2023nemoe}, the
microbiome abundances were variance-stabilised using an arcsin square
root transformation and subsequently standardised to zero mean and unit
variance prior to modelling. The nutritional variables were similarly
standardised. The response variable is a binary indicator of Parkinson's
disease status ($n_{\text{PD}} = 95$, $n_{\text{HC}} = 73$), and
therefore the SGPL logistic regression model described in
Section~\ref{sec:extensionModels} was used.

In this application, the ASVs constitute the main predictors $\mX$,
while the nutritional variables constitute the modifying variables
$\mZ$. The dataset contains $n = 168$ individuals, $p = 101$ ASV
predictors, and $K = 27$ nutritional modifying variables. The ASVs were
grouped according to their genus-level taxonomy, yielding $L = 45$
microbiome groups, so that all ASVs belonging to the same genus were
assigned to the same group. The largest microbiome groups included
\textit{Bacteroides} (18 ASVs), \textit{Christensenellaceae R-7 group}
(7 ASVs), \textit{Alistipes} (5 ASVs), and \textit{Faecalibacterium}
(5 ASVs). The nutritional variables were partitioned into $G = 4$
biologically meaningful dietary groups corresponding to protein-related
variables, carbohydrate- and sugar-related variables, fat-related
variables, and other nutritional variables:
\[
\begin{aligned}
G^{z}_{1} &: \{\text{Prot},\ \%\text{EP},\ \text{P:C}\},\\
G^{z}_{2} &: \{\text{Fibre},\ \text{Sugars},\ \text{Add Sugar},\
             \text{Carb},\ \%\text{EC}\},\\
G^{z}_{3} &: \{\text{Fat},\ \%\text{EF}\},\\
G^{z}_{4} &: \{\text{Energy},\ \text{Moisture},\ \text{Alcohol},\
             \text{Calcium},\ \text{Iron},\ \text{Magn},\ \text{Potas},\
             \text{Sodium},\ \text{Zinc},\ \text{Retinol},\
             \text{Beta car},\ \\ & 
             \text{Vit A},\ \text{Thiamin},\
             \text{Ribo},\ \text{B12},\ \text{Folate},\ \text{Vit C}\}.
\end{aligned}
\]
This grouped structure naturally fits the SGPL framework, allowing
simultaneous sparsity at the genus level, sparsity within active genera,
and sparse diet-dependent microbiome interaction effects.

The SGPL model was fitted along a regularisation path of 25 $\lambda$
values at $\alpha = 0.5$, with $\lambda$ selected by five-fold
cross-validation. The cross-validated log-loss was minimised at
$\hat{\lambda}_{\min} = \hat{\lambda}_{1\text{se}} = 0.0743$, with a
training log-loss of $0.633$ at the selected solution. The fact that
$\hat{\lambda}_{\min}$ and $\hat{\lambda}_{1\text{se}}$ coincide
indicates that the cross-validation curve is flat near its minimum,
which is consistent with a modest but stable predictive signal in this
dataset --- a finding in line with the relatively limited performance
reported for single-level sparse logistic regression in
\citet{xu2023nemoe}.

Table~\ref{tab:benchmark_results} summarises the selected ASVs, their
estimated genus-level main effects $\hat{\beta}_j$, and the nonzero
interaction coefficients $\hat{\theta}_{jk}$ with their corresponding
nutritional modifying variables at the $\hat{\lambda}_{1\text{se}}$
solution. Six ASVs with nonzero main effects were selected, spanning
six distinct genera. A total of 162 nonzero ASV $\times$ nutrition
interaction terms were identified across 11 ASVs.

\begin{table}[ht]
\centering
\caption{Selected ASVs and estimated coefficients from the SGPL
logistic regression model applied to the Parkinson's disease microbiome
dataset at $\hat{\lambda}_{1\mathrm{se}}$. Only ASVs with at least one
nonzero coefficient are shown in Panel A. Panels B and C report
nutritional modifying variables with nonzero $\hat{\theta}_{jk}$ for
the two ASVs having the largest interaction magnitudes. Nutritional
group membership is indicated in parentheses.}
\label{tab:benchmark_results}

\setlength{\tabcolsep}{4pt}
\renewcommand{\arraystretch}{1.05}

% ==========================================================
% Panel A
% ==========================================================

\begin{tabularx}{\textwidth}{@{}X c r r@{}}
\toprule
\multicolumn{4}{@{}l}{\textbf{Panel A: Main effects}}\\
\midrule
\textbf{Genus}
&
\textbf{ASV}
&
$\boldsymbol{\hat{\beta}_j}$
&
\textbf{Direction}
\\
\midrule

\textit{Fusicatenibacter}
& o.23
& $-0.268$
& $\downarrow$ in PD
\\

\textit{Lachnospiraceae} ND3007 group
& o.88
& $-0.051$
& $\downarrow$ in PD
\\

\textit{Monoglobus}
& o.86
& $-0.006$
& $\downarrow$ in PD
\\

Unknown genus
& o.50
& $+0.00036$
& $\uparrow$ in PD
\\

\textit{UCG-002}
& o.96
& $+0.00013$
& $\uparrow$ in PD
\\

\textit{Streptococcus}
& o.220
& $+0.00003$
& $\uparrow$ in PD
\\

\bottomrule
\end{tabularx}

\vspace{0.5em}

% ==========================================================
% Panels B and C side by side
% ==========================================================

\begin{minipage}[t]{0.48\textwidth}
\centering

\begin{tabularx}{\linewidth}{@{}X c r@{}}
\toprule
\multicolumn{3}{@{}l}{
\textbf{Panel B: \textit{Fusicatenibacter} (ASV o.23)}
}\\
\midrule

\textbf{Nutritional variable}
&
\textbf{Group}
&
$\boldsymbol{\hat{\theta}_{jk}}$
\\
\midrule

\%EP (energy from protein)
& $G^z_1$
& $+0.0367$
\\

P:C (protein: carbohydrate)
& $G^z_1$
& $+0.0648$
\\

Add Sugar
& $G^z_2$
& $-0.0053$
\\

\%EC (energy from carb.)
& $G^z_2$
& $-0.0066$
\\

\bottomrule
\end{tabularx}

\end{minipage}
\hfill
\begin{minipage}[t]{0.48\textwidth}
\centering

\begin{tabularx}{\linewidth}{@{}X c r@{}}
\toprule
\multicolumn{3}{@{}l}{
\textbf{Panel C: \textit{Lachnospiraceae} NK4A136}
}\\
\multicolumn{3}{@{}l}{
\textbf{group (ASV o.100)}
}\\
\midrule

\textbf{Nutritional variable}
&
\textbf{Group}
&
$\boldsymbol{\hat{\theta}_{jk}}$
\\
\midrule

Energy
& $G^z_4$
& $-0.00208$
\\

Prot
& $G^z_1$
& $-0.00232$
\\

Fat
& $G^z_3$
& $-0.00298$
\\

Calcium
& $G^z_4$
& $-0.00140$
\\

Iron
& $G^z_4$
& $-0.00106$
\\

Magnesium
& $G^z_4$
& $-0.00318$
\\

Zinc
& $G^z_4$
& $-0.00267$
\\

Thiamin
& $G^z_4$
& $-0.00104$
\\

Riboflavin
& $G^z_4$
& $-0.00222$
\\

\%EF (energy from fat)
& $G^z_3$
& $-0.00021$
\\

\bottomrule
\end{tabularx}

\end{minipage}

\end{table}

The SGPL model identified \textit{Fusicatenibacter} as the genus with
the largest negative main effect on Parkinson's disease status
($\hat{\beta} = -0.268$), consistent with the findings of
\citet{xu2023nemoe}, who reported that this genus showed consistent
negative associations with PD across nutritional ecotypes and validated
this finding across eight independent PD microbiome datasets. The two
remaining genera with negative main effects, \textit{Lachnospiraceae}
ND3007 group and \textit{Monoglobus}, are also consistent with the
broader literature on gut microbiome depletion in PD.

Importantly, the estimated interaction coefficients in Panel~B of
Table~\ref{tab:benchmark_results} reveal that the negative association
of \textit{Fusicatenibacter} with PD is diet-dependent: higher protein
intake (Prot, P:C; $G^z_1$) strengthens the protective association
($\hat{\theta} > 0$, amplifying the already positive contribution to
the healthy class), whereas higher carbohydrate and added sugar intake
($G^z_2$) attenuate it ($\hat{\theta} < 0$). This pattern directly
mirrors the two nutritional ecotypes identified by NEMoE in
\citet{xu2023nemoe} --- the high-protein--low-carbohydrate
(PROT-carb) and low-protein--high-carbohydrate (prot-CARB) subgroups
--- and provides independent empirical support for the conclusion that
the protective role of \textit{Fusicatenibacter} is
diet-context-dependent. The modifying variables selected by SGPL for
this genus (\%EP, P:C, \%EC, and Added Sugar) are precisely the
nutritional drivers that \citet{xu2023nemoe} identified as the
dominant loadings in their gating network.

Panel~C of Table \ref{tab:benchmark_results} shows the interactions selected for a \textit{Lachnospiraceae}
NK4A136 group ASV (o.100), for which all selected interaction
coefficients are negative across energy, macronutrients, and
micronutrients. This suggests a broad pattern in which higher dietary
intake across multiple nutrient dimensions is associated with a
stronger protective effect of this taxon, though the small coefficient
magnitudes warrant cautious interpretation.

Across all selected interactions, the nutritional modifying variables
driving the largest effects belong to the protein and carbohydrate
groups ($G^z_1$ and $G^z_2$), consistent with the protein-to-carbohydrate
dietary axis that \citet{xu2023nemoe} identified as the primary
dimension of dietary heterogeneity in this cohort. Taken together,
the SGPL results corroborate the biological conclusions of
\citet{xu2023nemoe} while offering a complementary, single-stage
penalized regression perspective that avoids the need to pre-specify
the number of latent dietary subgroups.

%===================================
%% Section: Discussion
%=========================================

\section{Discussion}\label{sec:discussion}
In this study, we developed the sparse-group pliable Lasso (SGPL) for 
varying-coefficient regression with structured predictors and modifying 
variables. The SGPL integrates group-wise sparsity, within-group sparsity, 
and a two-level hierarchical structure within the pliable Lasso framework, 
enabling simultaneous selection of relevant groups, individual predictors 
within groups, and sparse interaction effects. 

We established several theoretical properties of the SGPL estimator, 
including convexity, existence of minimizers, uniqueness of fitted values, 
and KKT conditions characterizing sparsity patterns at both the predictor 
and group levels. Oracle-type prediction and estimation bounds were derived 
under a group restricted eigenvalue condition, providing theoretical 
justification for the hierarchical selection behavior of the SGPL. 
We also developed an efficient blockwise coordinate-descent algorithm 
with nested proximal updates. A key algorithmic contribution is the 
decomposition of the SGPL proximal operator into four sequential closed-form shrinkage steps that respect the penalty hierarchy, 
extending standard proximal-gradient theory to this more complex penalty structure.

The simulation study demonstrated that the SGPL effectively captures the key structural features of the true model: group-level sparsity, within-group sparsity, and the predictor-level pliability hierarchy. 
The method achieved perfect recall for both main effects and interaction effects and attained predictive performance close to the oracle model fitted on the true support. However, precision for the main effects (0.455) and interaction effects (0.157) at $\widehat{\lambda}_{\min}$ 
reflects the presence of small but nonzero estimates in truly inactive 
coefficients, a known feature of penalized estimators at the cross-validation minimum. These spurious entries are negligible in 
magnitude and can be reduced by selecting $\widehat{\lambda}_{1\mathrm{se}}$, 
which also more reliably promotes the hierarchical condition 
$\widehat{\vtheta}_j \neq 0 \Rightarrow \widehat{\beta}_j \neq 0$, consistent 
with the KKT threshold characterization of Proposition \ref{prop:kkt}. In 
applications where the hierarchical constraint is required, 
$\widehat{\lambda}_{1\mathrm{se}}$ is therefore the recommended choice.

The Parkinson's disease microbiome application demonstrated the 
practical utility of the SGPL. By treating microbiome abundances as grouped predictors and nutritional variables as modifying variables, the SGPL identified biologically meaningful microbiome--diet 
interactions in a single penalized regression stage. The selected 
variables and interaction patterns were highly consistent with the 
findings of \citet{xu2023nemoe}, particularly regarding the role of the 
protein-to-carbohydrate dietary axis and the protective association 
of \textit{Fusicatenibacter} with Parkinson's disease status. The SGPL offers a complementary perspective to mixture-based approaches 
such as NEMoE: rather than modeling latent dietary subgroups explicitly, 
it recovers diet-dependent microbiome effects through structured 
penalization over pre-specified, biologically interpretable groups---namely 
genus-level microbial taxonomy and dietary nutrient categories---without requiring the analyst to pre-specify the number of latent subgroups. In contrast, the adrenocortical carcinoma application illustrates that the benefit of group-structured regularization depends on how well the prespecified grouping aligns with the underlying signal structure. Although chromosome-based groups provide a biologically meaningful organization of the copy-number predictors, the pliable Lasso achieved lower cross-validation error than both the group pliable Lasso and the SGPL, suggesting that this positional grouping did not coincide with the structure most predictive of the response in this dataset.

Several directions for future work remain. First, although the SGPL framework can be extended to generalized linear and survival models, as described in Section \ref{sec:extensionModels}, deriving selection consistency 
and asymptotic normality results for these extensions would strengthen their theoretical foundation. Second, the shrinkage 
bias inherent to penalized estimation, noted in Section \ref{sec:simulation}, motivates the development of post-selection debiasing procedures tailored to the SGPL penalty structure. Third, scalability 
improvements for ultra-high-dimensional settings---including strong screening rules, distributed optimization, and parallel 
implementations---would broaden applicability to large-scale genomic and microbiome studies. Finally, extending the SGPL 
to longitudinal, functional, or federated learning settings represents a natural direction for structured heterogeneous 
modeling in complex data environments.

\section{Code availability}
The R code implementing the methods described in this paper is available on GitHub at \url{https://github.com/edelweiss611428/SGPL-code}.

\section{Competing interests}
The authors declare that they have no conflict of interest.

\section*{Acknowledgments}
We thank Jean Yang and Xiangnan Xu for helpful discussions regarding the Parkinson's disease dataset.

\newpage

{\small
\bibliographystyle{chicago}
%\bibliography{doi_url}
\bibliography{references}
}

\newpage

\appendix
%\section*{Supplementary material}

\begin{center}
{\Large
Supplementary material to ``A Sparse-Group Pliable Lasso''}\label{sec:Suppl}

\bigskip 

\if1\blind
{
by Mohammad Javad Davoudabadi, Kerrie Mengerson, Amirhossein Gatari, Mina Amingafari \&
Yuhao Li
} \fi

\bigskip 
%\date{\today}
\end{center}

\section{Theoretical proofs}\label{supp:proofs}
\subsection{Proposition 1}\label{supp:prop1}
\begin{proof}
\textbf{(i) Convexity.}
Write $J(\vbeta, \mTheta) = f(\vbeta, \mTheta) + g(\vbeta, \mTheta)$, where
\[
  f(\vbeta, \mTheta)
  = \frac{1}{2n}\|\vy - \widehat{\vy}(\vbeta,\mTheta)\|_2^2
\]
is the smooth loss and
\begin{align*}
  g(\vbeta,\mTheta)
  &= (1 - \alpha) \lambda \Big(\sum_{j=1}^{p} \|(\beta_j, \vtheta_j)\|_2
   + \sum_{l=1}^{L} \sqrt{p_l}
     \|(\vbeta_l, \operatorname{vec}(\mTheta_l))\|_2 \\
  &\quad
   +  \sum_{l=1}^{L}\sum_{g=1}^{G}
     \frac{\sqrt{p_g}}{\sqrt{1+K}}
     \|\operatorname{vec} (\mTheta_{lg})\|_2 \Big)
   + \alpha \lambda \Big( \|\vbeta\|_1
   + \sum_{j=1}^{p}\|\vtheta_j\|_1 \Big)
\end{align*}
is the penalty.
The fitted value is the linear map
$(\vbeta,\mTheta)\mapsto \widehat{\vy} = \widetilde{\mX}(\vbeta,\mTheta)^{\top}$,
where $\widetilde{\mX}$ is the $n\times p(1+K)$ matrix whose columns are
$\vx_j$ (for the $\beta_j$ component) and $\vx_j\odot\vz_k$ (for the
$\theta_{jk}$ component).
The smooth loss $f$ is therefore a composition of a quadratic with a linear
map, which is convex. Each term of $g$ is a norm or a positive linear
combination of norms, all of which are convex. Since the sum of convex
functions is convex, $J$ is convex.

\textbf{(ii) Existence.}
Since $J$ is convex and continuous, it remains to show that it is coercive.
If $\alpha>0$, then
\[
g(\vbeta,\mTheta)
\geq
\alpha\lambda
\bigl(
\|\vbeta\|_1+\|\operatorname{vec}(\mTheta)\|_1
\bigr),
\]
and hence $J(\vbeta,\mTheta)\to\infty$ as
$\|(\vbeta,\mTheta)\|\to\infty$.

If $\alpha=0$, the penalty still contains the joint norm terms
\[
\lambda\sum_{j=1}^p \|(\beta_j,\vtheta_j)\|_2,
\]
which implies that 
\[
g(\vbeta,\mTheta)
\geq
\lambda\sum_{j=1}^p \|(\beta_j,\vtheta_j)\|_2
\geq
\lambda
\left[
\sum_{j=1}^p \beta_j^2+\sum_{j=1}^p\|\vtheta_j\|_2^2
\right]^{1/2}
=
\lambda\|(\vbeta,\operatorname{vec}(\mTheta))\|_2.
\]
Therefore $J$ is coercive for all $\alpha\in[0,1]$.

Thus the sublevel set
\[
\mathcal{L}
=
\{(\vbeta,\mTheta):J(\vbeta,\mTheta)\leq J(\bm{0},\bm{0})\}
\]
is closed and bounded, hence compact. Since $J$ is continuous, it attains
its minimum on $\mathcal{L}$.

\noindent Note that
$
J(\bm{0},\bm{0})=\frac{1}{2n}\|\vy\|_2^2,
$
since all penalty terms vanish at the origin. This corresponds to the
objective value obtained by the zero-fitted model, and provides a finite
upper bound for the minimum of $J$.

\textbf{(iii) Uniqueness of fitted values.}
Suppose $(\vbeta^{(1)},\mTheta^{(1)})$ and
$(\vbeta^{(2)},\mTheta^{(2)})$ are two minimisers, with fitted values
$\widehat{\vy}^{(1)}$ and $\widehat{\vy}^{(2)}$. Let
\[
(\vbeta^{(m)},\mTheta^{(m)})
=
\frac{1}{2}
\{(\vbeta^{(1)},\mTheta^{(1)})
+
(\vbeta^{(2)},\mTheta^{(2)})\}.
\]
If $\widehat{\vy}^{(1)}\neq \widehat{\vy}^{(2)}$, then strict convexity of
the squared-error loss in the fitted value gives
\[
f(\vbeta^{(m)},\mTheta^{(m)})
<
\frac{1}{2}f(\vbeta^{(1)},\mTheta^{(1)})
+
\frac{1}{2}f(\vbeta^{(2)},\mTheta^{(2)}).
\]
By convexity of the penalty,
\[
g(\vbeta^{(m)},\mTheta^{(m)})
\leq
\frac{1}{2}g(\vbeta^{(1)},\mTheta^{(1)})
+
\frac{1}{2}g(\vbeta^{(2)},\mTheta^{(2)}).
\]
Let $J^* = J(\widehat{\vbeta}, \widehat{\mTheta}) =J(\vbeta^{(1)}, \mTheta^{(1)}) = J(\vbeta^{(2)}, \mTheta^{(2)})$ denote the minimum value of the objective function; therefore, we have
\[
J(\vbeta^{(m)},\mTheta^{(m)}) < J^*,
\]
which contradicts optimality. Hence
$\widehat{\vy}^{(1)}=\widehat{\vy}^{(2)}$, so the fitted value is unique.
\end{proof}

%===========================
% proof prop 2
%===========================
\subsection{Proposition 2}\label{supp:prop2}

\begin{proof}
Write $J(\vbeta,\mTheta)=f(\vbeta,\mTheta)+g(\vbeta,\mTheta)$, where
$f$ is the squared-error loss and $g$ is the SGPL penalty. Since $J$ is
convex, $(\widehat{\vbeta},\widehat{\mTheta})$ is a minimizer if and
only if
\[
\mathbf{0}\in \partial J(\widehat{\vbeta},\widehat{\mTheta}).
\]
Since $f$ is continuously differentiable and $g$ is convex,
\[
\partial J(\widehat{\vbeta},\widehat{\mTheta})
=
\nabla f(\widehat{\vbeta},\widehat{\mTheta})
+
\partial g(\widehat{\vbeta},\widehat{\mTheta}).
\]
Moreover, because $g$ is a sum of convex penalty terms, the
subdifferential sum rule for convex functions
\citep[Chapter D, Section 4.1]{hiriart2004fundamentals} gives the sum
of the subdifferentials of the predictor-level, group-level,
modifier-group, and element-wise $\ell_1$ penalty components.

The gradients of the smooth loss are
\[
\nabla_{\beta_j} f
=
-\frac{1}{n}\vx_j^\top\vr,
\qquad
\nabla_{\vtheta_j} f
=
-\frac{1}{n}\mZ^\top(\vx_j\odot\vr).
\]
Thus the negative gradient for the predictor block
$(\beta_j,\vtheta_j)$ is $\widehat{\vg}_j$.

The subdifferential of the Euclidean norm satisfies
\[
\partial \|\vz\|_2 =
\begin{cases}
\{\vz/\|\vz\|_2\}, & \vz\neq\mathbf{0},\\
\{\vu:\|\vu\|_2\leq 1\}, & \vz=\mathbf{0}.
\end{cases}
\]
Applying this identity to the predictor-level coupling term gives
$\lambda_1\vu_j$. Applying it to the group-level term
$\sqrt{p_l}\|(\vbeta_l,\operatorname{vec}(\mTheta_l))\|_2$ and
restricting the resulting subgradient to predictor $j$ gives
$\lambda_1\sqrt{p_l}\vv_{l,j}$. Applying the same identity to the
modifier-group terms
\[
w_g\|\operatorname{vec}(\mTheta_{lg})\|_2,
\qquad
w_g=\frac{\sqrt{p_g}}{\sqrt{1+K}},
\]
gives the contribution $\lambda_1\vh_j$. Finally, the element-wise
$\ell_1$ penalty contributes $\lambda_2\va_j$.

Combining these subgradient contributions with the smooth-gradient
condition gives~\eqref{eq:kkt_full}. Conversely, if
\eqref{eq:kkt_full} holds for all $j$, then
\[
\mathbf{0}
\in
\nabla f(\widehat{\vbeta},\widehat{\mTheta})
+
\partial g(\widehat{\vbeta},\widehat{\mTheta})
=
\partial J(\widehat{\vbeta},\widehat{\mTheta}),
\]
and hence $(\widehat{\vbeta},\widehat{\mTheta})$ is a minimizer by
convexity.
\end{proof}

%===========================
% proof Corollary
%===========================
\subsection{Corollary 1}\label{supp:Corollary1}
\begin{proof}
Both parts follow directly from the zero-block KKT conditions in
Proposition~\ref{prop:kkt}. In each case, the stated inequality is necessary and sufficient
for the corresponding block to be zero. The final statements follow by taking
the contrapositive.
\end{proof}

%=====================================
% Proof Proposition 3
%=====================================
\subsection{Proposition 3}\label{supp:prop3}

\begin{proof}
\textbf{(i) Monotone decrease.}
Fix outer iteration $k$. The algorithm cycles through groups
$l = 1, \ldots, L$. For each group $l$, the inner loop accepts a new
block $(\vbeta_l^+, \mTheta_l^+)$ only if the full objective satisfies
$J(\vbeta^+, \mTheta^+) \leq J(\vbeta, \mTheta)$, where
$(\vbeta^+, \mTheta^+)$ differs from the current iterate only in block $l$
(Steps 20--23 of Algorithm~\ref{alg:sgpl}). Therefore, each block update does
not increase the objective, and the objective after the full cycle over
all $L$ groups satisfies
$J(\vbeta^{(k)}, \mTheta^{(k)}) \leq J(\vbeta^{(k-1)}, \mTheta^{(k-1)})$.

\noindent \textbf{(ii) Global convergence.}
We show that every limit point of the sequence generated by Algorithm~\ref{alg:sgpl}
is a global minimiser of $J$.

First, by part~(i), the sequence $\{J(\vbeta^{(k)},\mTheta^{(k)})\}_{k\geq 0}$
is non-increasing and bounded below by the minimum value $J^*$. Hence it
converges to some finite limit $\bar J \geq J^*$. Moreover, since
\[
J(\vbeta^{(k)},\mTheta^{(k)})
\leq
J(\vbeta^{(0)},\mTheta^{(0)})
=
J(\bm{0},\bm{0}),
\]
all iterates lie in the sub-level set
\[
\mathcal{L}
=
\{(\vbeta,\mTheta):J(\vbeta,\mTheta)\leq J(\bm{0},\bm{0})\}.
\]
By Proposition~\ref{prop:convexity}, this sub-level set is compact. Therefore,
the sequence $\{(\vbeta^{(k)},\mTheta^{(k)})\}_{k\geq0}$ is bounded and has at
least one limit point.

Next, since $J(\vbeta^{(k)},\mTheta^{(k)})$ converges, the decrease over one
outer iteration vanishes:
\[
J(\vbeta^{(k)},\mTheta^{(k)})
-
J(\vbeta^{(k+1)},\mTheta^{(k+1)})
\longrightarrow 0.
\]
The algorithm updates the blocks sequentially and each accepted block update
does not increase the full objective. Hence the decrease produced by each
individual block update also tends to zero. By the sufficient-decrease
condition enforced by the backtracking line search, for an accepted update of
block $l$ we have
\[
J(\vbeta^{+},\mTheta^{+})
\leq
J(\widetilde{\vbeta},\widetilde{\mTheta})
-
c_l
\left\|
\begin{pmatrix}
\vbeta_l^{+}-\widetilde{\vbeta}_l\\
\operatorname{vec}(\mTheta_l^{+}-\widetilde{\mTheta}_l)
\end{pmatrix}
\right\|_2^2
\]
for some constant $c_l>0$, where $(\vbeta^{+},\mTheta^{+})$ differs from
$(\widetilde{\vbeta},\widetilde{\mTheta})$ only in block $l$. Also, $(\widetilde{\vbeta}_l, \widetilde{\mTheta}_l)$ denotes the current 
inner iterate for block $l$ before the update is accepted. Therefore,
\[
\|\vbeta_l^{(k+1)}-\vbeta_l^{(k)}\|_2
\longrightarrow 0,
\qquad
\|\mTheta_l^{(k+1)}-\mTheta_l^{(k)}\|_F
\longrightarrow 0,
\qquad l=1,\ldots,L.
\]

Now let $(\vbeta^*,\mTheta^*)$ be any limit point. Then there exists a
subsequence $\{(\vbeta^{(k_j)},\mTheta^{(k_j)})\}_{j\geq1}$ such that
\[
(\vbeta^{(k_j)},\mTheta^{(k_j)})
\longrightarrow
(\vbeta^*,\mTheta^*).
\]
For each block $l$, the update of the blockwise proximal-gradient satisfies the
first-order optimality condition
\[
\bm{0}
\in
\nabla_l f(\widetilde{\vbeta},\widetilde{\mTheta})
+
\frac{1}{t_l}
\Delta
+
\partial g_l(\vbeta_l^{+},\mTheta_l^{+}),
\]
where $\Delta = (\vbeta_l^+ - \widetilde{\vbeta}_l,
\operatorname{vec}(\mTheta_l^+ - \widetilde{\mTheta}_l))$, $g_l$ denotes the part of the SGPL penalty involving block
$(\vbeta_l,\mTheta_l)$, and $t_l$ is the step size (or learning rate). Along the convergent subsequence, the block increments
vanish, and by continuity of $\nabla_l f$ and closedness of the graph of the
convex sub-differential, passing to the limit gives
\[
\bm{0}
\in
\nabla_l f(\vbeta^*,\mTheta^*)+\partial g_l(\vbeta_l^*,\mTheta_l^*),
\qquad l=1,\ldots,L.
\]
Since the SGPL penalty is separable across the $X$-groups,
\[
g(\vbeta,\mTheta)=\sum_{l=1}^L g_l(\vbeta_l,\mTheta_l),
\]
the above blockwise inclusions imply
\[
\bm{0}
\in
\nabla f(\vbeta^*,\mTheta^*)+\partial g(\vbeta^*,\mTheta^*)
=
\partial J(\vbeta^*,\mTheta^*).
\]
Because $J$ is convex, the condition
$\bm{0}\in\partial J(\vbeta^*,\mTheta^*)$ is necessary and sufficient for
global optimality. Hence every limit point of the sequence generated by
Algorithm~\ref{alg:sgpl} is a minimiser of $J$.

\textbf{(iii) Backtracking termination.}
We verify termination of the backtracking line search at a generic inner
step. For fixed values of all blocks except block $l$, define the block
smooth loss as a function of
$
\vgamma_l =
(\vbeta_l^\top,\operatorname{vec}(\mTheta_l)^\top)^\top
$.
The gradient of the smooth loss is Lipschitz continuous in $\vgamma_l$ with
constant
\[
L_l = \frac{1}{n}\|\widetilde{\mX}_l\|_{\mathrm{op}}^2,
\]
where $\widetilde{\mX}_l$ is the augmented design matrix for group $l$,
and $\|\cdot\|_{\mathrm{op}}$ denotes the operator (spectral) norm,
i.e., the largest singular value of a matrix.

Hence, for any $t_l \leq 1/L_l$, the descent lemma gives
\[
f(\vgamma_l^+)
\leq
f(\widetilde{\vgamma}_l)
+
\langle \nabla_l f(\widetilde{\vgamma}_l),
\vgamma_l^+ - \widetilde{\vgamma}_l\rangle
+
\frac{1}{2t_l}
\|\vgamma_l^+ - \widetilde{\vgamma}_l\|_2^2 
\]
where $\langle . , . \rangle$ denotes the dot product. Therefore, for sufficiently small $t_l$, the quadratic upper bound used in
the proximal-gradient step is valid.

Since the backtracking procedure replaces $t_l$ by $\rho t_l$, with
$\rho\in(0,1)$, after finitely many reductions we must have
\[
t_l \leq \frac{1}{L_l}.
\]
Once $t_l \leq 1/L_l$, the quadratic upper bound holds and the proximal 
update produces a candidate $\vgamma_l^+$ satisfying the sufficient-decrease 
condition on the full SGPL objective (Steps~20--23 of 
Algorithm~\ref{alg:sgpl}), so the candidate is accepted. Hence the 
backtracking line search terminates in a finite number of steps. 

\end{proof}

% =====================================
% Proof Theorem 1
% ======================================
\subsection{Theorem 1}\label{supp:Theo1}

\begin{proof}
Let
\[
\vgamma^*
=
(\vbeta^{*\top},\operatorname{vec}(\mTheta^*)^\top)^\top,
\qquad
\widehat{\vgamma}
=
(\widehat{\vbeta}^{\top},\operatorname{vec}(\widehat{\mTheta})^\top)^\top,
\]
and define
\[
\vdelta=\widehat{\vgamma}-\vgamma^*.
\]
Then
\[
\widehat{\vy}-\vy^*
=
\widetilde{\mX}\vdelta.
\]

Since $(\widehat{\vbeta},\widehat{\mTheta})$ minimizes the SGPL objective,
\[
J(\widehat{\vbeta},\widehat{\mTheta})
\leq
J(\vbeta^*,\mTheta^*).
\]
Using $\vy = \vy^* + \vepsilon$, we expand
\[
\frac{1}{2n}\|\widehat{\vy} - \vy\|_2^2
=
\frac{1}{2n}\|\widehat{\vy} - \vy^* - \vepsilon\|_2^2
=
\frac{1}{2n}
\left(
\|\widehat{\vy} - \vy^*\|_2^2
- 2\langle \vepsilon, \widehat{\vy} - \vy^* \rangle
+ \|\vepsilon\|_2^2
\right).
\]

\noindent Substituting into the basic inequality
\[
\frac{1}{2n}\|\widehat{\vy} - \vy\|_2^2 + g(\widehat{\vgamma})
\le
\frac{1}{2n}\|\vy - \vy^*\|_2^2 + g(\vgamma^*),
\]
we obtain
\[
\frac{1}{2n}
\left(
\|\widehat{\vy} - \vy^*\|_2^2
- 2\langle \vepsilon, \widehat{\vy} - \vy^* \rangle
+ \|\vepsilon\|_2^2
\right)
+ g(\widehat{\vgamma})
\le
\frac{1}{2n}\|\vepsilon\|_2^2 + g(\vgamma^*),
\]
which can be rewritten
\[
\frac{1}{2n}
\|\widetilde{\mX}\vdelta\|_2^2
\leq
\frac{1}{n}
\langle \vepsilon,\widetilde{\mX}\vdelta\rangle
+
g(\vgamma^*)-g(\widehat{\vgamma}),
\]
where $g$ denotes the SGPL penalty.

\noindent We first control the stochastic term. For each column $\widetilde{\vx}_r$ of
$\widetilde{\mX}$,
\[
\widetilde{\vx}_r^\top \vepsilon
\sim
N(0,\sigma^2\|\widetilde{\vx}_r\|_2^2 \mI_n).
\]
By the column normalisation assumption
$\|\widetilde{\vx}_r\|_2^2/n\leq 1$, we have
\[
\frac{\widetilde{\vx}_r^\top \vepsilon}{n}
\sim
N\left(0,\frac{\sigma^2\|\widetilde{\vx}_r\|_2^2}{n^2} \mI_n\right),
\qquad
\frac{\sigma^2\|\widetilde{\vx}_r\|_2^2}{n^2}
\leq
\frac{\sigma^2}{n}.
\]
Hence, by a union bound over the $p(1+K)$ columns and the Gaussian tail bound,
for
\[
\lambda
=
A\sigma
\sqrt{\frac{\log\{p(1+K)\}}{n}},
\]
with $A>0$ sufficiently large,
\[
\left\|
\frac{1}{n}\widetilde{\mX}^{\top}\vepsilon
\right\|_{\infty}
\leq
\lambda
\]
with probability at least $1-2/\{p(1+K)\}$.

On this event, by applying the Holder's inequality,
\[
\frac{1}{n}
\langle \vepsilon,\widetilde{\mX}\vdelta\rangle
=
\left\langle
\frac{1}{n}\widetilde{\mX}^{\top}\vepsilon,
\vdelta
\right\rangle \leq \left\|
\frac{1}{n}\widetilde{\mX}^{\top}\vepsilon
\right\|_{\infty} \|\vdelta\|_1
\leq
\lambda\|\vdelta\|_1.
\]
where $\|\vdelta\|_1 = \|\vdelta_{\mathcal{H}}\|_1 + \|\vdelta_{\mathcal{H}^c}\|_1$. Since $\vdelta$ is partitioned into blocks
$\vdelta_j=(\delta_{\beta_j},\vdelta_{\vtheta_j})\in\mathbb{R}^{1+K}$, using the Cauchy-Schwarz inequality, we have
\[
\|\vdelta_{\mathcal{H}}\|_1
\leq
\sqrt{s_{\mathcal{H}}(1+K)}\,
\|\vdelta_{\mathcal{H}}\|_2.
\]

%----------------------------

\noindent For each inactive block $j\in\mathcal{H}^c$, by the Cauchy--Schwarz
inequality,
\[
\|\vdelta_j\|_1
\leq
\sqrt{1+K}\,\|\vdelta_j\|_2.
\]
Therefore,
\[
\|\vdelta_{\mathcal{H}^c}\|_1
=
\sum_{j\in\mathcal{H}^c}\|\vdelta_j\|_1
\leq
\sqrt{1+K}
\sum_{j\in\mathcal{H}^c}\|\vdelta_j\|_2
=
\sqrt{1+K}\,
\|\vdelta_{\mathcal{H}^c}\|_{2,1}.
\]

%%%%%%%%%%%%%%%%%%%%%%%%%%%

\noindent By the cone condition,
\[
\|\vdelta_{\mathcal{H}^c}\|_{2,1}
\le
\xi\,\|\vdelta_{\mathcal{H}}\|_{2,1}.
\]
Combining the two inequalities yields
\[
\|\vdelta_{\mathcal{H}^c}\|_1
\le
\sqrt{1+K}\,\xi\,\|\vdelta_{\mathcal{H}}\|_{2,1}.
\]

%------------------------------

\noindent Therefore,
\[
\|\vdelta\|_1
\leq
C_\xi
\sqrt{s_{\mathcal{H}}(1+K)}
\|\vdelta_{\mathcal{H}}\|_2,
\]
for a constant $C_\xi>0$ depending only on $\xi$.

Using the penalty comparison and the cone condition, the basic inequality can therefore be written as
\[
\frac{1}{2n}
\|\widetilde{\mX}\vdelta\|_2^2
\leq
C_1\lambda
\sqrt{s_{\mathcal{H}}(1+K)}
\|\vdelta_{\mathcal{H}}\|_2,
\]
where $C_1>0$ depends only on the cone constant and the penalty weights.

By Assumption~\ref{ass:gre},
\[
\frac{1}{n}
\|\widetilde{\mX}\vdelta\|_2^2
\geq
\phi^2\|\vdelta_{\mathcal{H}}\|_2^2.
\]
Hence
\[
\|\vdelta_{\mathcal{H}}\|_2
\leq
\frac{1}{\phi}
\frac{1}{\sqrt{n}}
\|\widetilde{\mX}\vdelta\|_2.
\]
Substituting this into the previous inequality gives
\[
\frac{1}{2n}
\|\widetilde{\mX}\vdelta\|_2^2
\leq
C_1\lambda
\sqrt{s_{\mathcal{H}}(1+K)}
\frac{1}{\phi}
\frac{1}{\sqrt{n}}
\|\widetilde{\mX}\vdelta\|_2.
\]
If $\widetilde{\mX}\vdelta=\bm{0}$, the prediction bound is immediate.
Otherwise, dividing both sides by
$\|\widetilde{\mX}\vdelta\|_2/\sqrt{n}$ gives
\[
\frac{1}{\sqrt{n}}
\|\widetilde{\mX}\vdelta\|_2
\leq
C_2
\frac{\lambda\sqrt{s_{\mathcal{H}}(1+K)}}{\phi}.
\]
Squaring both sides yields
\[
\frac{1}{n}
\|\widetilde{\mX}\vdelta\|_2^2
\leq
C_2^2
\frac{\lambda^2 s_{\mathcal{H}}(1+K)}{\phi^2}.
\]
Since
\[
\lambda^2
=
A^2\sigma^2
\frac{\log\{p(1+K)\}}{n},
\]
we obtain
\[
\frac{1}{n}
\|\widehat{\vy}-\vy^*\|_2^2
=
\frac{1}{n}
\|\widetilde{\mX}\vdelta\|_2^2
\leq
C
\frac{
\sigma^2 s_{\mathcal{H}}(1+K)\log\{p(1+K)\}
}{
n\phi^2
}.
\]

It remains to derive the estimation error bound. From the GRE condition,
\[
\|\vdelta_{\mathcal{H}}\|_2^2
\leq
\frac{1}{\phi^2}
\frac{1}{n}
\|\widetilde{\mX}\vdelta\|_2^2.
\]
Together with the cone condition,
\[
\|\vdelta_{\mathcal{H}^c}\|_2
\leq
\|\vdelta_{\mathcal{H}^c}\|_{2,1}
\leq
\xi\|\vdelta_{\mathcal{H}}\|_{2,1}
\leq
\xi\sqrt{s_{\mathcal{H}}}\|\vdelta_{\mathcal{H}}\|_2.
\]
Therefore,
\[
\|\vdelta\|_2^2
=
\|\vdelta_{\mathcal{H}}\|_2^2
+
\|\vdelta_{\mathcal{H}^c}\|_2^2
\leq
(1+\xi^2s_{\mathcal{H}})
\|\vdelta_{\mathcal{H}}\|_2^2.
\]
Absorbing the sparsity-dependent factor into the constant gives
\[
\|\vdelta\|_2^2
\leq
C'
\frac{
\sigma^2 s_{\mathcal{H}}(1+K)\log\{p(1+K)\}
}{
n\phi^4
}.
\]
Since
\[
\|\vdelta\|_2^2
=
\|\widehat{\vbeta}-\vbeta^*\|_2^2
+
\|\widehat{\mTheta}-\mTheta^*\|_F^2,
\]
the claimed estimation error bound follows.
\end{proof}

% ==================
% Section: Lambda max for Logistic model
% ===================

\section{Selection of
\texorpdfstring{$\lambda_{\max}$}{lambda max}
for logistic regression}
\label{sec:lambda_max_logistic}

For the logistic SGPL model, the regularisation path is constructed from a
sufficiently large value $\lambda_{\max}$ used as the upper endpoint of the
path. Because the intercept and the main effects of the modifying variables
are not penalised, we first fit the null logistic model containing only these
terms. Specifically, with $\vbeta=\bm{0}$ and $\mTheta=\bm{0}$, let
$$
\widehat{\eta}_{0i}
=
\widehat{\beta}_0
+
\vz_i^\top\widehat{\vtheta}_0,
\qquad
\widehat{\mu}_{0i}
=
\frac{\exp(\widehat{\eta}_{0i})}
     {1+\exp(\widehat{\eta}_{0i})},
$$
where $(\widehat{\beta}_0,\widehat{\vtheta}_0)$ are the maximum-likelihood
estimates from the unpenalised null model. Define the corresponding residual
vector as
$$
\vr_0
=
\vy-\widehat{\vmu}_0.
$$

For the $j$th main predictor, the negative gradient of the logistic negative
log-likelihood, evaluated at the null penalised solution, is
$$
\vg_j
=
\begin{pmatrix}
n^{-1}\vx_j^\top\vr_0\\[1mm]
n^{-1}\mZ^\top(\vx_j\odot\vr_0)
\end{pmatrix},
$$
where the first component corresponds to the main-effect coefficient
$\beta_j$, and the remaining components correspond to the interaction
coefficients $\vtheta_j$. This follows from the logistic gradients
$$
\nabla_{\beta_j}\ell
=
\frac{1}{n}\vx_j^\top(\vmu-\vy),
\qquad
\nabla_{\vtheta_j}\ell
=
\frac{1}{n}\mZ^\top
\left\{\vx_j\odot(\vmu-\vy)\right\}.
$$

\textbf{Remark on the evaluation point.}
The quantities $\widehat{\vmu}_0$, $\vr_0$, and hence
$\{\vg_j\}_{j=1}^p$ are evaluated once at the unpenalised null fit and are
held fixed when determining $\lambda_{\max}$. Thus, $\vg_j$ is not recomputed
for candidate values of $\lambda$; only the threshold $\alpha\lambda$ and
the right-hand side $(1-\alpha)\lambda$ vary during the one-dimensional
search.

For $0\leq\alpha<1$, under the parameterisation
$$
\lambda_1=\lambda(1-\alpha),
\qquad
\lambda_2=\lambda\alpha,
$$
a convenient conservative starting value can be obtained from the
predictor-level KKT condition. Let
$$
S_t(a)
=
\operatorname{sign}(a)(|a|-t)_+
$$
denote the element-wise soft-thresholding operator, with its vector extension
defined componentwise. A sufficient condition for the $j$th penalised
predictor block to remain zero is
$$
\left\|
S_{\alpha\lambda}(\vg_j)
\right\|_2
\leq
(1-\alpha)\lambda.
$$
We therefore choose $\lambda_{\max}$ as the smallest positive value satisfying
this condition simultaneously for all $j=1,\ldots,p$, namely
$$
\lambda_{\max}
=
\inf\left\{
\lambda>0:
\max_{1\leq j\leq p}
\left[
\left\|
S_{\alpha\lambda}(\vg_j)
\right\|_2
-
(1-\alpha)\lambda
\right]
\leq 0
\right\}.
$$

Since $\{\vg_j\}_{j=1}^p$ are fixed, the left-hand side is a deterministic
continuous function of $\lambda$ alone. It is monotone non-increasing in
$\lambda$: each component of $S_{\alpha\lambda}(\vg_j)$ is non-increasing in
absolute value as $\lambda$ increases, so
$\|S_{\alpha\lambda}(\vg_j)\|_2$ is non-increasing, while
$(1-\alpha)\lambda$ is non-decreasing. Consequently, $\lambda_{\max}$ can be
obtained efficiently by a one-dimensional root-finding procedure, such as
bisection, applied to the fixed vectors $\{\vg_j\}_{j=1}^p$.

For $\alpha=0$, no element-wise soft-thresholding is required and the
expression reduces to
$$
\lambda_{\max}
=
\max_{1\leq j\leq p}
\|\vg_j\|_2
=
\max_{1\leq j\leq p}
\sqrt{
\left(\frac{\vx_j^\top\vr_0}{n}\right)^2
+
\left\|
\frac{\mZ^\top(\vx_j\odot\vr_0)}{n}
\right\|_2^2
}.
$$

We use $\lambda_{\max}$ to define the upper end of the regularisation path
and construct a decreasing logarithmically spaced sequence of candidate
values of $\lambda$. Because the SGPL objective contains additional
overlapping group penalties, the expression above is used as a conservative
path-starting rule based on the predictor-level KKT condition, rather than
being interpreted as the exact smallest value of $\lambda$ associated with
the full overlapping SGPL penalty.

% ===================
% Section: Simulation results
% ===================

\section{Simulation results}

Before assessing performance across repeated simulations, we first examine the SGPL fit to a single simulated dataset generated under the design described in Section~\ref{sec:simulation}. The purpose of this analysis is illustrative: it provides a detailed view of how the proposed method behaves for one realization of the data, including the regularization path, selection of the tuning parameter by cross-validation, recovery of the true main and interaction coefficients, and the resulting sparsity and hierarchical structure. This complements the repeated-simulation study, which subsequently evaluates the stability and average performance of SGPL across independently generated datasets.

Figure~\ref{fig:results} presents the six diagnostic plots. The convergence plot (top left) confirms monotone decrease of the objective, and the CV curve (top middle) is well-behaved with $\widehat{\lambda}_{\min}$ and $\widehat{\lambda}_{1\text{se}}$ lying close together, indicating a stable, well-identified solution.

\begin{figure}[H]
    \centering
    \includegraphics[width=\linewidth]{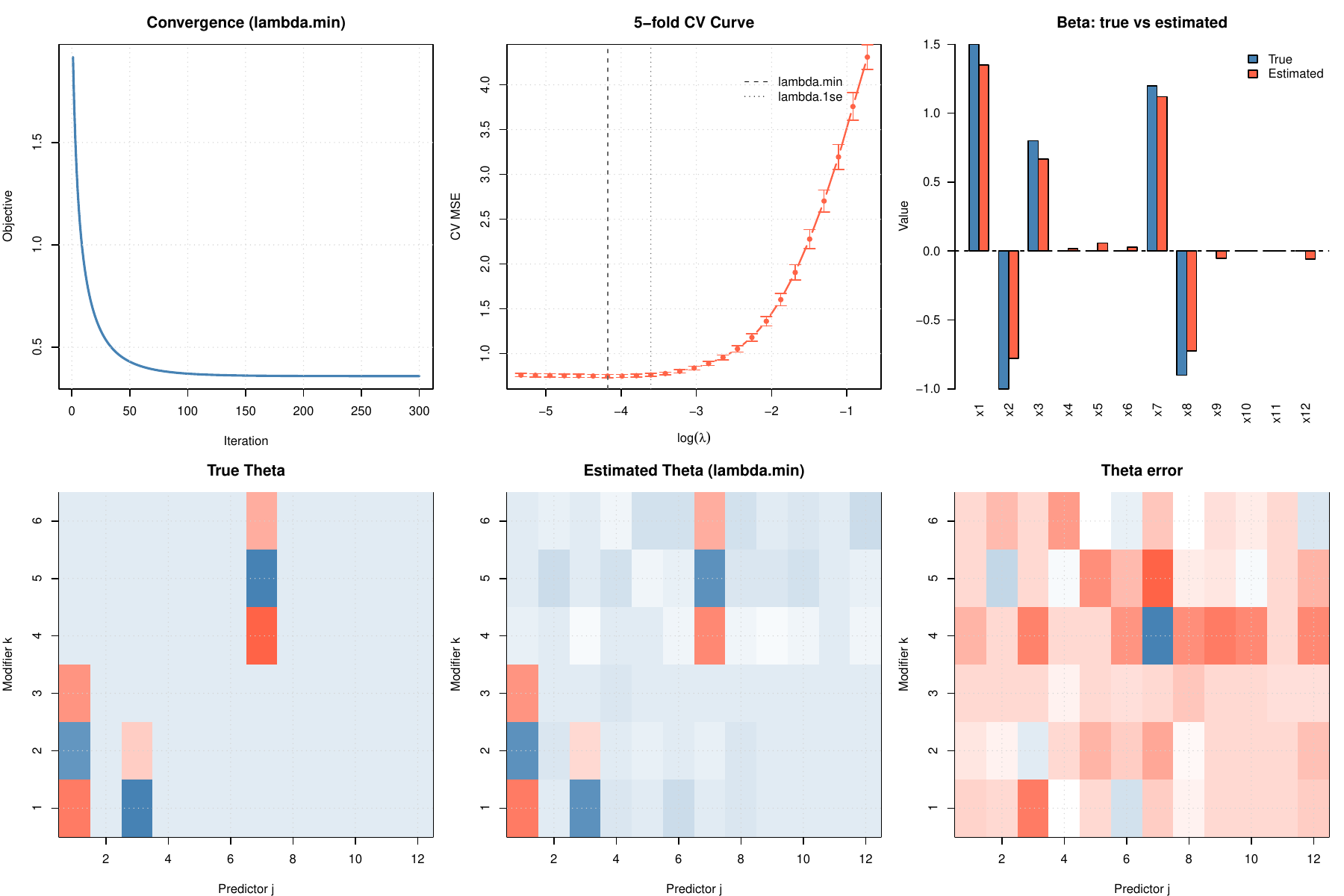}
    \caption{Simulation results for the SGPL at $\alpha = 0.5$ with $n=300$, $p=12$, $K=6$, $L=3$, $G=2$, SNR$=3$. \textit{Top left}: objective vs.\ iteration at $\widehat{\lambda}_{\min}$. \textit{Top middle}: five-fold CV curve; dashed and dotted lines mark $\widehat{\lambda}_{\min}$ and $\widehat{\lambda}_{1\text{se}}$. \textit{Top right}: true (blue) and estimated (red) $\vbeta$ at $\widehat{\lambda}_{\min}$. \textit{Bottom left}: true $\mTheta^*$. \textit{Bottom middle}: estimated $\widehat{\mTheta}$ at $\widehat{\lambda}_{\min}$. \textit{Bottom right}: error $\widehat{\mTheta} - \mTheta^*$ (blue = negative, white = zero, red = positive).}
    \label{fig:results}
\end{figure}

The bar chart (top right) shows perfect recall on the main effects: all five active predictors ($x_1, x_2, x_3, x_7, x_8$) are correctly identified and Group~3 ($x_9$--$x_{12}$) is fully suppressed. The interaction heatmaps (bottom row) confirm perfect recall on all eight nonzero entries of $\mTheta^*$, with the within-block zero $\theta^*_{3,3} = 0$ correctly recovered. Visual inspection of the error heatmap suggests that the largest estimation errors are concentrated in the Z-group 2 rows ($k=4$--$6$) of the $x_7$ and $x_8$ columns. This pattern is consistent with the effects of within-group correlation ($\rho = 0.4$), which can make it more difficult to distinguish correlated predictors under group penalization.

Table~\ref{tab:results} summarizes quantitative performance. At $\widehat{\lambda}_{\min}$, the test MSE of 0.562 is close to the irreducible noise floor $\sigma^2 \approx 0.481$, while the corresponding $R^2=0.879$ indicates strong predictive performance. The lower precision for $\vbeta$ (0.455) and $\mTheta$ (0.157) indicates the presence of false positive selections at $\widehat{\lambda}_{\min}$. 

\begin{table}[H]
\centering
\caption{Predictive performance, coefficient recovery, and support recovery on the held-out test set ($n_{\text{test}} = 500$). The irreducible noise variance is $\sigma^2 \approx 0.481$.}
\label{tab:results}
\footnotesize

%--------------------------------------------------
% Part (a)
%--------------------------------------------------

\begin{tabular}{lcccccccc}
\toprule
&
\multicolumn{2}{c}{Predictive}
&
\multicolumn{6}{c}{Support recovery}
\\
\cmidrule(lr){2-3}
\cmidrule(lr){4-9}
Solution
& Test MSE
& $R^2$
& $\beta$ Rec.
& $\beta$ Prec.
& $\beta$ F1
& $\Theta$ Rec.
& $\Theta$ Prec.
& $\Theta$ F1
\\
\midrule
$\widehat{\lambda}_{\min}$
& 0.562 & 0.879
& 1.000 & 0.455 & 0.625
& 1.000 & 0.157 & 0.271
\\

$\widehat{\lambda}_{1\text{se}}$
& 0.573 & 0.877
& 1.000 & 0.455 & 0.625
& 1.000 & 0.178 & 0.302
\\
\bottomrule
\end{tabular}

\vspace{0.4cm}

%--------------------------------------------------
% Part (b)
%--------------------------------------------------

\begin{tabular}{lccc}
\toprule
&
\multicolumn{2}{c}{Coefficient error}
\\
\cmidrule(lr){2-3}
Solution
& $\beta$ MSE
& $\Theta$ MSE
\\
\midrule
$\widehat{\lambda}_{\min}$
& 0.0113
& 0.00128
\\

$\widehat{\lambda}_{1\text{se}}$
& 0.0161
& 0.00141
\\
\bottomrule
\end{tabular}
\end{table}

\section{Corrections to the Public \texttt{svReg} Implementation}
\label{sec:supp_svreg_corrections}

The results for GPL reported in the main text (Table~\ref{tab:acc_comparison}) were obtained using the \texttt{svreg} R package, available from its developers' GitHub repository \citep{kim2021svreg}, but only after
several corrections were made to its reference implementation. Applying the package as
distributed produced either an outright numerical error or, at the package's default settings,
non-finite coefficient estimates on the ACC-CN500 dataset. This section documents each issue,
the correction applied, and the evidence supporting its validity. All corrected code is
provided in the accompanying software repository.

\subsection{Rank-Deficient Solve in Grouped Main-Predictor Updates}
\label{sec:supp_rank_deficiency}

Within the package's core fitting routine (\texttt{svReg1}), the update for a group of main
predictors $\ell$ with more than one member solves
\begin{equation}
\min_{\vbeta_{[\ell]}} \; \frac{1}{2N} \left\| r^{(-\ell)} - \mX_{[\ell]} \vbeta_{[\ell]} \right\|_2^2
+ \lambda_\ell \left\| \vbeta_{[\ell]} \right\|_2,
\qquad \lambda_\ell = \lambda(1-\alpha)\sqrt{p_\ell},
\label{eq:supp_grouplasso_subproblem}
\end{equation}
a standard weighted group-lasso subproblem, where $r^{(-\ell)}$ is the partial residual with
group $\ell$'s fit removed. The reference implementation initializes this subproblem with the
closed-form ordinary least squares estimate
\begin{equation}
\widehat\vbeta_{[\ell]}^{(0)} = \left( \mX_{[\ell]}^\top \mX_{[\ell]} \right)^{-1} \mX_{[\ell]}^\top r^{(-\ell)},
\label{eq:supp_ols_init}
\end{equation}
before refining it via coordinate-wise golden-section search. Equation~\eqref{eq:supp_ols_init}
requires $\mX_{[\ell]}^\top \mX_{[\ell]}$ to be invertible, i.e., $\mathrm{rank}(\mX_{[\ell]}) = p_\ell$.
This condition fails whenever a group's size exceeds the training sample size
($p_\ell > N_{\text{train}}$) or the group contains exactly collinear columns. In the
ACC-CN500 application, the largest chromosome-based group ($p_\ell = 350$) exceeds every
cross-validation fold's training sample size ($N_{\text{train}} = 60$), so
$\mX_{[\ell]}^\top \mX_{[\ell]}$ is guaranteed rank-deficient and \eqref{eq:supp_ols_init} fails with
a matrix-inversion error.

\paragraph{Correction.} We replaced the initialization-and-refinement step with an accelerated
proximal gradient (FISTA) solver \citep{beck2009fast} targeting the identical convex objective
\eqref{eq:supp_grouplasso_subproblem}. Writing $f(\vbeta) = \frac{1}{2N}\|r^{(-\ell)} -
\mX_{[\ell]}\vbeta\|_2^2$, each iteration computes
\begin{equation}
u = z_k - \frac{1}{L}\nabla f(z_k), \qquad
\vbeta_{k+1} = \left(1 - \frac{\lambda_\ell / L}{\|u\|_2}\right)_{+} u,
\end{equation}
with Nesterov momentum and adaptive restart \citep{o2015adaptive}, and $L =
\sigma_{\max}(\mX_{[\ell]})^2 / N$ computed via the exact largest singular value (inexpensive here
since $N \le 75$). This update requires no matrix inversion and is well-defined regardless of
whether $\mX_{[\ell]}^\top \mX_{[\ell]}$ is singular.

\paragraph{Validation.} Because the reference implementation cannot run at all when $p_\ell >
N$, direct comparison is impossible in that regime; we instead verified first-order optimality
directly via the subgradient stationarity residual of \eqref{eq:supp_grouplasso_subproblem} at
the FISTA solution (residual $< 10^{-7}$ in all tested cases with $p_\ell > N$). For
well-posed cases ($p_\ell < N$, varying correlation structure and group size), we confirmed the
FISTA solution agrees with the reference (unpatched) implementation's solution across the full
parameter path: maximum absolute differences in $\widehat\vbeta$, $\widehat\mTheta$, $\widehat\vbeta_0$,
$\widehat\vtheta_0$, the objective value, and fitted values were all below $10^{-3}$ across five
independent random settings (Table~\ref{tab:supp_validation}).

\begin{table}[ht]
\centering
\caption{Agreement between the FISTA-corrected solver and the reference implementation across
five well-posed validation settings. Small nonzero differences reflect differing convergence
tolerances between the two numerical procedures rather than disagreement in the underlying
solution.}
\label{tab:supp_validation}
\begin{tabular}{cccccc}
\toprule
Setting & $N$ & $p_\ell$ & Correlation & Max.\ $|\Delta\hat\beta|$ & Max.\ $|\Delta\hat\Theta|$ \\
\midrule
1 & 60 & 15 & Low  & $1.45\times10^{-4}$ & $2.75\times10^{-11}$ \\
2 & 60 & 25 & High & $5.70\times10^{-8}$ & $1.99\times10^{-10}$ \\
3 & 80 & 10 & Low  & $4.18\times10^{-4}$ & $2.32\times10^{-10}$ \\
4 & 80 & 30 & High & $1.29\times10^{-4}$ & $4.39\times10^{-11}$ \\
5 & 50 & 20 & High & $0$ & $0$ \\
\bottomrule
\end{tabular}
\end{table}

\subsection{Cross-Validation Standardization Correction}
\label{sec:supp_cv_standardization}

The package's cross-validation routine (\texttt{cv.svReg}) standardizes each held-out test fold
using that fold's own mean and standard deviation prior to scoring, rather than the moments of
the corresponding training fold on which the model was fit. Because coefficient estimates are
obtained on the training-standardized scale, applying them to independently-rescaled test data
introduces a scale mismatch inconsistent with standard out-of-sample evaluation practice.

\noindent We modified the cross-validation routine to compute standardization
moments from the training fold only, and apply those same moments -- rather than the test
fold's own moments -- when transforming held-out data prior to scoring. This mirrors standard
practice for held-out evaluation and is consistent with the approach used for PL and SGPL
elsewhere in this manuscript.

\subsection{Numerical Instability in the Joint
\texorpdfstring{$(\beta,\Theta)$}{(beta, Theta)}
Update}
\label{sec:supp_joint_update}

Groups for which both the main-effect and modifying-effect coefficients are active are updated
via a majorization--minimization step (Algorithm 1, Step 2c of \citealp{kim2021svreg}) using a
\emph{fixed} step size $t = \texttt{tt}$, with no backtracking line search and, in the
reference implementation, no bound on the number of times this update may be repeated within a
single call. On the ACC-CN500 dataset, at the package's documented default (\texttt{tt = 0.1}),
this update diverged: coefficient magnitudes for the largest chromosome-based group grew
across successive outer iterations to order $10^{149}$--$10^{153}$ before triggering a
non-finite-value error. Reducing the step size to \texttt{tt = 0.01} delayed but did not
prevent divergence, which instead surfaced in a small, unrelated group (size 2) after 44 outer
iterations, indicating the instability originates in the fixed-step update mechanism itself
rather than being specific to any one group's size or conditioning.

\paragraph{Correction.} We made two changes. First, we added a hard iteration cap (100
passes) to the previously unbounded inner loop governing this update, with a warning emitted
if the cap is reached without the screening decision stabilizing. Second, we determined
empirically that \texttt{tt = 0.001} -- two orders of magnitude below the package default --
produces finite, stable coefficient trajectories across the full default outer-iteration budget
($\texttt{iter} = 500$) on this dataset. We adopted this value for all GPL results reported in
the main text. We view this as a practical correction rather than a complete solution: a fixed
step size, however small, is not guaranteed to satisfy the majorization condition for every
possible group structure, and a proper backtracking line search -- selecting the step size
adaptively at each iteration so that the majorized objective is guaranteed to decrease -- would
be a more principled long-term solution. We report the fixed-step correction here because it
is sufficient to reproduce stable results on the present dataset; the corrected code includes
diagnostic instrumentation (reporting the specific intermediate quantities in the
majorization--minimization update, and the outer iteration and group at which any future
instability occurs) to facilitate further investigation.

\subsection{Computational Cost}
\label{sec:supp_computational_cost}

With these corrections applied, fitting GPL via five-fold cross-validation over a 25-value
$\lambda$ grid on the full ACC-CN500 dataset required approximately 11.3 hours on a standard
desktop workstation. We do not interpret this runtime as evidence about the relative
computational efficiency of the GPL method itself: the \texttt{svreg} package is implemented
in base R, whereas our SGPL implementation makes substantial use of Rcpp, so a direct runtime
comparison would conflate implementation language with algorithmic cost. We report this figure
only for transparency and to assist other researchers working with this implementation on
similarly structured data (a single main-predictor group of $p_\ell = 350$ with $N = 75$).

\subsection{Reproducibility}
\label{sec:supp_reproducibility}

All corrected source files (\texttt{basic\_functions\_patched.R}, \texttt{svreg\_patched.R},
\texttt{cv\_svReg\_corrected.R}) and the validation scripts used to generate
Table~\ref{tab:supp_validation} are included in the accompanying repository at
\url{https://github.com/edelweiss611428/SGPL-code} The original, unmodified \texttt{svreg} package is retained alongside the
corrected files for direct comparison.

\section{Efficient SGPL Computation}

Naive computation of the SGPL objective function during coordinate descent is computationally inefficient. In each blockwise step, parameters for only one predictor group are updated while the non-target parameters remain fixed. Recomputing the full fitted vector $\widehat{\vy}$ or evaluating the global penalty from scratch at every inner iteration introduces significant redundant operations.

To eliminate this redundancy, we exploit the additive structure of both the linear model and the regularisation penalty. While the derivation is developed for the SGPL linear regression model, the same logic can be extended to other methods, such as SGPL logistic regression, which, although nonlinear, still involve the computation of a linear predictor. The proposed implementation is available in the R package \textsf{SGPL}.

\subsection{Efficient Blockwise Residual Computation}

Recall the vector of predicted values $\widehat{\vy} \in \mathbb{R}^n$ defined by the pliable interaction model:
\begin{align}\label{linear_model_PL_suppl}
    \widehat{\vy} = \widehat\beta_0 \vone_n + \mZ \widehat\vtheta_0 + \mX\widehat\vbeta + \sum_{j=1}^p (\vx_j \odot \mZ)\widehat\vtheta_{\bullet j}.
\end{align}

\noindent Partitioning the $p$ predictors into $L$ non-overlapping index groups $\{\mathcal{G}_l^x\}_{l=1}^L$, we can decompose $\widehat{\vy}$ into the contribution from the $l$-th active predictor group and the cumulative contribution from all non-$l$ groups ($\mathcal{G}_{-l}^x$):
\begin{align}\label{eq:yhat_split}
    \widehat{\vy} 
    &= \underbrace{\widehat\beta_0 \vone_n + \mZ\widehat\vtheta_0 + \sum_{k \neq l} \left( \mX_k \odot (\vone_n \widehat \vbeta_{k}^\top + \mZ\widehat\vtheta_{k\bullet }) \right) \vone_{p_k}}_{\widehat{\vy}^{(-l)} \text{ (non-group } l\text{ contribution)}} 
    + \underbrace{\left( \mX_l \odot (\vone_n \widehat\vbeta_{l}^\top + \mZ\widehat\vtheta_{l\bullet }) \right) \vone_{p_l}}_{\widehat{\vy}^{(l)} \text{ (group } l\text{ contribution)}},
\end{align}
where $\mathbf{\widehat\vtheta}_{l\bullet} \in \mathbb{R}^{K \times p_l}$ denotes the subset of interaction parameters corresponding to predictor group $l$. By maintaining a cache matrix $\mathbf{\Gamma} = \mZ\widehat{\bm\Theta} \in \mathbb{R}^{n \times p}$, the local contribution of block $l$ reduces to:
\begin{equation}
\widehat{\vy}^{(l)} = \left[ \mX_l \odot \left( \vone_n \widehat\vbeta_{l}^\top + \mathbf{\Gamma}_{l\bullet} \right) \right] \vone_{p_l}.
\end{equation}
During the update step for group $l$, $\widehat{\vy}^{(-l)}$ is strictly invariant. Consequently, we avoid recomputing the full prediction vector by maintaining the global residual $\vr = \vy - \widehat{\vy}$. When updating block $l$:

\begin{itemize}
\item The partial residual $\vr^{(-l)}$ is computed on the fly by adding back the current contribution of group $l$ to the residual:
\begin{align}\label{eq:partial_residual}
\vr^{(-l)} = \vy - \widehat{\vy}^{(-l)} = \vr + \widehat{\vy}^{(l)}.
\end{align}
\item Given the partial residual $\vr^{(-l)}$, the candidate residual $\vr_l^+$ is evaluated using the updated parameters for group $l$:
\begin{align}\label{eq:residual_update}
\vr_l^+ = \vr^{(-l)} - \widehat{\vy}^{(l),+}.
\end{align}
\end{itemize}

These operations are translated into the following two procedures.

\begin{algorithm}[H]
\caption{Efficient Blockwise Residual Computation for SGPL}
\small
\label{alg:sgpl_residual}

\begin{algorithmic}[1]

\Require
Current residual $\vr$;
predictor block matrix $\mX_l$;
modifier matrix $\mZ$;
current block coefficients $\widehat{\bm{\beta}}_{l}$;
cached modifier contribution $\mathbf{\Gamma}_{l\bullet}$;
candidate block coefficients $\widehat{\bm{\beta}}_{l}^{+}$;
candidate modifier coefficients $\widehat{\vtheta}_{l\bullet}^{+}$.

\Ensure
Partial residual $\vr^{(-l)}$ and updated block residual $\vr_l^{+}$.

\Procedure{ComputePartialResidual}{$\vr, \mX_l, \widehat{\bm{\beta}}_{l}, \mathbf{\Gamma}_{l\bullet}$}
    \State $\widehat{\vy}^{(l),\mathrm{old}} \leftarrow 
    \left[
    \mX_l \odot
    \left(
    \vone_n\widehat{\bm{\beta}}_{l}^{\top}
    +
    \mathbf{\Gamma}_{l\bullet}
    \right)
    \right]
    \vone_{p_l}$
    \State \Return $\vr^{(-l)} \leftarrow \vr+\widehat{\vy}^{(l),\mathrm{old}}$
\EndProcedure

\Procedure{ComputeBlockResidual}{$\vr^{(-l)}, \mX_l, \mZ,
\widehat{\bm{\beta}}_{l}^{+},
\widehat{\vtheta}_{l\bullet}^{+}$}
    \State $\widehat{\vy}^{(l),+} \leftarrow
    \left[
    \mX_l \odot
    \left(
    \vone_n(\widehat{\bm{\beta}}_{l}^{+})^{\top}
    +
    \mZ\widehat{\vtheta}_{l\bullet}^{+}
    \right)
    \right]
    \vone_{p_l}$
    \State \Return $\vr_l^+ \leftarrow \vr^{(-l)}-\widehat{\vy}^{(l),+}$
\EndProcedure
\end{algorithmic}
\end{algorithm}

\subsection{Efficient Blockwise Penalty Computation}

The exact same blockwise separation applies to the objective function's penalty term. The full objective function $J(\widehat\vbeta, \widehat{\mathbf{\Theta}})$ is:
\begin{equation}
\begin{aligned}
J(\widehat\vbeta, \widehat{\mathbf{\Theta}})
&=
\frac{1}{2n} \|\vy - \widehat{\vy}\|_2^2 + \lambda(1-\alpha)
\left(
    \sum_{j=1}^{p} \|(\widehat\beta_{j}, \widehat\vtheta_{j })\|_2
    + \sum_{l=1}^{L} \sqrt{p_l} \,
      \|(\widehat\vbeta_l, \widehat\vtheta_{l\bullet })\|_2
\right. \\
&\qquad \left.
    + \sum_{l=1}^{L} \sum_{g=1}^{G}
      \frac{\sqrt{p_g}}{\sqrt{1+K}}
      \|\widehat\vtheta_{lg}\|_2
\right)  + \lambda\alpha
\left(
    \|\widehat\vbeta\|_1
    + \sum_{j=1}^{p} \|\widehat\vtheta_{j}\|_1
\right).
\end{aligned}
\label{eq:SGPL_suppl}
\end{equation}

Because every regularization term is separable across predictor groups $l = 1, \dots, L$, the total penalty $P_{\text{total}}(\widehat\vbeta, \widehat{\mathbf{\Theta}})$ can be factored as:
\[
P_{\text{total}}(\widehat\vbeta, \widehat{\mathbf{\Theta}}) = P_{-l} + P_l\left( \widehat\vbeta_{l}, \widehat{\vtheta}_{l\bullet} \right),
\quad \text{where } P_{-l} = \sum_{k \neq l} P_k\left( \widehat\vbeta_{k}, \widehat{\vtheta}_{k\bullet} \right).
\]
Since $P_{-l}$ remains constant throughout all inner updates to block $l$, line-search checks and cost evaluations do not need to re-evaluate penalty norms across the un-updated $L-1$ groups. Instead, the updated penalty is evaluated via an $\mathcal{O}(1)$ block-delta update:
\[
P_{\text{total}}^+ = P_{\text{total}} - P_l^{\text{old}} + P_l^+.
\]

These operations are translated into the following two procedures.

\begin{algorithm}
\small
\caption{Efficient Blockwise SGPL Penalty Computation}
\small
\label{alg:sgpl_penalty}
\begin{algorithmic}[1]

\Require
Current block parameters $\widehat{\bm{\beta}}_{l}$ and $\widehat\vtheta_{l\bullet}$;
previous block penalty $P_l^{\mathrm{old}}$;
total penalty $P_{\mathrm{total}}$;
penalty parameters $\lambda_1,\lambda_2$;  modifier groups $\{\mathcal{G}_g^z\}_{g=1}^G$; total modifier count $K$

\Ensure
Updated block penalty $P_l^+$ and total penalty $P_{\mathrm{total}}^+$.

\Procedure{ComputeBlockPenalty}{$\widehat{\bm{\beta}}_{l},\widehat\vtheta_{l\bullet},\lambda_1,\lambda_2,\{\mathcal{G}_g^z\}_{g=1}^G,K$}
    \State $p_{\mathrm{pred}} \leftarrow \lambda_1 \sum_{j\in\mathcal{G}_l^x}
    \sqrt{\widehat\beta_j^2+\|\widehat\vtheta_{j\bullet}\|_2^2}$
    \State $p_{\mathrm{joint}} \leftarrow \lambda_1\sqrt{p_l}
    \sqrt{\|\widehat{\bm{\beta}}_{l}\|_2^2+
    \|\widehat\vtheta_{l\bullet}\|_F^2}$
    \State $p_{\mathrm{mod}} \leftarrow \lambda_1
    \sum_{g=1}^G
    \frac{\sqrt{p_g}}{\sqrt{1+K}} \|\widehat\vtheta_{lg}\|_F$
    \State $p_{L_1} \leftarrow \lambda_2
    \left(
    \|\widehat{\bm{\beta}}_{l}\|_1+
    \|\widehat\vtheta_{l\bullet}\|_1
    \right)$
    \State \Return $P_l \leftarrow
    p_{\mathrm{pred}}+p_{\mathrm{joint}}+p_{\mathrm{mod}}+p_{L_1}$
\EndProcedure

\Procedure{UpdateTotalPenalty}{$P_{\mathrm{total}},P_l^{\mathrm{old}},P_l^+$}
    \State \Return $P_{\mathrm{total}}^+
    \leftarrow P_{\mathrm{total}}-P_l^{\mathrm{old}}+P_l^+$
\EndProcedure

\end{algorithmic}
\end{algorithm}

\subsection{Efficient Blockwise Coordinate-Descent Algorithm for SGPL}

The resulting blockwise coordinate-descent procedure is summarised in Algorithm~\ref{alg:sgpl_cached}.

\begin{algorithm}
\small
\setstretch{0.85}
\caption{Blockwise Coordinate-Descent Algorithm with Efficient Caching for SGPL}
\label{alg:sgpl_cached}
\begin{algorithmic}

\Require
Data $(\mathbf{X},\mathbf{Z},\mathbf{y})$;
penalty parameters $\lambda,\alpha$;
groups $\{\mathcal{G}_l^x\}_{l=1}^L$, $\{\mathcal{G}_g^z\}_{g=1}^G$;
step size $t$, contraction factor $\rho$;
tolerances $\epsilon_{\rm out},\epsilon_{\rm in}$;
initial parameters $(\beta_0^{(0)},\bm{\theta}_0^{(0)},\bm{\beta}^{(0)},\mathbf{\Theta}^{(0)})$.

\Ensure
$\widehat{\beta}_0,\widehat{\bm{\theta}}_0,\widehat{\bm{\beta}},\widehat{\mathbf{\Theta}}$.

\State Set $\lambda_1=\lambda(1-\alpha)$ and $\lambda_2=\lambda\alpha$.
\State Initialise $\mathbf{\Gamma}\leftarrow\mathbf{Z}\mathbf{\Theta}^{(0)}$.
\State $\mathbf{r}\leftarrow
\textproc{ComputeResidual}
(\mathbf{y},\beta_0^{(0)},\bm{\theta}_0^{(0)},\mathbf{X},
\bm{\beta}^{(0)},\mathbf{\Gamma})$.
\State $P_{\mathrm{total}}\leftarrow
\textproc{ComputeTotalPenalty}
(\bm{\beta}^{(0)},\mathbf{\Theta}^{(0)},\lambda_1,\lambda_2,
\{\mathcal{G}_l^x\},\{\mathcal{G}_g^z\},K)$.

\Repeat
\State Update $(\beta_0^{(k)},\bm{\theta}_0^{(k)})$ by fitting the OLS model $\mathbf{r}\sim(\mathbf{1}_n,\mathbf{Z})$.
\State Update global residual \textbf{r} for intercept change.

\For{$l=1,\ldots,L$}

\State $\mathbf{r}^{(-l)}
\leftarrow
\textproc{ComputePartialResidual}
(\mathbf{r},\mathbf{X}_l,
\bm{\beta}_{l}^{(k)},
\mathbf{\Gamma}_{l\bullet})$.

\State $P_l^{\mathrm{old}}\leftarrow
\textproc{ComputeBlockPenalty}
(\bm{\beta}_{l}^{(k)},
\mathbf{\Theta}_{l\bullet}^{(k)},
\lambda_1,\lambda_2,\{\mathcal{G}_g^z\},K)$.

\State Initialize
$\widetilde{\bm{\beta}}_{l}\leftarrow\bm{\beta}_{l}^{(k)}$,
$\widetilde{\bm{\theta}}_{l\bullet }\leftarrow
\vtheta_{l\bullet}^{(k)}$,
$t_l\leftarrow t$.
\State Check KKT conditions and skip inner loop if satisfied.
\Repeat

\State Apply gradient step and proximal updates:
\[
(\bm{\beta}_{l}^+,\bm{\theta}_{l\bullet}^+)
\leftarrow
\textproc{NestedProximalUpdate}
(\mathbf{r},\widetilde{\bm{\beta}}_{l},
\widetilde{\bm{\theta}}_{l\bullet},t_l).
\]

\State Compute candidate's residual:
\[
\mathbf{r}_l^+\leftarrow
\textproc{ComputeBlockResidual}
(\mathbf{r}^{(-l)},\mathbf{X}_l, \mathbf{Z},
\bm{\beta}_{l}^+,\bm{\theta}_{l\bullet}^+).
\]

\State Compute candidate's penalty:
\[
P_l^+\leftarrow
\textproc{ComputeBlockPenalty}
(\bm{\beta}_{l}^+,\bm{\theta}_{l\bullet}^+,
\lambda_1,\lambda_2,\{\mathcal{G}_g^z\},K).
\]

\State Compute candidate's total penalty:
\[
P_{\mathrm{total}}^+
\leftarrow
\textproc{UpdateTotalPenalty}
(P_{\mathrm{total}},P_l^{\mathrm{old}},P_l^+).
\]

\If{
$\textproc{ComputeCost}(\mathbf{r}_l^+,P_{\mathrm{total}}^+)
>
\textproc{ComputeCost}(\mathbf{r},P_{\mathrm{total}})$}
\State $t_l\leftarrow\rho t_l$
\Else
\State Accept update:
$\widetilde{\bm{\beta}}_{l}\leftarrow\bm{\beta}_{l}^+$,
$\widetilde{\bm{\theta}}_{l\bullet}\leftarrow\bm{\theta}_{l\bullet}^+$,
$P_{\mathrm{total}}\leftarrow P_{\mathrm{total}}^+$.
\State Update residual and total penalty: $(\mathbf{r}, P_{\text{total}}) \leftarrow (\mathbf{r}^+, P_{\text{total}}^+)$.
\EndIf

\Until{inner convergence}
\State Update block parameters:  $(\bm{\beta}_{l}, \bm{\theta}_{l\bullet}) \leftarrow (\widetilde{\bm{\beta}}_{l}, \widetilde{\bm{\theta}}_{l\bullet})$.
\State Update cache matrix:
\[
\mathbf{\Gamma}_{l\bullet}
\leftarrow
\mathbf{\Gamma}_{l\bullet}
+
\mathbf{Z}
\left(
\widetilde{\bm{\theta}}_{l\bullet}
-
\vtheta_{l\bullet}^{(k)}
\right).
\]
\State Update $(\beta_0^{(k)},\bm{\theta}_0^{(k)})$ by fitting the OLS model $\mathbf{r}\sim(\mathbf{1}_n,\mathbf{Z})$.

\EndFor

\Until{outer convergence}

\State \Return
$(\widehat{\beta}_0,\widehat{\bm{\theta}}_0,
\widehat{\bm{\beta}},\widehat{\mathbf{\Theta}})$.

\end{algorithmic}
\end{algorithm}

\subsection{Time Complexity Analysis}

Let $p_l = |\mathcal{G}_l^x|$ denote the size of predictor group $l$, 
$p=\sum_{l=1}^L p_l$the total number of predictors, $n$ the sample size, 
and $K$ the number of modifier variables. Assume that the $L$ predictor 
groups contain an equal number of predictors, so that
\[
p_l = \frac{p}{L}, \qquad l=1,\ldots,L.
\]

\begin{itemize}

\item \textbf{Naive Computation:} Consider a block update for predictor group $l$. Without caching, the fitted-value contribution associated with the modifier effects must be recomputed using all
$p$ predictors. For each of the $p$ predictors, the computation involves the $n$ observations and $K$ modifier variables, resulting in a cost of $
\mathcal{O}(npK)$ per block update. Since there are $L$ predictor groups, a complete outer iteration requires
$L$ such block updates. Hence,
\[
\mathcal{T}_{\mathrm{naive}}
=
\mathcal{O}(LnpK).
\]

Using \(L=p/p_l\), this can equivalently be written as
\[
\mathcal{T}_{\mathrm{naive}}
=
\mathcal{O}\left(
\frac{p}{p_l}npK
\right)
=
\mathcal{O}\left(
\frac{np^2K}{p_l}
\right).
\]

\item \textbf{Efficient Blockwise Updating:} With the proposed caching strategy, contributions from predictor groups that are not being updated are stored and do not need to be recomputed. During the
update of group \(l\), only the \(p_l\) predictors belonging to the active group
need to be processed. Consequently, the modifier-related computation for a single block update
requires $\mathcal{O}(np_lK)$
operations. Summing over all \(L\) predictor groups, the computational cost of one complete
outer iteration is
\[
\mathcal{T}_{\mathrm{efficient}}
=
\mathcal{O}\left(
\sum_{l=1}^L np_lK
\right) = \mathcal{O}(npK).
\]

\end{itemize}

The resulting computational speedup is therefore
\[
\frac{\mathcal{T}_{\mathrm{naive}}}
     {\mathcal{T}_{\mathrm{efficient}}}
=
\mathcal{O}\Big(\frac{p}{p_l}\Big).
\]

This result also covers the extreme case in which every predictor forms its
own group. In this setting, $p_k = 1$ and $L = p$ so that
\[
\mathcal{T}_{\mathrm{naive}}
=
\mathcal{O}(np^2K),
\qquad
\mathcal{T}_{\mathrm{efficient}}
=
\mathcal{O}(npK),
\]
and the resulting speedup is
\[
\frac{\mathcal{T}_{\mathrm{naive}}}
     {\mathcal{T}_{\mathrm{efficient}}}
=
\mathcal{O}(p).
\]

Hence, the computational advantage increases as the predictor groups become
smaller, reaching an \(\mathcal{O}(p)\) speedup when each predictor constitutes
an individual group. When sparsity is present, the computational gain can be further increased, as inactive groups can be excluded from the updates, leading to an additional reduction proportional to the inverse sparsity level.

\subsection{Runtime Evaluation}

To evaluate the computational efficiency gained through partial-residual tracking and global metric caching, we compared the optimised implementation, \texttt{fast\_sgpl\_fit\_cpp6}, against the baseline implementation without caching, \texttt{sgpl\_fit\_cpp}, both available in the \texttt{SGPL} R package.

\paragraph{Data Generating Process (DGP)} For all simulation scenarios, data were generated under a linear regression model:
\[
\vy = \mX \vbeta + \sum_{k=1}^K (\mX \odot \mZ_{\cdot, k}) \mTheta_{k, \cdot} + \varepsilon, \quad \varepsilon \sim \mathcal{N}(0, \sigma^2 I_n)
\]
where $\mX \in \mathbb{R}^{n \times p}$ and $\mZ \in \mathbb{R}^{n \times K}$ are feature matrices whose rows are sampled independently from a multivariate normal distribution $\mathcal{N}(0, \mSigma)$ with an autoregressive $\text{AR}(1)$ correlation structure ($\rho = 0.5$). Predictor features $\mX$ and modifier features $\mZ$ were partitioned into $L$ and $G$ equal contiguous groups, respectively. True coefficient vector $\vbeta$ and interaction matrix $\mTheta$ were generated with block-wise sparsity to reflect realistic sparse group structures.

\paragraph{Simulation Scenarios}
We benchmarked execution times across three distinct structural regimes:
\begin{itemize}
    \item \textbf{Scenario 1 (Sample Size Scaling):} Fixed $p = K = 40$, $L = G = 20$, while varying the sample size $n \in \{100, 300, 1000, 3000, 10000\}$.
    \item \textbf{Scenario 2 (Predictor Dimension Scaling):} Fixed $n = 100$ and $K = 20$, $G = 10$, while varying predictor dimensions $p \in \{10, 20, 40,80,160\}$ and $L = p/2$.
    \item \textbf{Scenario 3 (Modifier Dimension Scaling):} Fixed $n = 100$ and $p = 20$, $L = 10$, while varying modifier dimensions $K \in \{10, 20, 40,80\}$ and $G = K/2$.
\end{itemize}

\paragraph{Benchmarking Metrics}
Execution speed was measured in R using elapsed wall-clock time via the \texttt{system.time()} function. For each parameter configuration, both algorithms were fitted over 50 independent replications using identical random seeds, initial values, hyperparameters, and convergence tolerances. Reported runtimes are summarised by the median across replications.

\paragraph{Results}
Figure~\ref{fig:runtime_eval} compares the runtime of the proposed cached implementation against the baseline implementation without caching. The cached algorithm consistently outperforms the baseline across all experimental settings. The improvement is most pronounced as the number of predictors ($p$) increases, where the runtime grows much more slowly because the cached interaction matrix avoids repeatedly recomputing contributions that remain unchanged between coordinate updates. Similar speedups are observed when increasing the number of modifying variables ($K$) and the sample size ($n$), although the relative gain is less dramatic since both implementations must still perform operations whose cost scales with these quantities. Overall, the results demonstrate that the proposed caching strategy substantially improves computational efficiency while preserving the same optimisation procedure, with the greatest benefit achieved in large-$p$ settings.

\begin{figure}[t]
    \centering
    \includegraphics[width=\textwidth]{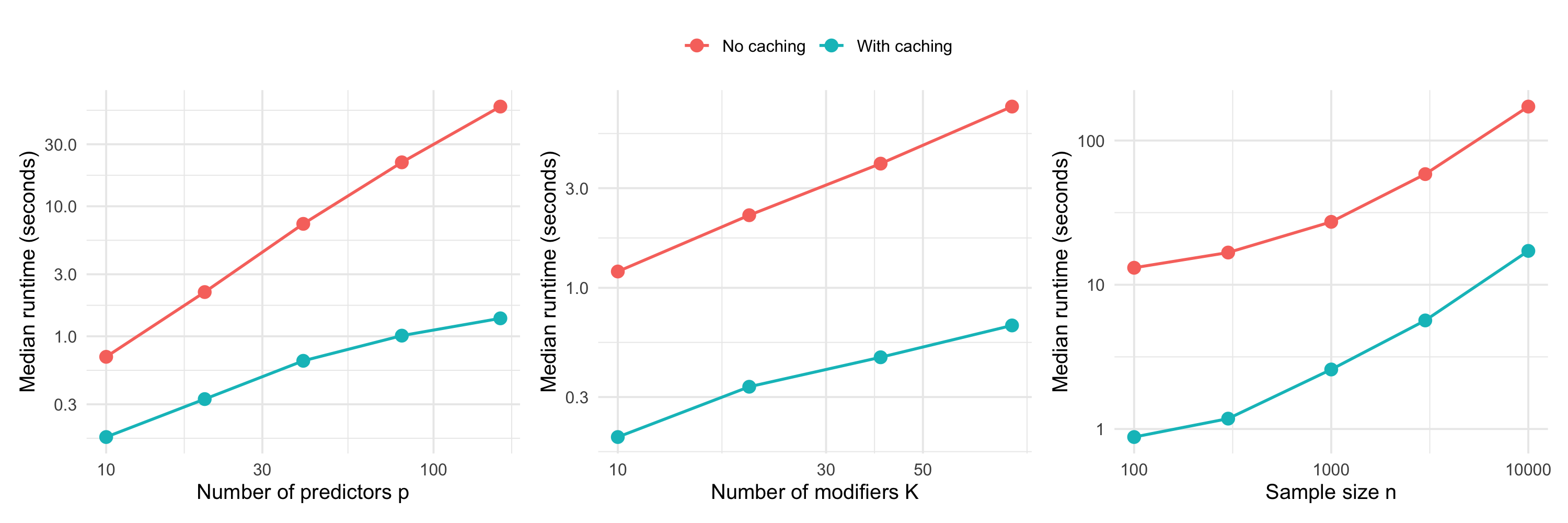}
    \caption{Median runtime comparison between the blockwise coordinate-descent algorithm with and without matrix caching. The three panels investigate scalability with respect to the number of predictors ($p$), the number of modifying variables ($K$), and the sample size ($n$), respectively. Both axes are displayed on logarithmic scales.}
    \label{fig:runtime_eval}
\end{figure}

\end{document}